\documentclass[acmsmall, nonacm, preprint]{acmart}


\usepackage{listings}
\definecolor{listinggray}{gray}{0.9}
\definecolor{lbcolor}{rgb}{0.9,0.9,0.9}
\definecolor{burgundy}{rgb}{0.5, 0.0, 0.13}

\lstdefinestyle{myCPP}{
	tabsize=4,    
	language=[GNU]C++,
	basicstyle=\tiny,
	upquote=true,
	aboveskip={1.5\baselineskip},
	columns=fixed,
	showstringspaces=false,
	extendedchars=false,
	prebreak = \raisebox{0ex}[0ex][0ex]{\ensuremath{\hookleftarrow}},
	frame=single,
	numbers=left,
    numbersep=.8em,
	showtabs=false,
	showspaces=false,
	showstringspaces=false,
	identifierstyle=\ttfamily,
	keywordstyle={\ttfamily\color[rgb]{0,0,1}},
	commentstyle={\itshape\color{green!40!black}},
	stringstyle=\color[rgb]{0.627,0.126,0.941},
	numberstyle=\color[rgb]{0.205, 0.142, 0.73},
	morekeywords={NULL, datatype, match},
    escapeinside={(*}{*)}
}

\lstdefinestyle{myJava}{
	backgroundcolor=\color{lbcolor},
	tabsize=4,    
	language=[GNU]C++,
	basicstyle=\scriptsize,
	upquote=true,
	aboveskip={1.5\baselineskip},
	columns=fixed,
	showstringspaces=false,
	extendedchars=false,
	prebreak = \raisebox{0ex}[0ex][0ex]{\ensuremath{\hookleftarrow}},
	frame=single,
	numbers=left,
	showtabs=false,
	showspaces=false,
	showstringspaces=false,
	identifierstyle=\ttfamily,
	keywordstyle=\color[rgb]{0,0,1},
	commentstyle=\color[rgb]{0.026,0.112,0.095},
	stringstyle=\color[rgb]{0.627,0.126,0.941},
	numberstyle=\color[rgb]{0.205, 0.142, 0.73},
}

\lstdefinelanguage{heapProlog}{
	backgroundcolor=\color{lbcolor},
	tabsize=4,    
	basicstyle=\scriptsize,
	upquote=true,
	aboveskip={1.5\baselineskip},
	columns=fixed,
	showstringspaces=false,
	extendedchars=false,
	escapeinside     = {(*}{*)},
	prebreak = \raisebox{0ex}[0ex][0ex]{\ensuremath{\hookleftarrow}},
	frame=single,
	numbers=left,
	showtabs=false,
	showspaces=false,
	showstringspaces=false,
	identifierstyle=\ttfamily,
	keywordstyle=\color[rgb]{0,0,1},
	commentstyle=\color[rgb]{0.026,0.112,0.095},
	stringstyle=\color[rgb]{0.627,0.126,0.941},
	numberstyle=\color[rgb]{0.205, 0.142, 0.73},
	alsoletter = {:,-},
	morekeywords={:-, emptyHeap, allocate, write, read, valid, true, false},
}

\lstdefinelanguage{SMT-LIB} {
	alsoletter       = {-, [=], >, <},
	morecomment      = [l]{;},
	morekeywords     = {define-fun, declare-const,define-sort, declare-heap,check-sat,ite,
		Bool,
		bvadd, bvsub, bvneg, bvsgt, bvsle,
		assert, and, or, not, let, =, =>,
	    declare-datatype, declare-datatypes, Int, false, forall},
	escapeinside     = {(*}{*)},
	backgroundcolor=\color{lbcolor},
	tabsize=4,    
	basicstyle=\scriptsize,
	upquote=true,
	aboveskip={1.5\baselineskip},
	columns=fixed,
	showstringspaces=false,
	extendedchars=false,
	prebreak = \raisebox{0ex}[0ex][0ex]{\ensuremath{\hookleftarrow}},
	frame=single,
	numbers=left,
	showtabs=false,
	showspaces=false,
	showstringspaces=false,
	identifierstyle=\ttfamily,
	keywordstyle={\ttfamily\color[rgb]{0,0,1}},
	commentstyle=\color{burgundy},
	stringstyle=\color[rgb]{0.627,0.126,0.941},
	numberstyle=\color[rgb]{0.205, 0.142, 0.73},
    float=tb,
    floatplacement=tb,
    abovecaptionskip=-20pt,
    belowskip=-15pt
}
\lstdefinelanguage{SMT-Prolog} {
	alsoletter       = {-, [=], :},
	morecomment      = [l]{;},
	morekeywords     = {define-fun, declare-const, declare-heap,
		Bool,
		bvadd, bvsub, bvneg, bvsgt, bvsle,
		select, store, assert, and, or, not, let,
		declare-datatype, declare-datatypes, Int,
	    :-, emptyHeap, allocate, write, read, valid, true, false, newAddress, newHeap, alloc},
	escapeinside     = {(*}{*)},
	backgroundcolor=\color{lbcolor},
	tabsize=4,    
	basicstyle=\scriptsize,
	upquote=true,
	aboveskip={1.5\baselineskip},
	columns=fixed,
	showstringspaces=false,
	extendedchars=false,
	prebreak = \raisebox{0ex}[0ex][0ex]{\ensuremath{\hookleftarrow}},
	frame=single,
	numbers=left,
	showtabs=false,
	showspaces=false,
	showstringspaces=false,
	identifierstyle=\ttfamily,
	keywordstyle={\ttfamily\color[rgb]{0,0,1}},
	commentstyle={\ttfamily\color{burgundy}},
	stringstyle=\color[rgb]{0.627,0.126,0.941},
	numberstyle=\color[rgb]{0.205, 0.142, 0.73},
} 
\usepackage{wrapfig} 

\usepackage{cleveref}

\usepackage[utf8x]{inputenc}

\usepackage{svg}




\usepackage{longtable}

\acmJournal{PACMPL}

\begin{document}
\title{Sound and Complete Invariant-Based Heap Encodings}

\author{Zafer Esen}
\orcid{0000-0002-1522-6673}
\affiliation{%
  \institution{Uppsala University}
  \city{Uppsala}
  \country{Sweden}
}
\email{zafer.esen@it.uu.se}

\author{Philipp Rümmer}
\orcid{0000-0002-2733-7098}
\affiliation{%
  \institution{University of Regensburg}
  \city{Regensburg}
  \country{Germany}
}
\affiliation{%
  \institution{Uppsala University}
  \city{Uppsala}
  \country{Sweden}
}
\email{philipp.ruemmer@it.uu.se}

\author{Tjark Weber}
\orcid{0000-0001-8967-6987}
\affiliation{%
  \institution{Uppsala University}
  \city{Uppsala}
  \country{Sweden}
}
\email{tjark.weber@it.uu.se}


\begin{CCSXML}
<ccs2012>
   <concept>
       <concept_id>10003752.10003790.10003794</concept_id>
       <concept_desc>Theory of computation~Automated reasoning</concept_desc>
       <concept_significance>300</concept_significance>
       </concept>
   <concept>
       <concept_id>10003752.10003790.10002990</concept_id>
       <concept_desc>Theory of computation~Logic and verification</concept_desc>
       <concept_significance>500</concept_significance>
       </concept>
   <concept>
       <concept_id>10003752.10010124.10010138.10010139</concept_id>
       <concept_desc>Theory of computation~Invariants</concept_desc>
       <concept_significance>500</concept_significance>
       </concept>
   <concept>
       <concept_id>10003752.10010124.10010138.10010142</concept_id>
       <concept_desc>Theory of computation~Program verification</concept_desc>
       <concept_significance>500</concept_significance>
       </concept>
   <concept>
       <concept_id>10011007.10011074.10011099.10011692</concept_id>
       <concept_desc>Software and its engineering~Formal software verification</concept_desc>
       <concept_significance>500</concept_significance>
       </concept>
 </ccs2012>
\end{CCSXML}

\ccsdesc[300]{Theory of computation~Automated reasoning}
\ccsdesc[500]{Theory of computation~Logic and verification}
\ccsdesc[500]{Theory of computation~Invariants}
\ccsdesc[500]{Theory of computation~Program verification}
\ccsdesc[500]{Software and its engineering~Formal software verification}
\keywords{Software Verification, Heap Memory, Invariant-Based, Program Transformation, Horn Clauses, Time-Indexed Heap Invariants, Space Invariants}

 	\newcommand{\cmt}[1]{#1}
	\newcommand{\fun}[1]{\textrm{$\mathsf{#1}$}}
	\newcommand{\rd}{\fun{read}}
	\newcommand{\wt}{\fun{write}}
	\newcommand{\ia}{\fun{valid}}
	\newcommand{\eh}{\fun{emptyHeap}}
	\newcommand{\ct}{\fun{heapSize}}
	\newcommand{\nh}{\fun{newHeap}}
	\newcommand{\na}{\fun{newAddr}}
	\newcommand{\nbh}{\fun{newBatchHeap}}
	\newcommand{\nar}{\fun{newAddrRange}}
	\newcommand{\ntha}{\fun{nthAddress}}
	\newcommand{\nthar}{\fun{nthInAddressRange}}
	\newcommand{\cts}{\fun{withinAddressRange}}
	\newcommand{\alloc}{\fun{allocate}}
        \newcommand{\alloch}{\fun{allocateHeap}}
        \newcommand{\alloca}{\fun{allocateAddress}}
	\newcommand{\balloc}{\fun{batchAllocate}}
	\newcommand{\defObj}{\mathit{defObj}}
	\newcommand{\allocRes}{AllocationResult}
	\newcommand{\ballocRes}{BatchAllocationResult}
	\newcommand{\nullAddr}{\fun{nullAddress}}
        \newcommand{\fr}{\fun{free}}
	\newcommand{\all}[1]{\forall #1.}
	\newcommand{\fls}{\mathit{false}}
	\newcommand{\tru}{\mathit{true}}
	\newcommand{\sel}{\mathit{select}}
	\newcommand{\str}{\mathit{store}}
	\newcommand{\addr}{\mathit{Addr}}
    \newcommand{\integer}{\mathit{Int}}
	\newcommand{\addrRange}{\mathit{AddressRange}}
	\newcommand{\hp}{\mathit{Heap}}
	\newcommand{\obj}{\mathit{Obj}}
	\newcommand{\Int}{\mathit{Int}}
	\newcommand{\Bool}{\mathit{Bool}}
	\newcommand{\Nat}{\mathit{Nat}}
	\newcommand{\pair}[2]{\langle #1,#2 \rangle}
        \newcommand{\axiomref}[1]{[\ref{#1}]}
\newcommand{\funs}{F}
\newcommand{\preds}{P}
\newcommand{\invs}{R}
\newcommand{\sorts}{S}
\newcommand{\sortint}{\mathit{Int}}
\newcommand{\vars}{\mathcal{X}}
\newcommand{\interp}{\mathcal{I}}
\newcommand{\sortinterp}{\mathcal{S}}
\newcommand{\fvars}{\mathit{vars}}
\newcommand{\dom}{\mathit{dom}}
\newcommand{\args}{\mathit{args}}
\newcommand{\chcs}{\mathbb{C}}
\newcommand{\sentences}{\mathbf{S}}
\newcommand{\inds}{\iota}
\newcommand{\indschc}{\kappa}
\newcommand{\val}{\mathit{val}_{M,\beta,\interp}}
\newcommand{\ico}{T}
\newcommand{\tinvariant}{\Phi}
\newcommand{\transform}{\mathfrak{T}}
\newcommand{\head}{\mathit{head}}
\newcommand{\body}{\mathit{body}}

\newcommand{\princess}{\textsc{Prin\-cess}}
\newcommand{\eldarica}{\textsc{Eld\-arica}}
\newcommand{\tricera}{\textsc{Tri\-Cera}}
\newcommand{\tricerare}{\textsc{Tri\-CeraHI}}
\newcommand{\jayhorn}{\textsc{Jay\-Horn}}
\newcommand{\seahorn}{\textsc{Sea\-Horn}}
\newcommand{\cpachecker}{\textsc{CPA\-checker}}
\newcommand{\predatorhp}{\textsc{Pred\-atorHP}}
\newcommand{\predator}{\textsc{Pred\-ator}}
\newcommand{\forester}{\textsc{Fores\-ter}}
\newcommand{\viper}{\textsc{Vi\-per}}

\newcommand{\tri}{\textsc{Tri}}
\newcommand{\trire}{\textsc{TriHI}}
\newcommand{\sea}{\textsc{Sea}}
\newcommand{\cpa}{\textsc{CPA}}
\newcommand{\pred}{\textsc{Pred}}

\newcommand{\push}{\texttt{push}}
\newcommand{\pull}{\texttt{pull}}

\newcommand{\inv}{I}
\newcommand{\chchead}{\mathit{Head}}
\newcommand{\chcinvs}{\mathit{Rels}}

\newcommand{\ltrue}{\mathit{\top}} 
\newcommand{\lfalse}{\mathit{\bot}} 


\newcommand{\lang}{\textsc{UPLang}}
\newcommand{\alang}{an \textsc{UPLang}}

\newcommand{\assignsym}{:=}
\newcommand{\allockw}{\textbf{alloc}}
\newcommand{\readkw}{\textbf{read}}
\newcommand{\writekw}{\textbf{write}}
\newcommand{\skipkw}{\textbf{skip}}
\newcommand{\ifkw}{\textbf{if}}
\newcommand{\thenkw}{\textbf{then}}
\newcommand{\elsekw}{\textbf{else}}
\newcommand{\whilekw}{\textbf{while}}
\newcommand{\dokw}{\textbf{do}}
\newcommand{\assumekw}{\textbf{assume}}
\newcommand{\assertkw}{\textbf{assert}}
\newcommand{\nullkw}{\textbf{null}}
\newcommand{\defobjkw}{\textbf{defObj}}
\newcommand{\validkw}{\textbf{valid}}

\newcommand{\assigncmd}[2]{#1 \ \assignsym\ #2}
\newcommand{\alloccmd}[2]{#1 \ \assignsym\ \allockw(#2)}
\newcommand{\readcmd}[2]{#1 \ \assignsym\ \readkw(#2)}
\newcommand{\writecmd}[2]{\writekw(#1, #2)}
\newcommand{\skipcmd}{\skipkw}
\newcommand{\seqcmd}[2]{#1; #2}
\newcommand{\ifcmd}[3]{\ifkw\ #1\ \thenkw\ \{#2\}\ \elsekw\ \{#3\}}
\newcommand{\ifshortcmd}[2]{\ifkw\ #1\ \{#2\}}
\newcommand{\whilecmd}[2]{\whilekw\ #1\ \dokw\ \{#2\}}
\newcommand{\assumecmd}[1]{\assumekw(#1)}
\newcommand{\assertcmd}[1]{\assertkw(#1)}
\newcommand{\validcmd}[1]{\validkw(#1)}

\newcommand{\cmd}[1]{\text{``$#1$''}}

\newcommand{\progvars}{\mathit{vars}}


\newcommand{\stacks}{\mathit{Stack}}
\newcommand{\statements}{\mathit{Stmt}}

\newcommand{\tupleget}[2]{#1_{#2}}

\newcommand{\bigstepeval}{\delta_{\interp}}
\newcommand{\bigstepevalp}[1]{\delta_{#1}}
\newcommand{\sem}[1]{\llbracket #1 \rrbracket_s}
\newcommand{\semtwo}[2]{\llbracket #1 \rrbracket_{#2}}
\newcommand{\upd}[2]{s[#1 \mapsto #2]}
\newcommand{\updp}[3]{#1[#2 \mapsto #3]}
\newcommand{\success}{\top}
\newcommand{\error}{\bot}
\newcommand\doubleplus{\mathbin{+\mkern-10mu+}}
\newcommand{\nontermination}{\infty}
\newcommand{\execp}{\mathcal{E}_p}
\newcommand{\execencR}{\mathcal{E}_{\encR(p)}}

\newcommand{\encN}{\mathit{Enc}_{n}}
\newcommand{\encNc}{c}

\newcommand{\pstar}{p^*}
\newcommand{\pstarstar}{p^{**}}

\newcommand{\encR}{\mathit{Enc}_{R}}
\newcommand{\encRW}{\mathit{Enc}_{RW}}
\newcommand{\encRWfun}{\mathit{Enc}_{RWfun}}
\newcommand{\encRWmem}{\mathit{Enc}_{RWmem}}
\newcommand{\lastn}{\mathit{last}}

\newcommand{\lastwtcnt}{\mathit{\mathit{cnt}_{last}}}
\newcommand{\readtmpcnt}{\mathit{t}}

\newcommand{\inG}{\mathit{in}}
\newcommand{\cnt}{\mathit{cnt}}
\newcommand{\pg}{\mathit{last_{\addr}}}
\newcommand{\allocctr}{\mathit{cnt}_\mathit{alloc}}
\newcommand{\readresult}{\mathit{x}}
\newcommand{\readtmp}{\mathit{t}}
\newcommand{\dethavockw}{\mathit{havoc}}
\newcommand{\dethavoccall}[1]{\dethavockw(#1)}
\newcommand{\havocvar}{\mathit{seed}}

\newcommand{\foutcomes}{\phi_\sigma}
\newcommand{\fstacks}{\phi_{\mathit{stacks}}}
\newcommand{\freads}{\phi_{\mathit{reads}}}
\newcommand{\freadsone}{\phi_{\mathit{reads1}}}
\newcommand{\freadstwo}{\phi_{\mathit{reads2}}}
\newcommand{\fallocs}{\phi_{\mathit{allocs}}}

\newif\ifhighlightchanges
\highlightchangesfalse 

\newcommand{\newtext}[1]{%
  \ifhighlightchanges
    \textcolor{blue}{#1}%
  \else
    #1%
  \fi
}

\ifhighlightchanges
  \newenvironment{oldtext}{\color{red}}{\color{black}}
\else
  \excludecomment{oldtext}
\fi

\ifhighlightchanges
  \newenvironment{newtextt}{\color{blue}}{\color{black}}
\else
  \newenvironment{newtextt}{}{}
\fi

\newcommand{\edittext}[2]{%
  \ifhighlightchanges
    \textcolor{red}{#1} \textcolor{blue}{#2}%
  \else
    #2%
  \fi
}


\begin{abstract}
Verification of programs operating on heap-allocated data structures,
for instance lists or trees, poses significant challenges due to the 
potentially unbounded size of such data structures.
We present \emph{time-indexed heap invariants}, a novel invariant-based 
heap encoding leveraging uninterpreted predicates and prophecy variables
to reduce verification of heap-manipulating programs to verification of programs over integers only.
Our encoding of heap is general and agnostic to specific data structures. 
To the best of our knowledge, our approach is the first heap invariant-based 
method that achieves both soundness and completeness. We provide formal proofs 
establishing the correctness of our encodings. Through an experimental evaluation, 
we demonstrate that time-indexed heap invariants significantly extend the
capability of existing verification tools, allowing automatic verification
of programs with heap that were previously out of reach for state-of-the-art tools.
\end{abstract}

\maketitle

\section{Introduction}\label{sec:intro}

The verification of programs operating on mutable, heap-allocated data structures is a long-standing challenge in verification~\cite{DBLP:conf/cade/BohmeM11}. Automatic verification tools implement a plethora of methods to this end, including techniques based on a representation of heap in first-order logic and the theory of arrays~\cite{DBLP:conf/fmcad/KomuravelliBGM15, DBLP:journals/fuin/AngelisFPP17a}, methods based on separation logic and shape analysis~\cite{DBLP:conf/tacas/DistefanoOY06, DBLP:conf/cav/DudkaPV11, DBLP:conf/sas/ChangRN07, DBLP:conf/cav/BerdineCCDOWY07}, or techniques based on refinement types and liquid types~\cite{refinementtypes,liquidtypes, DBLP:conf/csfw/BengtsonBFGM08, DBLP:journals/toplas/KnowlesF10, DBLP:conf/sas/MonniauxG16, DBLP:conf/sas/BjornerMR13}. Despite the amount of research that was invested, it is still easy, however, to find small and simple programs that are beyond the capabilities of the state-of-the-art tools, as we illustrate in \Cref{sec:motivating}.

In this paper, we focus on methods inspired by refinement types, which have received attention in particular in the context of verification using Constrained Horn Clauses (CHCs)~\cite{DBLP:conf/birthday/BjornerGMR15,DBLP:conf/sas/BjornerMR13,DBLP:conf/sas/MonniauxG16,DBLP:conf/lpar/KahsaiKRS17}. Such methods are also called \emph{invariant-based,} since they infer local invariants on the level of variables, array cells, or heap objects, and this way derive that no assertion violation can happen in a program. Refinement types and CHCs fit together well, since the type inference problem can naturally be automated using an encoding as Horn clauses. This line of research has, among others, led to methods that can infer universally quantified invariants for arrays~\cite{DBLP:conf/sas/BjornerMR13,DBLP:conf/sas/MonniauxG16}, universally quantified invariants about objects on the heap~\cite{DBLP:conf/lpar/KahsaiKRS17}, or the shape of heap data structures~\cite{DBLP:conf/cav/WolffGEHRW25}.

More generally, invariant-based methods can be seen as a form of \emph{verification by transformation:} instead of inferring, for instance, a universally quantified formula capturing some property that has to hold for \emph{all heaps} that a program can construct, we infer a quantifier-free property that uniformly has to hold for all \emph{objects} that can occur on the heap. This change of perspective can be described as a transformation that rewrites a program~$p$ to a program~$p'$, in such a way that the correctness of $p'$ implies the correctness of $p$, but the program~$p'$ is easier to verify than $p$. The existing invariant-based verification techniques have in common that they are \emph{sound} (whenever the transformed program~$p'$ is correct, the original program ~$p$ is correct) but not \emph{complete}: it can happen that $p'$ is incorrect even though $p$ is correct. Incompleteness usually occurs because the inferred invariants are local properties about individual heap objects (or variables, or array cells), and therefore cannot express, for instance, global properties about the shape of the heap that might be necessary to infer the correctness of a program.

We present, to the best of our knowledge, the first invariant-based verification approach for programs with heap that is \emph{both sound and complete}. The approach is based on \emph{time-indexed heap invariants}, a novel encoding of heap operations that completely eliminates the heap by deriving a relation between read and write accesses to the heap. The program~$p'$ after transformation operates only on integers and can be verified automatically by off-the-shelf verification tools supporting linear integer arithmetic and uninterpreted predicates~\cite{DBLP:conf/ecoop/WesleyCNTWG24, tricera}.\footnote{It should be noted that the possibility of encoding arbitrary heap data structures using integers is obvious, using the standard G\"odel encodings from computability theory. Such encodings are, however, purely theoretical and not intended for program verification, as complex nonlinear invariants would be needed for verifying even the simplest programs. Our method, in contrast, is competitive with state-of-the-art verification tools.} In our experiments, using the verification tools \seahorn~\cite{seahorn} and \tricera~\cite{tricera} as back-ends and programs from the SV-COMP as benchmarks, we find that our approach is competitive with the best existing verification tools in the SV-COMP c/ReachSafety-Heap category; in addition, it makes it possible to verify automatically even challenging programs that are beyond the capabilities of state-of-the-art verification tools.

We present two different encodings with time-indexed heap invariants, both of which are sound and complete, but differ in the precise way in which invariants are used to represent heap operations. We also offer insights into modifying these base encodings in order to make the encodings more \emph{verifier-friendly,} by introducing and tracking
additional variables to aid verification tools in their computation of invariants. We precisely characterize which of those modifications preserve completeness, and which modifications sacrifice completeness for performance in practice.

\subsection{Contributions}
The main contributions of this paper are:
\begin{enumerate}
    \item \emph{Time-indexed heap invariants}, a novel heap encoding that reduces the correctness of programs operating on mutable, heap-allocated data structures to the correctness of programs operating on integers that are specified using uninterpreted predicates.
    \item Extensions to the base encoding that aim to make the verification scale to more problems, while maintaining soundness and offering control over completeness.
    \item Proofs of soundness and completeness of the base encoding.
    \item A prototypical implementation of the verification approach in the tool \tricerare, extending the open-source software model checker \tricera.
    \item An experimental evaluation of the different time-indexed heap invariant encodings introduced in the paper using benchmarks taken from the SV-COMP, as well as simple but challenging crafted benchmarks. We release all artifacts and scripts to facilitate reproduction.
\end{enumerate}

\subsection{Organization of the Paper}
This paper is organized as follows: Section~\ref{sec:motivating} provides a motivating example to introduce and motivate the encodings. Section~\ref{sec:preliminaries} introduces the syntax and semantics of the language used throughout the paper. \Cref{sec:r-encoding} and \Cref{sec:rw-encoding} define the notion of time-indexed heap invariants, accompanied by detailed proofs of correctness. Section~\ref{sec:approximations}  presents various extensions and approximations of these encodings. Finally, we discuss our experimental evaluation in Section~\ref{sec:evaluation}.

\newpage
\section{Motivating Example}\label{sec:motivating}
\begin{wrapfigure}{r}{0.45\textwidth}
  \vspace{-3em}
  \hfill
  \begin{minipage}{0.41\textwidth}
    \lstinputlisting[
      style=myCPP,captionpos=b,
      caption={A C program allocating and iterating over a linked list. The assertion checks that the list can only have inner nodes with the value $2$, and a last node with the value $3$.},
      label={lst:mot-program-1}
    ]{listings/mot-program-1.c}
  \end{minipage}
  \vspace{-1em}
\end{wrapfigure}

The C program in \Cref{lst:mot-program-1} is inspired by SV-COMP benchmarks; it allocates and iterates over a singly-linked list, setting all nodes' \texttt{data} fields to $2$, then setting the last node's data field to $3$. The last node points to \texttt{NULL}. The resulting shape of the list is illustrated in \Cref{fig:list-mot-program-1}.

The property verified (lines~19, 21) is that inner nodes hold value~$2$ and the last node value~$3$.
Verifying this property is challenging for multiple reasons. Since the size of the list is unbounded, bounded approaches do not work. The verification system must come up with an invariant that not only reasons about the shape of the list, but also about the node values.
This reasoning is made even more complex due to how the updates are spread out over multiple program locations: the first node is allocated at line 6, continuing with further allocations and updates inside the loop starting at line 9 (for $\texttt{N} > 0$), with a final update to the last node at line 14.

We tried to verify this program using state-of-the-art verification tools \cpachecker~\cite{cpachecker1,cpachecker2}, \predatorhp~\cite{predator,predatorhp}, \seahorn~\cite{seahorn} and \tricera~\cite{tricera}, and only \predatorhp\ (the winner of \textit{MemSafety} and \textit{ReachSafety-Heap} categories at SV-COMP 2024~\cite{svcomp24})
could show that it is safe. When the program updates become slightly more complicated, making interval analysis insufficient (e.g., writing different values depending on a conditional), \predatorhp\ also fails to verify the program. Our approach can verify both the initial program and the more challenging variant using  off-the-shelf
verification tools that support uninterpreted predicates in the input language, such as \seahorn{}~\cite{seahorn,DBLP:conf/ecoop/WesleyCNTWG24} and \tricera{}~\cite{tricera}.

\subsection{The Language of the Encodings}\label{subsec:normalisation}

\begin{wrapfigure}{r}{0.45\textwidth}
    \vspace*{-5ex}
    \hspace*{\fill}
  \begin{minipage}{0.42\textwidth}
    \includegraphics[width=\linewidth]{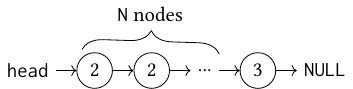}
    \caption{The final shape of the linked list created in Listing~\ref{lst:mot-program-1} for positive \texttt{N}. For non-positive \texttt{N}, the shape will contain only the last node with the value $3$.}
    \label{fig:list-mot-program-1}
  \end{minipage}
\end{wrapfigure}

For greater clarity and to facilitate simpler proofs, we assume a simpler C-like language where all heap operations take one of the following forms: \texttt{*p = v} (a write), \texttt{v = *p} (a read) or \texttt{p = alloc(v)} (an allocation). Furthermore, we introduce the value \texttt{defObj}, available as a program variable with the same name, in order to represent undefined objects that are returned by \emph{invalid} reads. (An invalid read occurs when a program accesses memory that has either not been allocated or has been allocated but not yet initialised, for example, memory returned by a \texttt{malloc} operation in C.). The language also supports \texttt{assert} and \texttt{assume} statements with their usual semantics~\cite{flanagan01popl}. 

\Cref{lst:common-code} is the result of normalizing the program in \Cref{lst:mot-program-1} into this language. In \Cref{lst:common-code} we introduce the operations \texttt{read} and \texttt{write}, which will be provided definitions by the encodings we introduce. Intuitively, the semantics of \texttt{v = read(p)} corresponds to \texttt{v = *p} and the semantics of \texttt{write(p, v)} to \texttt{*p = v}. The \texttt{malloc} operation is also replaced with a deterministic \texttt{alloc} function that always succeeds and returns a fresh address, and the calls to \texttt{malloc} are replaced with calls to \texttt{alloc} using the introduced \texttt{defObj}.

\begin{figure}[tbp]
\centering
\begin{minipage}{0.43\textwidth}
    \lstinputlisting[style=myCPP, captionpos=b, caption={The program from \Cref{lst:mot-program-1} that is rewritten to use the \texttt{read} and \texttt{write} functions from the encodings. The \texttt{malloc} function is also replaced with a deterministic \texttt{alloc} function that writes a default invalid object (\texttt{defObj}) to the newly-allocated address.}, label={lst:common-code}]{listings/trace-R-common-code.c}
\end{minipage}
\quad\quad
\begin{minipage}{0.455\textwidth}
    \lstinputlisting[style=myCPP, captionpos=b, caption={Trace-based encoding. For the sake of presentation, we represent the trace using a functional data-type with structural pattern matching to destructure the trace, instead of a more verbose C equivalent.}, label={lst:trace-encoding}]{listings/trace-program.c}
    %
    %
    \lstinputlisting[style=myCPP, captionpos=b, caption={Time-indexed heap invariant encoding.}, label={lst:full-encoding}]{listings/mot-R-program.c}
\end{minipage}

\Description{A collection of three listings. The listings are explained in their own captions.}
\caption{A sound and complete encoding for heap operations using traces (\Cref{lst:trace-encoding}) and  uninterpreted predicates (\Cref{lst:full-encoding}). The program in \Cref{lst:mot-program-1} is rewritten to be compatible with either encoding in \Cref{lst:common-code}.}\label{fig:full-encoding}
\end{figure}

\subsection{Trace-Based View of Heap Operations}\label{subsec:trace-view}
\begin{figure}
    \centering
    \includegraphics[width=1\linewidth]{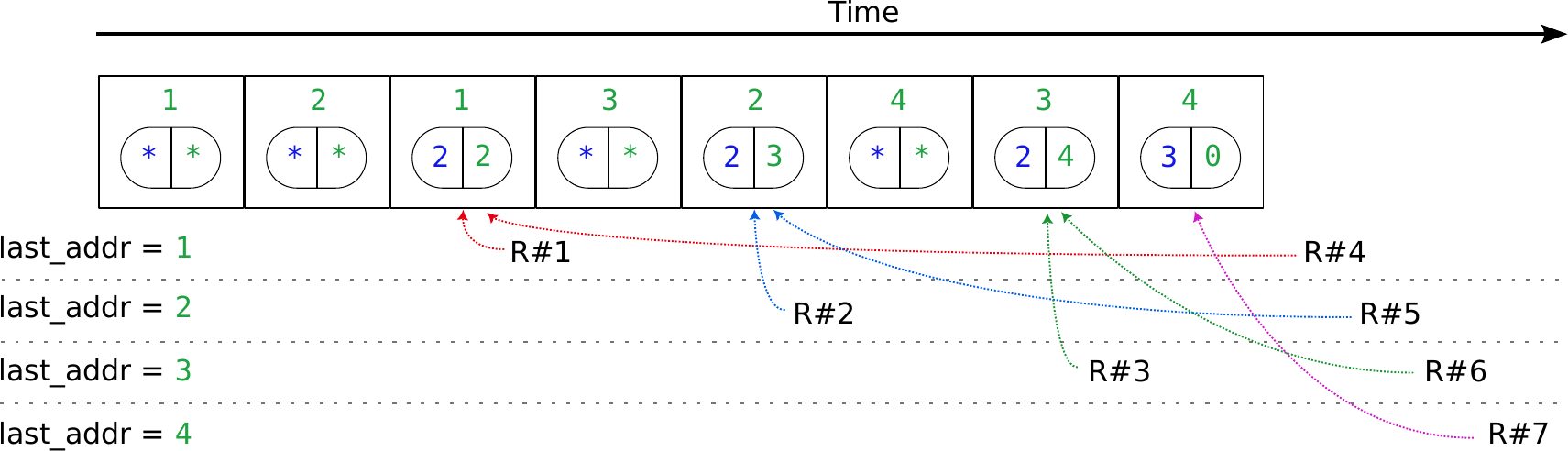}
    \Description{Diagram illustrating heap traces and relational heap encoding for the motivating program.}
    \caption{
    For the program in \Cref{lst:mot-program-1}, the upper part of the diagram illustrates the heap trace for \texttt{N = 3}. In the trace each cell corresponds to a write to the written (address -- \texttt{Node}) pair. The \texttt{Node}s are represented by the rounded rectangles containing the values for the \texttt{data} and \texttt{next} fields respectively, with the number above a node showing that node's address.
    %
    The lower part of the diagram shows the relationship between traces and time-indexed heap invariants. In the latter encoding, \texttt{last\_addr} implicitly quantifies over all possible addresses. When \texttt{last\_addr} matches the address being read, the last \texttt{Node} value residing at that address is asserted to be an element of the uninterpreted predicate \texttt{R}, represented by the arrows in the diagram. Due to the universal quantification, \texttt{R} is therefore required to be the union of these values. The reads can be distinguished due to the unique counter value (in the diagram the numbers appended to \texttt{R\#}). The reads \texttt{R\#1} -- \texttt{R\#3} happen at line 33 in \Cref{lst:common-code}, and the rest of the reads happen at line 39.}\label{fig:relational-encoding-explanation}
\end{figure}

Before presenting time-indexed heap invariants, we first provide the intuition through an idealized trace-based view of heap operations.
%
In this view the heap is treated as a chronological trace of write operations. Whenever an object $o$ is written to some address $p$, the trace is extended with the tuple $\langle p,o \rangle$. A read operation from address $p$ returns the most recent tuple containing~$p$ from the trace. \Cref{lst:trace-encoding} provides an implementation for this encoding by defining \texttt{read} and \texttt{write} this way, and using a variable \texttt{H}, representing the heap trace. (\Cref{lst:common-code} and \Cref{lst:trace-encoding} can be combined to obtain the complete trace-based encoding of the program in \Cref{lst:mot-program-1}.)

The trace-based encoding of the heap is sound, because every write operation is registered in the trace. It is also complete, because the content of the heap at every address is known at all times.

The upper part of \Cref{fig:relational-encoding-explanation} illustrates how the heap trace of the program in \Cref{lst:mot-program-1} would look like for \texttt{N = 3}. For instance, the write corresponding to the first \texttt{malloc} operation at line 6 (\Cref{lst:mot-program-1}) is recorded in the first cell at address $1$ and an arbitrary object $*$. The second write arises from the \texttt{malloc} operation at line 10, adding a new cell at address $2$.


The trace encoding method introduces several challenges for safety proofs.
First, because the trace records every write, it often includes more information than needed to prove a property. Second, the read operation introduces a loop. The proof of safety will often require complicated invariants over the \texttt{Trace} datatype in the invariant of this loop, and of the loops of the original program. These collectively complicate the proof and its automation.

\subsection{Time-Indexed Heap Invariants}\label{subsec:encoding-clike}
Time-indexed heap invariants overcome the limitations of the trace-based approach by requiring invariants only over integers, which simplifies the verification process. In the trace-based encoding, the heap trace contains \emph{all} heap operations, even those that become irrelevant due to subsequent overwrites, that the read operation iterates over. With time-indexed heap invariants, in contrast, the introduced predicate only records the values when a read happens (the arrows in \Cref{fig:relational-encoding-explanation}). Furthermore, time-indexed heap invariants use only basic types and do not use a loop for reads. The program in \Cref{lst:common-code} can be verified using the encoding given in \Cref{lst:full-encoding}. A variant of this program where the written value is not constant (the value $2$ at line 32 of \Cref{lst:common-code})
but is conditionally assigned, is currently beyond the reach of automatic verification tools including \cpachecker\ and \predatorhp\ (the winner of SV-COMP 2024's \textit{MemSafety} category and \textit{ReachSafety-Heap} sub-category), but is verified within seconds on modern hardware using time-indexed heap invariants.

\subsubsection{The Encoding}

To eliminate the explicit heap trace, we employ two main techniques:
\emph{uninterpreted predicates}~\cite{tricera, DBLP:conf/ecoop/WesleyCNTWG24} and \emph{auxiliary variables} including \emph{history variables}~\cite{DBLP:phd/us/Owicki75} (recording past accesses) and \emph{prophecy variables}~\cite{DBLP:journals/tcs/AbadiL91} (anticipating future accesses). Uninterpreted predicates are a straightforward extension to verification tools based on CHCs, and are already supported by \seahorn~\cite{seahorn,DBLP:conf/ecoop/WesleyCNTWG24} and \tricera~\cite{tricera}, CHC-based verification tools for C programs. In particular, the time-indexed heap invariants encoding introduces:
\begin{itemize}
    \item \texttt{cnt}: a history variable incremented at each \texttt{read},
    \item \texttt{last\_addr}: a prophecy variable nondeterministically assigned to force the proof to consider all addresses,
    \item \texttt{last}: a history variable that tracks the object at \texttt{last\_addr} during reads,
    \item \texttt{in}: a history variable that holds the program input value.
\end{itemize}

Uninterpreted predicates are declared similarly to function prototypes (e.g., $R$ at line~8 of \Cref{lst:full-encoding}) and used solely in \texttt{assert} and \texttt{assume} statements. The semantics of an uninterpreted predicate~$P$ corresponds to that of first-order logic: $P$ represents an unknown set-theoretical relation over the argument types of $P$. The meaning of an uninterpreted predicate is fixed throughout the program execution, i.e., a program can test, using \texttt{assert} and \texttt{assume}, whether the relation represented by~$P$ holds for given arguments, but it cannot modify the relation. In order to tell whether a program execution succeeds in the presence of uninterpreted predicates, we therefore first have to define the relation~$\interp(P)$ represented by every predicate~$P$ occurring in the program by assuming some interpretation function~$\interp$. We say that a program involving uninterpreted predicates is \emph{safe} if and only if there is an interpretation of the uninterpreted predicates for which no assertion can fail.

An \texttt{assert} statement involving an uninterpreted predicate~$P$ asserts that the relation~$\interp(P)$ includes the tuple of values given as arguments of $P$; otherwise, the execution of the \texttt{assert} statement fails. In our encoding, the \texttt{assert} statement in line~14 of \Cref{lst:full-encoding} enforces that $R$ holds for the tuple consisting of program input \texttt{in}, the unique read identifier \texttt{cnt}, and the object stored in variable \texttt{last}; the latter is the object written at the last write to \texttt{last\_addr}. 

Conversely, an \texttt{assume} statement blocks program execution (but does not fail) if an uninterpreted predicate does not hold for the given arguments. In the encoding, the \texttt{assume} in line~18 is used to query whether some value is present in $R$. For this, the variable \texttt{result} is first set to a non-deterministic value. The \texttt{assume} then constrains the program to executions consistent with the values represented by $R$.


The intention behind this encoding is to represent all values ever read from the heap using the predicate~$R$. Consider \Cref{fig:relational-encoding-explanation}, which illustrates executions of the program in \Cref{lst:mot-program-1} for a fixed input (\texttt{N = 3}). Because \texttt{cnt} increments with each read, each read operation is uniquely identifiable (shown as \texttt{R\#1} -- \texttt{R\#7} in the figure).  The prophecy variable \texttt{last\_addr} is set to a non-deterministic value in the beginning of the program execution (line~3, \Cref{lst:full-encoding}) and remains unchanged throughout the execution; this forces the verification to consider all possible addresses. Each chosen address corresponds to a sequence of write-read pairs, illustrated as separate rows in the figure for the address values $1$--$4$.
Assertions involving the $R$ predicate are made precisely when \texttt{last\_addr} matches the address~\texttt{p} currently being read, and therefore precisely when the \texttt{last} variable contains the object that is supposed to be read. Thus, the assertions capture exactly the relationship between the program input \texttt{in}, each unique read identifier \texttt{cnt}, and the object last written to the address~\texttt{p}. The \texttt{assume} statement at line~18 of \Cref{lst:full-encoding} is reached when the read concerns an address that does \emph{not} match \texttt{last\_addr}, in which case the program will instead ``read'' the value provided by the $R$ relation. Consequently, each read operation has as one possible result the object that was last written to address~\texttt{p}.

\subsubsection{Verifying the Example Program.}
Verification tools such as \seahorn~\cite{seahorn,DBLP:conf/ecoop/WesleyCNTWG24} and \tricera~\cite{tricera} verify programs by inferring first-order logic formulas that can be substituted for the uninterpreted predicates in the programs. Those formulas are inferred in conjunction with other annotations, e.g., loop invariants, as annotations can depend on each other: choosing a particular interpretation of the uninterpreted predicates might necessitate the right loop invariants, and vice versa.

Verification of the example succeeds with the following formula chosen for the time-indexed invariant~$R$, which can be computed automatically by \seahorn, together with the corresponding loop invariants (\tricera\ can automatically verify a  modified version of the example, see \Cref{sec:rw-encoding}):
\begin{align}
    R(\mathtt{in}, \mathtt{c}, \mathtt{n})
    ~\equiv~~ &
    (\mathtt{in} < 0 \wedge \mathtt{c} = 1 \wedge \mathtt{n}.\mathtt{next} = \mathtt{NULL} \wedge \mathtt{n}.\mathtt{data} = 3) \vee\mbox{} \notag
    \\
    &
    (\mathtt{in} \geq 0 \wedge 1 \leq \mathtt{c} \wedge \mathtt{c} \leq \mathtt{in} \wedge \mathtt{n}.\mathtt{next} = \mathtt{c} + 1) \vee\mbox{} \label{eq:Rsolution}
    \\
    &
      (\mathtt{in} \geq 0 \wedge \mathtt{in} < \mathtt{c} \wedge \mathtt{c} \leq 2\mathtt{in} \wedge \mathtt{n}.\mathtt{next} = \mathtt{c} - \mathtt{in} + 1 \wedge \mathtt{n}.\mathtt{data} = 2) \vee\mbox{} \notag
    \\
    &
      (\mathtt{in} \geq 0 \wedge \mathtt{c} = 2\mathtt{in} + 1 \wedge \mathtt{n}.\mathtt{next} = \mathtt{NULL} \wedge \mathtt{n}.\mathtt{data} = 3) \notag
\end{align}
The formula considers four different cases. The first disjunct considers the case where the program input~$\mathtt{N}$ (stored as argument $\mathtt{in}$) is negative; in this case, a list of length~1 is constructed and the program calls function~\texttt{read} only once in line~39 of \Cref{lst:common-code}. This one call to \texttt{read} returns a \texttt{Node n} with \texttt{next} pointer \texttt{NULL} and \texttt{data} field $3$.

The second case covers the situation that $\mathtt{in} \geq 0$, in which case the function \texttt{read} will be called repeatedly in line~33. Those calls happen with the read count~$\lstinline!cnt_r! \in [1, \mathtt{in}]$ and return \texttt{Node}s \texttt{n} with \texttt{next} pointer pointing to the respective next object~$\lstinline!cnt_r! + 1$ allocated in line~32. The value of the \texttt{data} field is irrelevant in those reads, since \texttt{data} is overwritten in line~32 or 35; the \texttt{data} field is therefore not mentioned in the second case.

The third case covers the first \texttt{N} calls to \texttt{read} in line~39, which return the list nodes with \texttt{data} field~$2$ and \texttt{next} pointer pointing to the respective next node in the list.

Finally, the fourth case specifies the last call to \texttt{read} in line~39, which returns the last list node with \texttt{data} field~$3$ and \texttt{next} pointer \texttt{NULL} and leads to termination of the second loop.

\begin{wrapfigure}{r}{0.36\textwidth}
  \hfill
  \begin{minipage}{0.9\linewidth}
  \vspace{-1.5em}
  \lstinputlisting[style=myCPP,label={lst:space-invariants},captionpos=b,caption={Basic encoding following the space invariants approach.}]{listings/space-invariants.c}
  \end{minipage}
  \vspace{-1em}
\end{wrapfigure}
The invariant~\eqref{eq:Rsolution} captures the exact sequence of \texttt{read} operations happening in the program, and this way makes it possible to distinguish the list nodes with value~$2$ and the last node with value~$3$. The solution is a linear arithmetic formula and can effectively be found thanks to the inference techniques implemented in verification systems and CHC solvers; through the encoding, heap accesses are eliminated entirely, and verification is turned into a purely arithmetic problem. As we show formally in this paper, our encoding is \emph{complete} and applicable for arbitrarily complex computations, regardless of the shape of the heap data structures constructed by the program.

\subsubsection{Comparison with Incomplete Invariant-Based Methods.}

Just like with loop invariants, the complexity of the required time-indexed heap invariants depends on the program and the assertions to be verified. Solution~\eqref{eq:Rsolution} is relatively complex, as the program forces the verification tool to distinguish between list nodes at different indexes. In many practical cases, also simpler formulas suffice. For instance, if lines~40--44 of \Cref{lst:common-code} are replaced by the simple assertion \verb!assert(n.data >= 0)!, then for successful verification it is enough to find the formula $R(\mathtt{in}, \mathtt{c}, \mathtt{n}) \equiv (\mathtt{n}.\mathtt{data} \geq 0)$ that no longer distinguishes between different read indexes.

The simpler program with \verb!assert(n.data >= 0)! can also be
verified using incomplete invariant-based methods. We compare with the
space invariants encoding~\cite{DBLP:conf/lpar/KahsaiKRS17} from
\jayhorn~\cite{DBLP:conf/cav/KahsaiRSS16}, to which our method is most
closely related. The most basic version of the space invariants
encoding can be rephrased in terms of the implementations of
\texttt{read} and \texttt{write} shown in
\Cref{lst:space-invariants}. The uninterpreted predicate~$I$
represents an invariant to be satisfied by all data put on the heap;
when writing to the heap (a ``push'' in \jayhorn{} terminology), it is
asserted that $I$ holds for the address and object written to, while
reading (``pulling'') from the heap can assume $I$. The version of
\Cref{lst:common-code} with lines~40--44 replaced by
\verb!assert(n.data >= 0)! can be verified by
setting~$I(\mathtt{p}, \mathtt{n}) \equiv (\mathtt{n}.\mathtt{data}
\geq 0)$, i.e., using the same formula as with time-indexed heap invariants. More generally, any space invariant~$I$ can be translated to
a time-indexed heap invariant~$R$ by conjoining with $I$ a formula
relating the program input~$\mathtt{in}$, the read
counter~$\mathtt{c}$, and the accessed address~$\mathtt{p}$; programs
that can be verified using space invariants are therefore particularly
simple to verify using our approach. Vice versa, not every time-indexed heap invariant can be translated to a space invariant. While our encoding
with explicit read counters is able to represent the exact sequence of
data values stored at every address, space invariants have to
summarize the possible values of objects at each address using a
single formula; space invariants are therefore not able to describe
global properties involving multiple objects on the heap either.


To increase the precision of space invariants, \jayhorn{} uses various
refinements of the basic encoding~\cite{DBLP:conf/lpar/KahsaiKRS17},
some of which we discuss in \Cref{sec:approximations}. However, the
encodings are all incomplete, which means that correct programs exist
that cannot be verified using any choice of invariant~$I$; for
instance, in experiments with \jayhorn{} we found that no
configuration was able to verify a Java version of the (unmodified)
program in \Cref{lst:common-code}.

%


\subsubsection{Properties of the Encoding}
A program with uninterpreted predicates is safe if and only if an interpretation of the predicates exists such that all assertions hold.
The time-indexed heap invariants encoding of a program is \emph{equi-safe} to the original program, that is, it is both sound and complete. The encoding explicitly records each read operation in the uninterpreted predicate $R$ across the executions corresponding to the different values assigned to \texttt{last\_addr}. Since no read goes unrecorded, soundness follows. The encoding is also complete, because in the strongest interpretation of the predicate~$R$ for which none of the \texttt{assert}s in line~14 of \Cref{lst:full-encoding} can fail, the predicate~$R$ \emph{exactly} represents the values that would have been read in the original program; this implies that no spurious errors can occur. We give a formal proof of this result in \Cref{sec:r-encoding-proof}. It is important to note, however, that the completeness result only holds when time-indexed heap invariants encoding is applied to \emph{deterministic} programs, i.e., programs whose execution is uniquely determined by the value of the input variable~\texttt{in}. The relation~$R$ could otherwise mix up the values read during different unrelated executions of the program. As we discuss in \Cref{sec:semantics}, this is not a restriction, since nondeterminism can always explicitly be modeled as part of the program input.

\section{Preliminaries}\label{sec:preliminaries}
\subsection{Overview of \lang\ (Uninterpreted Predicate Language)}\label{sec:lang-overview}
In order to provide a simple language to present our encodings and their proofs, we introduce \lang, an imperative and deterministic language that integrates standard heap operations with \emph{uninterpreted predicates}. In \lang, $\assertkw$ and $\assumekw$ statements may be over concrete formulas or over (applications of) uninterpreted predicates. The language's semantics is built upon the theory of heaps~\cite{esenTheoryHeapConstrained2021,DBLP:conf/lopstr/EsenR20} (see \Cref{tbl:heap-ops}), as this theory provides sorts and operations (such as $\rd$, $\wt$ and $\alloc$) similar to ours, along with well-defined formal semantics.

An \emph{uninterpreted predicate} $P(\bar{x})$ is a predicate whose interpretation is not fixed by the language semantics.
An \emph{interpretation} $\interp$ assigns to each uninterpreted predicate symbol $P$ of arity $n$ a subset of $\sortinterp(\tau_1) \times \dots \times \sortinterp(\tau_n)$, where each $\tau_i$ corresponds to the type of the $i$-th argument of $P$, and $\sortinterp$ is the sort interpretation function defined in \Cref{sec:notation}. Interpretations form a complete lattice ordered by the relation $\sqsubseteq$, defined pointwise: given interpretations $\interp_1, \interp_2$, we have $\interp_1 \sqsubseteq \interp_2$ iff for every predicate $P$, $\interp_1(P) \subseteq \interp_2(P)$.

\subsection{Basic Notation and Definitions}\label{sec:notation}
In the rest of this paper, a \emph{program} refers to a program written in \lang, whose syntax and semantics are defined in \Cref{sec:semantics}. \lang\ has two basic types: integers ($\sortint$) and addresses~($\addr$). For simplicity, these language types directly correspond to their mathematical counterparts given by the sort interpretation function $\sortinterp$, defined by $\sortinterp(\sortint) = \mathbb{Z}$ and $\sortinterp(\addr) = \mathbb{N}$. In addition, the $\obj$ type allows defining non-recursive algebraic data-types (ADTs)~\cite{BarFT-RR-17} within the language, adding support for product and sum types. The $\obj$ type can, for instance, be declared as the union of all types that can be on the heap, each of which can itself be an ADT type. We refer to~\cite{DBLP:conf/smt/EsenR22} for an example in the context of the theory of heaps. We interpret the $\obj$ type as the object sort ($\obj$) in the semantics. On the semantic side we also use the heap sort ($\hp$), interpreted as a sequence of objects. Note that the operations ($\rd$, $\wt$, $\alloc$) are distinct from the program statements ($\readkw$, $\writekw$, $\allockw$), and are semantic functions defined by the theory heaps as shown in Table~\ref{tbl:heap-ops}.
The notation $\tupleget{t}{i}$ denotes the $i^{\text{th}}$ component of a tuple $t$.

A \emph{stack} $s$ maps variables to their values. The notation $s(x)$ accesses the value of variable $x$, and~$s[x \mapsto v]$ denotes the stack identical to $s$ except that variable $x$ maps to value $v$. The evaluation of an expression $e$ under stack $s$ is denoted by $\sem{e}$. The function $\progvars(p)$ returns the set of variables in program $p$.

\begin{table}
\caption{Theory of heaps operations and their interpretations as defined in~\cite{DBLP:conf/lopstr/EsenR20}. 
}\label{tbl:heap-ops}
    \begin{tabular}{@{}p{0.15\linewidth} p{0.32\linewidth} p{0.45\linewidth}@{}}
    \toprule
    \textbf{Operation}  & \textbf{Signature} & \textbf{Interpretation} \\
    \midrule
    $\nullAddr$ & $() \to \addr$ & $\makebox[10.5ex][l]{$\langle  \rangle \mapsto$} 0$ \\
    $\eh$       & $() \to \hp$   & $\makebox[10.5ex][l]{$\langle  \rangle \mapsto$}\epsilon$ \\
    $\alloc$    & $\hp \times \obj \to \hp \times \addr$ & $\makebox[10.5ex][l]{$\langle h, o \rangle \mapsto$} \langle h \doubleplus [o], |h|+1 \rangle$ \\
    $\rd$       & $\hp \times \addr \to \obj$ & $\makebox[10.5ex][l]{$\langle h, a \rangle \mapsto$} \begin{cases} 
                                    h[a-1]~~~~~~& \text{if~} 0 < a \leq |h|\text{,} \\
                                    \defObj & \text{otherwise.} 
                                  \end{cases}$ \\
    $\wt$       & $\hp \times \addr \times \obj \to \hp$ & $\makebox[10.5ex][l]{$\langle h, a, o\rangle \mapsto$} \begin{cases} 
                                        h[a-1 \mapsto o]& \text{if~} 0 < a \leq |h|\text{,} \\
                                        h & \text{otherwise.} 
                                      \end{cases}$ \\
    \bottomrule
    \end{tabular}
\end{table}

\subsection{Syntax and Semantics of \lang}\label{sec:semantics}
\begin{figure}
\begin{align*}
  \text{Types} \quad & \tau ::= \sortint \mid \addr \mid \obj \\
  \text{Variables} \quad & p: \tau, \ x: \tau,  y: \tau, z: \tau, \dots \\
  \text{Constants} \quad & n \in \mathbb{Z},\ \nullkw: \addr, \defobjkw: \obj \\
  \text{Expressions} \quad & e ::= n \mid x \mid \text{unary-op}\ e \mid e\ \text{op}\ e \mid \nullkw \mid \defobjkw 
  \mid \textit{ctor}(e, \ldots, e) \mid \textit{sel}(e) \mid \textit{is-ctor}(e)\\
  \text{Statements} \quad & \statements ::= \assigncmd{x}{e} \quad \text{($x$  and $e$ same type)} \\
  & \quad\mid \alloccmd{p}{e} \mid \readcmd{x}{p}  \mid \writecmd{p}{e} \quad (p : \addr) \\
  & \quad\mid \skipcmd \mid \statements;\ \statements \mid \ifcmd{e}{\statements}{\statements} \mid \whilecmd{e}{\statements} \\
  & \quad\mid \assumecmd{e} \mid \assertcmd{e} \mid \assumecmd{P(e_1, \dots, e_n)} \mid \assertcmd{P(e_1, \dots, e_n)} \\
\end{align*}

\vspace*{-2ex}
\caption{
The syntax of \lang. The language is deterministic and supports uninterpreted predicates that can be used inside $\assertkw$ and $\assumekw$ statements. Pointer arithmetic (i.e., arithmetic manipulation of $\addr$ variables) is not permitted.}\label{fig:syntax}
\Description{The program types are int or address. The language supports typical statements such as if and while statements, but also supports assume and assert statements, both over expressions, and over uninterpreted predicates.}
\end{figure}

%
The syntax of \lang\ is given in \Cref{fig:syntax}.
In \Cref{fig:syntax}, \(\text{op}\) includes standard arithmetic (\(+, -, \times, /\)) and logical (\(<, \leq, >, \geq, =, \neq, \land, \lor\)) operators, while \(\text{unary-op}\) includes the unary operators (\(-, \neg\)). The language also supports constructor ($\textit{ctor}$), selector ($\textit{sel}$) and tester ($\textit{is-ctor}$) expressions for ADTs with their semantics as defined in the SMT-LIB standard~\cite{BarFT-RR-17}.
Pointer arithmetic over $\addr$ variables is not permitted. We use the shorthand $\cmd{\ifshortcmd{e}{\statements}}$ for the statement $\cmd{\ifcmd{e}{\statements}{\skipcmd}}$.


We define the (partial) big-step evaluation function $\bigstepeval$ relative to the fixed interpretation $\interp$ as follows:
\[
\bigstepeval : \statements \times \stacks \times \mathit{Heap} \to \{\success,\error(P,\bar{v})\}\times \stacks\times \mathit{Heap}
\]
where $\success$ represents a successful evaluation and $\error(P, \bar{v})$ represents a failed evaluation due to a failing assertion over the atom $P(\bar{v})$. We use a special 0-ary predicate symbol $F$ to represent assertion failures over expressions. The function~$\bigstepeval$ is partial because it is undefined for inputs where an $\assumekw$ statement fails or a $\whilekw$ statement does not terminate; the semantics is given in \Cref{tbl:semantics-compact}.

\begin{table}
  \caption{The big-step operational semantics of \lang, defined using the partial big-step evaluation function~$\bigstepeval$, relative to the fixed interpretation $\interp$. The definition for the $\whilekw$ statement is given by a recursive equation that unfolds the loop body until the guard becomes false. This equation determines $\bigstepeval$ only for terminating executions; for non-terminating loops or failed $\assumekw$ statements, $\bigstepeval$ is undefined. $C$, $S$, $S_1$ and $S_2$ range over \lang\ statements, $s$ over stacks, and $h$ over heaps.}\label{tbl:semantics-compact}
\begin{tabular}{@{}ll@{}}
\toprule
\textbf{Statement $C$} & $\bigstepeval(C, s, h)$ \\
\midrule

$\assigncmd{x}{e}$
  & $\bigl(\success,\upd{x}{\sem{e}},h\bigr)$ \\

$\alloccmd{p}{e}$
  & $\text{let } (h',a) = \alloc(h,\sem{e}) \text{ in } \bigl(\success,\upd{p}{a},h'\bigr)$ \\

$\readcmd{x}{p}$
  & $\bigl(\success,\upd{x}{\rd(h,s(p))},h\bigr)$ \\

$\writecmd{p}{e}$
  & $\bigl(\success,s,\wt(h,s(p),\sem{e})\bigr)$ \\

$\skipcmd$
  & $\bigl(\success,s,h\bigr)$ \\

$\seqcmd{S_1}{S_2}$
  & $\begin{aligned}[t]
    \text{let } &(\sigma_1,s_1,h_1) = \bigstepeval(S_1,s,h) \\
    \text{in } &\begin{cases}
      (\error(P,\bar{v}),s_1,h_1) & \text{if } \sigma_1 = \error(P,\bar{v})\\
      \bigstepeval(S_2,s_1,h_1) & \text{if } \sigma_1 = \success
    \end{cases}
    \end{aligned}$ \\

$\ifcmd{e}{S_1}{S_2}$
  & $\begin{cases}
    \bigstepeval(S_2,s,h) & \text{if } \sem{e} = 0 \\
    \bigstepeval(S_1,s,h) & \text{otherwise}
    \end{cases}$ \\

$\whilecmd{e}{S}$
& $\begin{cases}
    \bigl(\success,s,h\bigr) & \text{if } \sem{e} = 0 \\
    \begin{aligned}
      \text{let } &(\sigma',s',h') = \bigstepeval(S,s,h) \\
      \text{in } &\begin{cases}
        (\error(P,\bar{x}),s',h') & \text{if } \sigma' = \error(P,\bar{x})\\
        \bigstepeval(C,s',h') & \text{if } \sigma' = \success
      \end{cases}
    \end{aligned} & \text{otherwise}
    \end{cases}$ \\[1ex]

$\assumecmd{e}$
  & $\begin{cases}
    \text{undefined} & \text{if } \sem{e} = 0 \\
    \bigl(\success,s,h\bigr) & \text{otherwise}
    \end{cases}$ \\

$\assumecmd{P(e_1,\dots,e_n)}$
  & $\begin{cases}
    \text{undefined} & \text{if } (\sem{e_i})_i^n \not\in \interp(P) \\
    \bigl(\success,s,h\bigr) & \text{otherwise}
    \end{cases}$ \\

$\assertcmd{e}$
  & $\begin{cases}
    \bigl(\error(F, ()),s,h\bigr) & \text{if } \sem{e} = 0 \\
    \bigl(\success,s,h\bigr) & \text{otherwise}
    \end{cases}$ \\

$\assertcmd{P(e_1,\dots,e_n)}$
  & $\begin{aligned}[t]
    &\begin{cases}
      \bigl(\error(P,(\sem{e_1},\dots,\sem{e_n})),s,h\bigr) & \text{if } (\sem{e_1},\dots,\sem{e_n}) \not\in \interp(P) \\
      \bigl(\success,s,h\bigr) & \text{otherwise}
    \end{cases}
    \end{aligned}$ \\
\bottomrule
\end{tabular}
\end{table}

\begin{definition}[Program Execution]\label{def:execution}
  Given \alang\ program $p$, an interpretation $\interp$ and an initial configuration $(s_0, h_0)$, its \emph{execution} is the derivation
  $\bigstepeval(p,s_0,h_0) = (\sigma,s_f,h_f)$,
  with $\sigma \in \{\success,\error(P,\bar{v})\}$.
\end{definition}


\begin{definition}[Safety]\label{def:safety}
  An \lang\ program $p$ is \emph{safe} if there is some $\interp$ under which no execution of $p$ (with any initial configuration) results in an error.
  Formally,
  \[
  \exists I.\ \forall s \in \stacks.\ \tupleget{\bigstepeval(p, s, \eh)}{1} \neq \error(P, \bar{v}) \quad \text{(for any predicate $P$ and $\bar{v}$).}
  \]
\end{definition}

\begin{definition}[Equi-safety]\label{def:equisafety}
  Two \lang\ programs $p_1$ and $p_2$ are equi-safe if $p_1$ is safe if and only if $p_2$ is safe.
\end{definition}

We call a translation $E: \statements \to \statements$ an \emph{encoding}, and say that an encoding $E$ is \emph{correct} if for every program $p \in \statements$, $p$ and its encoding $E(p)$ are equi-safe. Equi-safety implies both soundness and completeness of the encoding. Specifically: \begin{itemize} \item \emph{Soundness}: If the original program $p$ is unsafe, then its encoding $E(p)$ is also unsafe. \item \emph{Completeness}: If the encoded program $E(p)$ is unsafe, then the original program $p$ is also unsafe. \end{itemize}

\paragraph{Handling of non-determinism.}\label{par:handling-nondet}
It can be observed that \lang{} programs are always deterministic,
which is a prerequisite of our encodings. To translate non-deterministic
programs to \lang{}, we therefore have to rewrite those programs, introducing
additional program inputs to determinize execution. This can be done
using the following macro reading bits from an additional integer
input~$\havocvar$ and setting $x$ to an arbitrary value:
\begin{equation*}
  \dethavoccall{x} \triangleq
  \begin{array}[t]{@{}l@{}}
    x \assignsym -(\havocvar\%2);\;
    \havocvar \assignsym \havocvar / 2;\\
    \whilecmd{\havocvar\%2 = 1}{
    \havocvar \assignsym \havocvar / 2; 
    x \assignsym 2\times x + (\havocvar\%2);
    \havocvar \assignsym \havocvar / 2
    };\\
    \havocvar \assignsym \havocvar / 2
  \end{array}
\end{equation*}
As the bits are consumed from $\havocvar$, calling the macro
repeatedly produces a sequence of independent \emph{havoc}ed values
determined by the initial value of $\havocvar$. This ensures
that every finite execution of a non-deterministic program is modelled by
some execution of the resulting deterministic program.

An alternative
implementation, also discussed in \Cref{sec:eval-impl}, is to add arrays as
a type and read subsequent values from an input array. If non-deterministic
values of different types are required, the types can be boxed into a single
algebraic data-type, or multiple arrays can be used, one for each type. 
These arrays serve solely as sources of non-deterministic values and are never modified by the program.
We refer to the combination of the original program inputs and these
additional auxiliary inputs (either $\havocvar$ or array(s)) as the
\emph{global input} $\inG$. For simplicity, we treat $\inG$ as a single
integer in our presentation.


Our encodings also introduce non-determinism themselves (line~17 in
\Cref{lst:full-encoding}). For consistency, we use the
$\mathit{havoc}$ macro also to represent those statements. In
principle, however, there is no need to keep the program \emph{after}
the encoding deterministic; since verification tools like \seahorn\
and \tricera\ provide native support for $\mathit{havoc}$, it would be
fine (and more efficient) to use such native non-deterministic
statements as part of our encoding.


\subsection{Fixed-Point Interpretation of Predicates}\label{sec:ico}
In our correctness proofs, we will rely on the fact that the semantics of uninterpreted predicates can also be defined through a fixed-point construction, deriving the strongest interpretation in which all $\assertkw$ statements hold. We first define the \emph{immediate consequence operator} that iteratively refines predicate interpretations based on assertion failures.

\begin{definition}[Immediate Consequence Operator]
  For program $p$, the immediate consequence operator $\ico_p: \mathit{Interps} \to \mathit{Interps}$ is defined as:
\begin{equation}\label{eq:ico}
  \ico_p(\interp)(r) = \interp(r) \cup \left\{ 
      \bar{v} \mid \exists s \in \stacks.\ \tupleget{\bigstepeval(p, s, \eh)}{1} = \error(r, \bar{v}) 
  \right\}
\end{equation}
  where $\eh$ denotes the empty heap. For each predicate $r$, $\ico_p$ adds all value tuples $\bar{v}$ to a given interpretation~$\interp$ that cause assertion failures when starting from $\eh$.
\end{definition}

\begin{lemma}[Monotonicity of $\ico$]
  For any program $p$, $\ico_p$ is monotonic: if $\interp_1 \sqsubseteq \interp_2$, then $\ico_p(\interp_1) \sqsubseteq \ico_p(\interp_2)$.
\end{lemma}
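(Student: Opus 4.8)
The plan is to reduce the claim to pointwise set inclusion and then handle the one genuinely non-trivial interaction: in the definition of $\ico_p$ the evaluation is taken \emph{relative to the argument interpretation}, so $\ico_p(\interp_1)$ uses $\bigstepevalp{\interp_1}$ while $\ico_p(\interp_2)$ uses $\bigstepevalp{\interp_2}$. Enlarging the interpretation affects executions only through predicate $\assumekw$ and $\assertkw$ statements, and in opposite-looking ways: a larger interpretation makes $\assumekw$ statements block \emph{less} often (so more executions proceed) and makes $\assertkw$ statements fail \emph{less} often. The key observation to extract is that neither change can introduce a \emph{new}, earlier block or failure along a path that already ran under $\interp_1$, since both would require a tuple lying outside $\interp_2 \supseteq \interp_1$.

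First I would fix a predicate $r$ and a tuple $\bar v \in \ico_p(\interp_1)(r)$ and split on the two summands of $\ico_p$. If $\bar v \in \interp_1(r)$, then $\bar v \in \interp_2(r) \subseteq \ico_p(\interp_2)(r)$ directly from $\interp_1 \sqsubseteq \interp_2$. Otherwise there is a stack $s$ with $\tupleget{\bigstepevalp{\interp_1}(p, s, \eh)}{1} = \error(r, \bar v)$, and I split again on membership in $\interp_2(r)$: if $\bar v \in \interp_2(r)$ we are again done via the first summand; if $\bar v \notin \interp_2(r)$, I must show that the \emph{same} execution, rerun under $\interp_2$, still fails with $\error(r, \bar v)$.

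The crux is therefore an auxiliary execution-monotonicity lemma, which I would prove by mutual structural induction on the big-step derivation, in two parts tracking the two effects above: (i) \emph{success preservation} — if $\interp_1 \sqsubseteq \interp_2$ and $\bigstepevalp{\interp_1}(C, s, h) = (\success, s', h')$, then $\bigstepevalp{\interp_2}(C, s, h) = (\success, s', h')$; and (ii) \emph{error preservation} — if additionally $\bigstepevalp{\interp_1}(C, s, h) = (\error(r, \bar v), s', h')$ with $\bar v \notin \interp_2(r)$, then $\bigstepevalp{\interp_2}(C, s, h) = (\error(r, \bar v), s', h')$. The assignment, read, write, alloc, skip, conditional, and the \emph{concrete} $\assumekw$/$\assertkw$ cases are immediate, as they do not consult $\interp$ (the conditional reuses the same guard and the IH). The informative base cases are the predicate statements, where $\interp_1(P) \subseteq \interp_2(P)$ gives exactly that a non-blocking $\assumekw$ stays non-blocking and a passing $\assertkw$ stays passing (for (i)), while the failing $\assertkw$ of (ii) still fails precisely because its tuple is assumed to lie outside $\interp_2(r)$.

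I expect the $\seqcmd{S_1}{S_2}$ and $\whilekw$ cases to be the main obstacle, since they are where the two parts must be invoked together. In $\seqcmd{S_1}{S_2}$, if $S_1$ succeeds under $\interp_1$ I need part (i) to carry that success, with identical final state, to $\interp_2$ before applying the IH to $S_2$; whereas if $S_1$ already errors I apply part (ii) directly. The $\whilekw$ loop is handled the same way along the finite unfolding guaranteed by termination under $\interp_1$, inducting on the size of the derivation tree. Once the lemma is established, part (ii) discharges the remaining case of the monotonicity argument, and combining the three cases yields $\ico_p(\interp_1)(r) \subseteq \ico_p(\interp_2)(r)$ for every $r$, i.e.\ $\ico_p(\interp_1) \sqsubseteq \ico_p(\interp_2)$.
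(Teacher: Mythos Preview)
Your proposal is correct and follows essentially the same outline as the paper: split on the two summands of $\ico_p(\interp_1)(r)$, and in the nontrivial case where an execution errors under $\interp_1$ with $\bar v\notin\interp_2(r)$, argue that the same execution still errors under $\interp_2$. The paper asserts this last step in one sentence (``the same execution would still result in $\error(r,\bar v)$''), whereas you spell it out as an explicit success/error-preservation lemma proved by structural induction on the big-step derivation; your version is more rigorous but not a different route.
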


\begin{proof}
  Assume $\interp_1 \sqsubseteq \interp_2$ (i.e., $\forall r.\; \interp_1(r) \subseteq \interp_2(r)$). Fix an arbitrary predicate $r$, for $i \in \{1, 2\}$ let
  \[
    Q_i = \left\{ \bar{v} \mid \exists s \in \stacks.\ \tupleget{(\bigstepevalp{\interp_i}(p, s, \eh))}{1} = \error(r, \bar{v}) \right\}.
  \]
  By \eqref{eq:ico}, $\ico_p(\interp_i)(r) = \interp_i(r) \cup Q_i$. To show that $\ico$ is monotonic, we need to show $\interp_1(r) \cup Q_1 \subseteq \interp_2(r) \cup Q_2$.

  Let $\bar{v} \in \interp_1(r) \cup Q_1$. If $\bar{v} \in \interp_1(r)$, then $\bar{v} \in \interp_2(r)$ since $\interp_1 \sqsubseteq \interp_2$. If $\bar{v} \in Q_1$, there exists $(s, h)$ where $\bigstepevalp{\interp_1}(p, s, h)$ results in $\error(r, \bar{v})$. If $\bar{v} \notin \interp_2(r)$, then under $\interp_2$, the same execution would still result in~$\error(r, \bar{v})$, so $\bar{v} \in Q_2$. Thus, $\interp_1(r) \cup Q_1 \subseteq \interp_2(r) \cup Q_2$.
\end{proof}

\paragraph{Fixed-Point Construction}
Starting from $\interp_0$ where $\interp_0(r) = \emptyset$ for all predicates $r$, we iteratively compute a sequence~$\interp_0, \interp_1, \interp_2, \ldots$ by $\interp_{i+1} = \ico_p(\interp_i)$.
By Tarski's fixed-point theorem~\cite{tarski-fp}, the monotonicity of $\ico$ guarantees the existence of a least fixed point~$\interp^*$ of $\ico$, which is the limit of this sequence. All assertions of the form $\mathtt{assert}(r(\bar{v}))$ in the program hold under $\interp^*$, as any violating arguments $\bar{v}$ would eventually have been included in $\interp(r)$ by $\ico$. Other assertions, over concrete program properties, not over uninterpreted predicates, may still fail under $\interp^*$.



While our setting involves programs with uninterpreted predicates, the operator $\ico$ mimics the immediate consequence operator for CHCs~\cite{DBLP:journals/tplp/AngelisFGHPP22}, and the least fixed point in our setting corresponds to the least model (or least solution) of a set of CHCs. In practice, programs with uninterpreted predicates can be encoded into CHCs (this is supported by Horn-based model checkers \seahorn~\cite{seahorn,DBLP:conf/ecoop/WesleyCNTWG24} and \tricera~\cite{tricera}), allowing the use of off-the-shelf Horn solvers.

\begin{lemma}[Safety under $\interp^*$]\label{lem:safety-under-i}
  Let $p$ be \alang\ program, and $\interp^*$ be the least fixed point of the immediate consequence operator $\ico_p$. The program~$p$ is \emph{safe} if and only if it is safe under $\interp^*$. Formally,
  \begin{equation}
  \exists I.\ \forall s \in \stacks.\ \tupleget{\bigstepeval(p, s, \eh)}{1} \neq \error(P, \bar{v}) \quad \text{(for any predicate $P$ and $\bar{v}$)}\label{eq:safety-under-i}
  \end{equation}
  if and only if
  \begin{equation}
    \forall s \in \stacks.\ \tupleget{\bigstepevalp{\interp^*}(p, s, \eh)}{1} \neq \error(P, \bar{v}) \quad \text{(for any predicate $P$ and $\bar{v}$).}\label{eq:safety-under-istar}
  \end{equation}
\end{lemma}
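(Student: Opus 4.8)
The plan is to prove the two implications separately. The direction \eqref{eq:safety-under-istar} $\Rightarrow$ \eqref{eq:safety-under-i} is immediate: $\interp^*$ itself serves as a witness for the existentially quantified interpretation in \eqref{eq:safety-under-i}. The converse direction \eqref{eq:safety-under-i} $\Rightarrow$ \eqref{eq:safety-under-istar} is the substantial one, and I would carry it out in three steps, the first of which isolates $\interp^*$ below every safe interpretation, the second of which rules out assertion failures over uninterpreted predicates under $\interp^*$, and the third of which transports a concrete assertion failure under $\interp^*$ to one under an arbitrary safe interpretation.

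\emph{Step 1 (any safe interpretation is a fixed point of $\ico_p$).} Assume $p$ is safe, witnessed by some $I$, so that $\tupleget{\bigstepevalp{I}(p, s, \eh)}{1} \neq \error(P, \bar v)$ for all $s \in \stacks$ and all $P, \bar v$. Then for every predicate $r$ the set adjoined to $I(r)$ on the right-hand side of \eqref{eq:ico} is empty, hence $\ico_p(I) = I$; since $\interp^*$ is the least fixed point of $\ico_p$, we obtain $\interp^* \sqsubseteq I$.

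\emph{Step 2 (no uninterpreted-predicate failure under $\interp^*$).} Suppose toward a contradiction that $p$ is not safe under $\interp^*$, i.e.\ $\bigstepevalp{\interp^*}(p, s, \eh) = (\error(r, \bar v), s_f, h_f)$ for some $s \in \stacks$. If $r$ is an uninterpreted predicate, then inspecting the rules in \Cref{tbl:semantics-compact} shows that this error can only originate from executing an $\assertcmd{r(e_1, \dots, e_n)}$ statement with arguments $\bar v = (\sem{e_1}, \dots, \sem{e_n}) \notin \interp^*(r)$ (the sequencing and \textbf{while} rules merely propagate such errors). But the same execution witnesses $\bar v \in \ico_p(\interp^*)(r)$ by \eqref{eq:ico}, and $\ico_p(\interp^*)(r) = \interp^*(r)$ because $\interp^*$ is a fixed point of $\ico_p$ — a contradiction. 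Hence the error must be $\error(F, ())$, i.e.\ the failure of a concrete $\assertcmd{e}$ statement.

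\emph{Step 3 (monotonicity of execution and conclusion).} I would prove, by structural induction on the big-step derivation, the following lemma: if $\interp_1 \sqsubseteq \interp_2$ and $\bigstepevalp{\interp_1}(S, s, h) = (\sigma, s', h')$ with $\sigma \in \{\success, \error(F, ())\}$, then also $\bigstepevalp{\interp_2}(S, s, h) = (\sigma, s', h')$. Restricting $\sigma$ to $\success$ or $\error(F, ())$ is exactly what makes this go through: along such a derivation every executed $\assumecmd{P(\cdots)}$ was passed and every executed $\assertcmd{P(\cdots)}$ succeeded, so the corresponding argument tuples lie in $\interp_1(P) \subseteq \interp_2(P)$ and the larger interpretation makes the same choices at those statements, while all other statements (assignments, allocations, reads, writes, concrete assumes and asserts) and all \textbf{if}/\textbf{while} guards are interpretation-independent. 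Instantiating this lemma with $\interp_1 = \interp^*$, $\interp_2 = I$, $h = \eh$, and the execution from Step 2 gives $\bigstepevalp{I}(p, s, \eh) = (\error(F, ()), s_f, h_f)$, contradicting the safety of $p$ under $I$; this establishes \eqref{eq:safety-under-i} $\Rightarrow$ \eqref{eq:safety-under-istar}. I expect the monotonicity lemma to be the main obstacle, in particular the cases for sequential composition and \textbf{while}: one must observe that any error produced inside the first component (resp.\ the loop body) that is not of the form $\error(F, ())$ would propagate out as the result of the whole statement, contradicting $\sigma \in \{\success, \error(F, ())\}$, so that the induction hypothesis always applies to the sub-derivations; for \textbf{while} one additionally relies on a terminating run unfolding into finitely many iterations so that the induction is well-founded. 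The remaining cases are a routine traversal of \Cref{tbl:semantics-compact}.
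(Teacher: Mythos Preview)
Your proposal is correct and follows essentially the same approach as the paper: both directions are handled identically, and for the substantive direction you use the fixed-point property of $\interp^*$ to rule out uninterpreted-predicate failures and then transport a concrete failure to the safe interpretation $I$. Your version is actually more explicit than the paper's---you isolate $\interp^* \sqsubseteq I$ as a separate step and state the monotonicity-of-execution lemma precisely, whereas the paper leaves both implicit in its claim that ``if this [assume] passes under~$\interp^*$, then under $\interp$ it will pass too.''
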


\begin{proof}
  ($\Rightarrow$) Assume~\eqref{eq:safety-under-i} for some $\interp$. No $\assertkw$ statement fails in $p$ under $\interp$ for any initial configuration. 
  An $\assertkw$ statement can be (i) over uninterpreted predicates or (ii) over expressions.
  Under the fixed point $\interp^*$, no $\assertkw$ over an uninterpreted predicate can fail, therefore we only need to show that no $\assertkw$ over an expression can fail under $\interp^*$. The evaluation of an $\assertkw$ over an expression can differ only if an $\assumekw$ statement over an uninterpreted predicate passes for different predicate arguments under the two interpretations; however, this is not possible in a deterministic program.
  %
  Consider the statement $\assumecmd{P(\bar{v})}$. If this statement passes under~$\interp^*$, then under $\interp$ it will pass too and the execution will continue under the same post-state. If the $\assumekw$ fails the result is undefined, and no $\assertkw$ statement can fail.

  ($\Leftarrow$) Assume~\eqref{eq:safety-under-istar}. Choose $\interp$ to be $\interp^*$, and this direction trivially holds.
\end{proof}

\section[The Time-Indexed Heap Invariant Encoding R (EncR)]{The Time-Indexed Heap Invariant Encoding $R$ ($\encR$)}\label{sec:r-encoding}
Given \alang\ program $p$, the $R$ encoding $\encR$ rewrites $p$ into an \emph{equi-safe} \lang\ program~$\encR(p)$ that is free of $\addr$ variables and the heap operations $\readkw$, $\writekw$ and $\allockw$. In $p$ we assume the variable $\inG$ represents the (arbitrarily chosen) program input. Note that we assume $\inG$ includes any additional inputs, such as $\havocvar$,
that were introduced to make the program deterministic 
(see \Cref{par:handling-nondet}). The encoding $\encR$ introduces an
uninterpreted predicate $R$ with the signature
$R : (\mathit{in} : \integer, \mathit{cnt} : \integer, \mathit{obj} : \obj)$
%
%
where the first argument is always the program input $\inG$, the second argument is the $\cnt$ value of the read, and the third argument is the heap object of type $\obj$.

\subsection[Rewriting p into EncR(p)]{Rewriting $p$ into $\encR(p)$}\label{sec:r-encoding-rules}
$\encR$ rewrites $p$ through the following steps (in order):
\begin{itemize}
\item First, all $\addr$ variables in $p$ are redeclared as $\sortint$ variables with the same names. Casting of $\addr$ values is not needed, because only $\allockw$ and assign ($\assignsym$) statements modify an $\addr$ variable (recall that arithmetic over $\addr$ variables is not permitted). This step is needed, because we model allocation by incrementing an $\sortint$ variable ($\allocctr$) and assigning its value to the allocated variable. This mirrors the semantics of the $\addr$ sort (interpreted as $\mathbb{N}$) and the $\alloc$ operation in the theory of heaps (\Cref{tbl:heap-ops}). The resulting intermediate program is not well-typed if it contains any heap operations, which will be fixed after the final rewriting step.
\item Next, the fresh auxiliary variables $\allocctr$, $\cnt$, $\lastn$ and $\pg$ are introduced, and some of them are initialised by adding the statement in the ``Initialisation'' row of \Cref{tbl:r-encoding-rules} for $\encR$ to the start of the program from the previous step.
The uninitialised variable $\pg$ will be assigned an arbitrary value by the stack in the initial configuration.
\item Finally, rewrite rules are applied to replace occurrences of $\allockw$, $\readkw$, $\writekw$ in $p$ with the code snippets in the middle column of \Cref{tbl:r-encoding-rules}.
\end{itemize}

\begin{table}
  \caption{Rewrite rules for the R ($\encR$) and RW ($\encRW$) encodings. The rules are applied once to every statement in the input program $p$, after redeclaring all $\addr$ variables as $\sortint$s and adding auxiliary variables.}\label{tbl:r-encoding-rules}
\begin{tabular}{lll}
  $p$ statement & $\encR(p)$ statement & $\encRW(p)$ statement \\\toprule
  Initialisation 
    & $\begin{array}[t]{@{}l@{}}
        \assigncmd{\allocctr}{0};\ \assigncmd{\cnt}{0};\ \assigncmd{\lastn}{\defObj}
      \end{array}$
    & $\begin{array}[t]{@{}l@{}}
        \assigncmd{\allocctr}{0};\ \assigncmd{\cnt}{0};\ \assigncmd{\lastwtcnt}{0};\\
        \assigncmd{\readtmpcnt}{0};\ \assertcmd{W(\inG, 0, \defObj)}
      \end{array}$\\\midrule
  %
  $\alloccmd{p}{e}$  &
  $ \begin{array}[t]{@{}l@{}}
      \assigncmd{\allocctr}{\allocctr+1} ;\ 
      \assigncmd{p}{\allocctr} ;\\
      \ifshortcmd{\pg = p}
      {\assigncmd{\lastn}{e}}
    \end{array}
  $&
  $ \begin{array}[t]{@{}l@{}}
      \assigncmd{\allocctr}{\allocctr+1} ;\ 
      \assigncmd{p}{\allocctr} ;\\
      \assigncmd{\cnt}{\cnt + 1} ;\\
      \assertcmd{W(\inG, cnt, e)};\\
      \ifshortcmd{\pg = p}{\assigncmd{\lastwtcnt}{\cnt}}
    \end{array}
  $
  \\\midrule
  %
  $\readcmd{x}{p}$  &
  $\begin{array}[t]{@{}l@{}}
    \assigncmd{\cnt}{\cnt+1};\\       
    \ifcmd{\pg = p}
      { 
        \\ \ \ \assertcmd{R(\inG, \cnt, \lastn)};\\
        \ \ \assigncmd{\readresult}{\lastn} \\
      } {\\ 
         \ \ \dethavoccall{\readresult};\\
         \ \ \assumecmd{R(\inG, \cnt, \readresult)}\\ 
      }
  \end{array}
  $&
  $\begin{array}[t]{@{}l@{}}
    \assigncmd{\cnt}{\cnt+1};\\       
    \ifcmd{\pg = p}
      { 
        \\ \ \ \assertcmd{R(\inG, \cnt, \lastwtcnt)};\\
        \ \ \assigncmd{\readtmpcnt}{\lastwtcnt}\\
      } {\\ 
      \ \ \dethavoccall{\readtmpcnt};\\
      \ \ \assumecmd{R(\inG, \cnt, \readtmpcnt)}\\
      };\\
      \dethavoccall{x};\\
      \assumecmd{W(\inG, \readtmpcnt, x)};\\
  \end{array}$
  \\\midrule
  %
  $\textbf{write}(p,e)$   &
  $\begin{array}[t]{@{}l@{}}
    \ifshortcmd{\pg = p \wedge 0 < p \leq \allocctr}
    {
      \\\ \ \assigncmd{\lastn}{e} \\
    }
  \end{array}
  $&
  $\begin{array}[t]{@{}l@{}}
    \assigncmd{\cnt}{\cnt+1};\\ 
    \ifshortcmd{0 < p \leq \allocctr}
    {\\
    \ \ \assertcmd{W(\inG, \cnt, e)}\\
    \ \ \ifshortcmd{\pg = p} {\assigncmd{\lastwtcnt}{\cnt}};\\
    }
  \end{array}$\\\bottomrule
\end{tabular}
\end{table}

\subsection[Correctness of EncR]{Correctness of $\encR$}\label{sec:r-encoding-proof}
We show the correctness of $\encR$ by showing that given \alang\ program $p$, $p$ is \emph{equi-safe} with $\encR(p)$. The core observation used in the proof is that the relation~$R$, obtained as the least fixed point of the immediate consequence operator, correctly represents the values read from the heap. 
The detailed proofs and proof sketches of the lemmas in this section are provided in \Cref{app:proof-r}.

We prove the correctness of $\encR$ in multiple steps; the first step is to show that the relation $R$ is a \emph{partial function} that maps the program inputs~$\inG$ and the read count~$\cnt$ to the value that is read:
\begin{lemma}[Functional consistency of $\interp^*(R)$]
  \label{prop:R-partial}
  Let $\interp^*$ be the least fixed point of the immediate consequence operator $\ico_p$ for \alang\ program $p=\encR(q)$ obtained as the $R$ encoding of some program~$q$. Then $\interp^*(R)$ is a partial function from its first two arguments to its third argument:
  \begin{equation}
    \label{eq:functional}
    \forall g, n, v_1, v_2.\ \Bigl( (g, n, v_1) \in \interp^*(R) \wedge (g, n, v_2) \in \interp^*(R) \Bigr)
      \Longrightarrow v_1 = v_2.
  \end{equation}
\end{lemma}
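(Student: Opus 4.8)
\emph{Proof plan.} The only statement of $\encR(q)$ that can add a tuple to the interpretation of $R$ is the assertion $\assertcmd{R(\inG,\cnt,\lastn)}$ in the $\encR$-rewriting of $\readkw$ (\Cref{tbl:r-encoding-rules}), which is reached precisely on the branch $\pg = p$, i.e.\ when the prophecy variable $\pg$ coincides with the address~$p$ being read. My plan is to prove that whenever a tuple $(g,n,v)$ lies in $\interp^*(R)$, the object~$v$ is \emph{forced} to be the object that the \emph{original} program $q$ reads at its $n$-th read on input $\inG = g$. Since $q$ is deterministic (\Cref{fig:syntax}), that object is unique, so \eqref{eq:functional} follows at once. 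Fix an input $g$, let $q(g)$ be the (deterministic) execution of $q$, and, when $q(g)$ performs a $k$-th read, write $a_k^g$ for the address read and $w_k^g := \rd(H_k^g, a_k^g)$ for the returned object, where $H_k^g$ is the heap just before that read.

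I would establish the above by strong induction on $n \ge 1$, with hypothesis $P(n)$: \emph{for every input $g$ and every $k$ with $1 \le k \le n$, if $(g,k,v) \in \interp^*(R)$ then $q(g)$ performs a $k$-th read and $v = w_k^g$.} The lemma then follows: $\cnt$ starts at $0$ and is incremented before each read, so $\interp^*(R)$ contains no tuple with $n \le 0$; and for $n \ge 1$, applying $P(n)$ to $(g,n,v_1),(g,n,v_2) \in \interp^*(R)$ gives $v_1 = w_n^g = v_2$.

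For the inductive step I would argue as follows. Since $\interp^*(R) = \bigcup_i \interp_i(R)$, if $(g,n,v) \in \interp^*(R)$ then, by the definition of the immediate consequence operator~\eqref{eq:ico}, some stage $\interp_i$ and some stack $s$ with $s(\inG) = g$ satisfy that $\bigstepevalp{\interp_i}(\encR(q),s,\eh)$ reaches $\assertcmd{R(\inG,\cnt,\lastn)}$ with $\cnt = n$, $\lastn = v$ as its \emph{first} failing assertion (using that $\inG$ is never reassigned). Because $\cnt$ is incremented once per $\readkw$, exactly the reads $1,\dots,n-1$ occur before this point, all at counts $< n$; on any of them that takes the else-branch, the value picked by $\dethavockw$ lies in $\interp_i(R) \subseteq \interp^*(R)$ at a count $< n$, so by $P(n-1)$ it is uniquely pinned down and equals the corresponding $w_k^g$. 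Hence the same $s$ produces, under $\interp^*$, an execution along the \emph{identical} control-flow path up to that assertion, which now passes (as $(g,n,v) \in \interp^*(R)$), with no earlier assertion failing. I would then show this $\interp^*$-execution \emph{mirrors} $q(g)$ statement by statement up to the $n$-th read: the values of $q$'s variables (including former $\addr$-variables, now declared $\sortint$) evolve identically, because $\alloccmd{p}{e}$ is simulated by incrementing $\allocctr$ under the invariant $\allocctr = |h|$, matching $\alloc(h,\cdot) = \pair{h \doubleplus [\cdot]}{|h|+1}$ from \Cref{tbl:heap-ops}; and the bookkeeping variable maintains $\lastn = \rd(h, s(\pg))$, which one checks against the interpretations of $\eh$, $\alloc$, $\wt$, $\rd$ (the guard $0 < p \le \allocctr$ of the $\writekw$-rule mirrors $\wt$ being a no-op on invalid addresses; else-branch reads at counts $< n$ return the correct object by $P(n-1)$). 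At the $n$-th read we are on the branch $\pg = p$, so $s(\pg) = a_n^g$ and therefore $\lastn = \rd(H_n^g, a_n^g) = w_n^g$; since $\lastn = v$ there, $v = w_n^g$, which proves $P(n)$.

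The part I expect to be hardest is making the two informal steps --- ``lifting'' the witnessing execution from $\interp_i$ to $\interp^*$, and showing it ``mirrors'' $q(g)$ --- fully rigorous. The subtleties are: (i)~a tuple may first appear at a stage whose interpretation of $R$ is not yet functionally consistent, so one needs $P(n-1)$ to argue the witnessing execution is unaffected when the interpretation grows to $\interp^*$; (ii)~the $\dethavockw$ that the encoding itself introduces in the else-branch of $\readkw$ must be shown not to let two runs on the same input diverge before the $n$-th read, which again reduces to $R$ being functional at counts $< n$; and (iii)~the invariants $\allocctr = |h|$ and $\lastn = \rd(h, s(\pg))$ have to be threaded through every statement form. (If $q$ itself uses uninterpreted predicates besides $R$, note $\encR$ leaves them untouched, so the same mirroring works verbatim with those predicates interpreted identically in $q$ and $\encR(q)$.)
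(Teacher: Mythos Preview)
Your proposal is correct, but it takes a genuinely different route from the paper. The paper does \emph{transfinite induction on the fixed-point stage}~$\alpha$, maintaining as invariants that $\interp_\alpha(R)$ is a partial function, that every tuple in $\interp_\alpha(R)$ has count $\leq \alpha$, and that in any execution under $\interp_\alpha$ the values $s'(\lastn)$ and (on failure) $s'(\pg)$ are uniquely determined by $s(\inG)$. The argument stays entirely within the encoded program $\encR(q)$: it never mentions the original heap semantics of $q$ at all, and concludes only the weak statement that at each stage at most one new tuple per input is added.

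Your induction is on the \emph{read counter}~$n$ instead, and your hypothesis $P(n)$ is strictly stronger: you show that any $(g,n,v)\in\interp^*(R)$ must equal the object $w_n^g$ that $q$ itself reads at its $n$-th $\readkw$. In effect you are proving functional consistency and a large fragment of \Cref{lem:encR-preservation} (the invariants $\allocctr = |h|$ and $\lastn = \rd(h,s(\pg))$, plus the agreement on $q$'s variables) in one shot. What this buys you is a semantically transparent proof: it is immediately clear \emph{why} $R$ is functional, namely because it records exactly the deterministic reads of $q$. What the paper's decomposition buys is modularity: the lighter Lemma here can be established without any reference to $q$'s heap, and the heavier simulation argument is deferred to \Cref{lem:encR-preservation}, where it is needed anyway and where the already-proved functionality of $R$ becomes an available tool rather than part of the induction load. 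Your approach therefore front-loads the work that the paper spreads across two lemmas; both are sound, and neither is obviously shorter once fully written out. The points you flag as hardest (lifting from $\interp_i$ to $\interp^*$, and threading the simulation invariants) are exactly the places where the paper's separate preservation lemma does the heavy lifting.
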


The lemma follows from the shape of the code snippet introduced by $\encR$ for \readkw{} statements and can be proved by induction on the iteration count~$\alpha$ in approximations $\interp_\alpha$ of the least fixed point~$\interp^*$. We use this lemma in the proof of \Cref{lem:encR-preservation}.

As the next step for proving the correctness of $\encR$, we introduce an intermediate encoding~$\encN$, and show that $p$ is equi-safe with $\encN(p)$. The purpose of this encoding is to introduce a counter that is incremented by each encoded statement, which we use in the inductive proof of correctness for $\encR$.

\paragraph{The Encoding $\encN$.}
Given \alang\ program $p$, $\encN(p)$ is obtained by introducing an $\sortint$ variable $\encNc$,
and before every statement $S$ over one of $\{\writekw, \allockw, \readkw\}$ in $p$, inserting
  \[\assigncmd{\encNc}{\encNc - 1};\ \assumecmd{\encNc \ge 0}.\]
Starting from the same stack $s$ with $n = s(c)$, the executions of $p$ and $\encN(p)$ will remain identical up until the $(n+1)$-th evaluation of any statement $S$ over one of $\{\writekw, \allockw, \readkw\}$ (apart from the value of $\encNc$ in $\encN(p)$), after which the evaluation of $\assumekw$ that was inserted right before $S$ will fail and program execution is stopped.

\begin{lemma}[$\encN$ is correct]\label{lem:encN-proof}
  Let $p$ be \alang\ program, and $\interp$ some interpretation, then $p$ and $\encN(p)$ are equi-safe, i.e.,
  $p$ is safe if and only if $\encN(p)$ is safe.
\end{lemma}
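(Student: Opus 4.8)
I would prove the stronger, per-interpretation claim: for \emph{every} fixed interpretation $\interp$, the program $p$ admits an error execution (from some initial stack, starting from $\eh$) if and only if $\encN(p)$ does. Equi-safety then follows by negating this equivalence and taking the existential over $\interp$ that defines ``safe''. Everything reduces to a \emph{simulation lemma} formalising the remark preceding the statement: since $\encNc$ is fresh, the inserted code $\assigncmd{\encNc}{\encNc-1};\ \assumecmd{\encNc \ge 0}$ touches neither the heap nor any variable of $p$, so, as long as the guard keeps passing, an execution of $\encN(p)$ agrees with the corresponding execution of $p$ on all of $\progvars(p)$ and on the heap — and the guard passes for exactly the first $s_0(\encNc)$ heap statements, after which $\bigstepeval$ becomes \emph{undefined} because the $\assumekw$ blocks. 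It is crucial that a blocked $\assumekw$ is undefined rather than an error: truncating an execution this way can only remove error behaviours, never create them.

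I would prove the simulation lemma by structural induction on the statement $S$, generalising over $s$, $h$, $\interp$ and the counter value $n \ge 0$, with the $\whilekw$ case handled by a secondary induction on the height of the (finite) big-step derivation, following the recursive unfolding in \Cref{tbl:semantics-compact}. The invariant to carry is: if $\bigstepevalp{\interp}(S,s,h)$ is defined, with result $(\sigma,s',h')$, and its derivation executes exactly $k$ heap statements in total, then for every $n \ge k$ we have $\bigstepevalp{\interp}(\encN(S), s[\encNc\mapsto n], h) = (\sigma, s'[\encNc\mapsto n-k], h')$; otherwise (more than $n$ heap statements would be executed, or $\bigstepevalp{\interp}(S,s,h)$ is itself undefined) $\bigstepevalp{\interp}(\encN(S), s[\encNc\mapsto n], h)$ is undefined as well. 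The assignment, $\assumekw$ and $\assertkw$ cases are immediate because $\encN$ leaves them untouched and they do not mention $\encNc$; the three heap-statement base cases consume one unit of the counter; and the $\seqcmd{S_1}{S_2}$ case combines the counts $k_1$ and $k_2$ additively — here it matters that the invariant records the \emph{exact} number of heap statements consumed, not merely a bound.

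Given the lemma, the two implications are short. If $p$ has an error execution under $\interp$ from $s_0$ with $\eh$, its derivation is preceded by some finite number $k$ of heap statements; instantiating the lemma at the stack $s_0[\encNc\mapsto k]$ (a legitimate element of $\stacks$) produces the same error in $\encN(p)$ under $\interp$. Conversely, given an error execution of $\encN(p)$ under $\interp$ from some $s_0'$: every heap statement reached along it was guarded by a passing $\assumekw$, so the error is reached while the counter is still non-negative, i.e.\ within the prefix on which $\encN(p)$ and $p$ coincide; hence running $p$ from $s_0'$ (restricted to $\progvars(p)$) with $\eh$ reproduces the same error. Therefore $p$ errors under $\interp$ iff $\encN(p)$ does, which is exactly equi-safety.

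The part needing the most care is the precise formulation and induction for the simulation lemma rather than any individual step: because $\encN$ inserts guards at arbitrary nesting depth, the induction invariant must thread the ``heap statements consumed'' counter cleanly through sequencing and through loop unfolding, and the loop case needs a well-founded argument (derivation height) rather than plain structural induction. Two secondary but easy-to-get-wrong points are the off-by-one — with $\encNc = n$ initially, the first \emph{blocked} heap statement is the $(n{+}1)$-st — and keeping rigorously separate the outcomes ``error'' and ``undefined'', since conflating them would destroy precisely the asymmetry that makes one-sided truncation safety-preserving.
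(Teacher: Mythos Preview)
Your proposal is correct and follows essentially the same approach as the paper: set the counter large enough to replay an error from $p$ in $\encN(p)$, and observe that the added $\assumekw$s can only block, never introduce errors, so any error in $\encN(p)$ is already present in $p$. The paper's own proof is considerably less formal---it simply states these two observations directly, without the explicit simulation lemma or the structural/derivation-height induction you spell out.
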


Finally, we state the last lemma needed to show the correctness of $\encR$.

\begin{lemma}[Preservation of final states by $\encR$]\label{lem:encR-preservation}
  Let $p$ be \alang\ program, and $\interp^*$ be the least fixed point of $\ico_{p}$. We use abbreviations $\pstar = \encN(p)$ and $\pstarstar = \encR(\pstar)$,
 %
  $\bigstepevalp{\interp^*}(\pstar, s, \eh) = (\sigma_1, s_1, h_1)$, and
  $\bigstepevalp{\interp^*}(\pstarstar, s, \eh) = (\sigma_2, s_2, h_2)$.
  Then the following holds:
  \vspace{-1em}
\begin{equation}
\begin{array}{c@{\quad}l}
\vcenter{\hbox{$\forall s \in \stacks,\, a \in \integer.\; n = s(\encNc) \wedge a = s(\pg) \Rightarrow$}} &
\begin{aligned}
\overbrace{\sigma_1 = \sigma_2}^{\foutcomes} \wedge
\overbrace{\forall v \in \progvars(\pstar).\; s_1(v) = s_2(v)}^{\fstacks} \wedge\\
\underbrace{\rd(h_1, a) = s_2(\lastn)}_{\freads} \wedge
\underbrace{|h_1| = s_2(\allocctr)}_{\fallocs}
\end{aligned}
\end{array}.
\label{eq:r-preservation}
\end{equation}

\end{lemma}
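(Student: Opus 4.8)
The plan is to establish the four conjuncts $\foutcomes$, $\fstacks$, $\freads$, $\fallocs$ of~\eqref{eq:r-preservation} \emph{simultaneously}, treating them as an invariant that $\pstar$ and $\pstarstar$ maintain while running in lockstep. The guiding observation is that everything $\encR$ introduces --- the variables $\allocctr$, $\cnt$, $\lastn$, $\pg$ together with the guards over them --- never affects the control flow of $\pstar$ nor the values its original variables take, while the budget counter $\encNc$ added by $\encN$ can only ever \emph{block} execution; hence the two programs agree on $\progvars(\pstar)$ throughout, and the actual work is to show that the heap, $\lastn$ and $\allocctr$ remain synchronized across each $\encR$-expanded block. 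Accordingly I would run the induction on the budget $n = s(\encNc)$ --- which is precisely why $\encN$ was interposed, since it bounds the number of heap statements that may execute, and hence the number of updates to $\cnt$ and $\allocctr$ --- generalizing over all initial stacks and, crucially, over \emph{all} values $a = s(\pg)$ of the prophecy variable at once. (Here $\interp^*$ should be understood as the least fixed point for $\pstarstar$, the program in which $R$ actually occurs.)

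First I would clear the routine cases. The non-heap statements --- $\assigncmd{x}{e}$, $\skipkw$, sequencing, $\ifkw$, $\whilekw$, and $\assumekw$/$\assertkw$ not mentioning $R$ --- are left untouched by $\encR$ (up to the harmless redeclaration of $\addr$ variables as $\sortint$), so each preserves all four conjuncts by a direct appeal to the induction hypothesis, and the common prefix $\assigncmd{\encNc}{\encNc - 1};\ \assumecmd{\encNc \ge 0}$ sitting in front of a heap statement is identical in both programs and merely enacts the budget decrement. For $\writecmd{p}{e}$ and $\alloccmd{p}{e}$ I would split on whether $a = \pg$ coincides with the written or allocated address (and, for writes, whether $0 < p \le \allocctr$), use $\fallocs$ to recast that bounds test as $0 < p \le |h_1|$, and invoke the interpretations of $\wt$ and $\alloc$ from \Cref{tbl:heap-ops}: in each subcase one checks that $\rd(h_1', a)$ still equals the updated $s_2(\lastn)$ and that $|h_1'| = s_2(\allocctr)$. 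For example, at an allocation $|h_1|$ and $\allocctr$ both grow by one, $p$ receives $|h_1| + 1 = \allocctr$ in both programs, and $\lastn$ becomes $\sem{e}$ exactly when $\pg$ equals the fresh address $|h_1| + 1$, which is exactly when $\rd(h_1 \doubleplus [\sem{e}], \pg) = \sem{e}$.

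The crux is $\readcmd{x}{p}$, at which $\cnt$ is incremented and $\pstar$ reads $\rd(h_1, p)$ (writing $h_1$ for the $\pstar$-heap at this point of the execution). If $\pg = p$: the statement $\assertcmd{R(\inG, \cnt, \lastn)}$ succeeds under $\interp^*$ (all $R$-assertions hold in the least fixed point), and $\freads$ from the induction hypothesis gives that $\lastn$ currently holds $\rd(h_1, \pg) = \rd(h_1, p)$, so the following $\assigncmd{x}{\lastn}$ assigns $x$ the value $\pstar$ reads; since the read touches neither the heap nor $\lastn$, $\freads$ and $\fallocs$ survive unchanged. If $\pg \ne p$: $\pstarstar$ runs $\dethavoccall{x};\ \assumecmd{R(\inG, \cnt, x)}$, and I would argue that whenever this $\assumekw$ passes it must pin $x$ to $\rd(h_1, p)$. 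By functional consistency of $\interp^*(R)$ (\Cref{prop:R-partial}) it suffices to prove $(\inG, \cnt, \rd(h_1, p)) \in \interp^*(R)$, and for that I would invoke a \emph{twin execution}: a run of $\pstarstar$ from an initial stack that agrees with $s$ on the global input $\inG$ (and on $\pstar$'s variables) but sets $\pg$ to $p$. Because the sequence of heap statements executed, and in particular the value of $\cnt$ at each of them, does not depend on $\pg$, this twin reaches the very same read at the very same $\cnt$, there takes the $\pg = p$ branch, and its assertion forces $(\inG, \cnt, \rd(h_1, p))$ into $\interp^*(R)$ --- using $\freads$ for the twin, invoked at the same budget level but for prophecy value $p$, to identify its $\lastn$ with $\rd(h_1, p)$.

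The hard part will be keeping this twin argument non-circular, since $\interp^*(R)$ is a global object determined by all executions while the invariant under proof is per-execution. I expect to resolve this with the induction scheme above together with two facts about the fixed point: under $\interp^*$ every $R$-assertion succeeds, and a tuple belongs to $\interp^*(R)$ exactly when some execution's $R$-assertion forces it --- so it is enough to run the twin with a budget small enough that it halts just after the relevant read, because a failing $R$-assertion short-circuits sequencing and is therefore still registered by the immediate-consequence operator even when the remainder of that run would block. One must additionally verify that the twin really does reach the read: $\encR$ introduces no assertions over program expressions, so the twin and the original agree on the outcome of every expression-assertion (these depend only on the values of the variables in $\progvars(\pstar)$), and the twin's earlier $\assumekw$ statements over $R$ can be made to pass by choosing its $\havocvar$ suitably --- feasible because, by functional consistency, each such statement admits a unique passing value that is already present in $\interp^*(R)$, forced either by an earlier assertion or by the corresponding $\assumekw$ of the original run. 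Assembling the cases then closes the induction and establishes~\eqref{eq:r-preservation}.
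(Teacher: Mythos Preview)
Your proposal is correct and follows essentially the same approach as the paper: induction on the budget $n = s(\encNc)$, case analysis on the heap-statement type (with $\writekw$/$\allockw$ handled by splitting on whether $\pg$ matches the accessed address and invoking the heap-theory semantics), and the $\readkw$ case resolved via functional consistency of $\interp^*(R)$ (\Cref{prop:R-partial}) together with a twin-execution argument for the $\pg \neq p$ branch. Your treatment of the twin and its potential circularity is in fact more careful than the paper's, which simply invokes ``the execution where $a = e_v$'' and relies directly on the fixed-point property (all $R$-assertions pass under~$\interp^*$, and the asserted tuple is forced into $\interp^*(R)$) without spelling out how the twin reaches the relevant read or how its earlier $\assumekw$s over $R$ are discharged.
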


The lemma states that, for the same initial configuration $(s,\eh)$, both $\pstar$ and $\pstarstar$ will result in the same outcome ($\foutcomes$), with the same values for all common variables in the final stacks~($\fstacks$), with the variable $\lastn$ holding the same object that is stored at $\pg$ in the final heap $h_1$ of $\pstar$ ($\freads$), and with the value of $\allocctr$ in $\pstarstar$ matching the size of the heap $h_1$ in~$\pstar$~($\fallocs$).
The variable $n$ is free and represents an arbitrary natural number; the lemma is proven by induction on $n$.
We will use this lemma in \Cref{thm:encR-proof} to show that $\pstar$ and $\pstarstar$ are equi-safe (\Cref{def:equisafety}), which is a weaker claim.

\begin{theorem}[$\encR$ is correct]\label{thm:encR-proof}
  Let $p$ be \alang\ program. $p$ and $\encR(p)$ are equi-safe.
\end{theorem}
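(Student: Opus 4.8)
The plan is to derive equi-safety of $p$ and $\encR(p)$ through a short chain passing via the intermediate programs $\pstar = \encN(p)$ and $\pstarstar = \encR(\pstar)$. The first link, that $p$ is equi-safe with $\pstar$, is exactly \Cref{lem:encN-proof}. The last link, that $\pstarstar$ is equi-safe with $\encR(p)$, holds because $\pstarstar = \encR(\encN(p))$ is obtained from $\encR(p)$ by inserting, in front of every heap-operation block produced by the rewrite rules, the counter prefix ``$\assigncmd{\encNc}{\encNc-1};\ \assumecmd{\encNc \ge 0}$'' that $\encN$ adds before heap statements; inserting this prefix is safety-neutral by exactly the reasoning behind \Cref{lem:encN-proof} --- truncating a run after $k$ heap operations neither introduces nor removes a reachable assertion failure, since any failing run of $\encR(p)$ performs finitely many heap operations and is reproduced by $\encR(\encN(p))$ once $\encNc$ is initialised to at least $k$, and conversely every failing run of $\encR(\encN(p))$ is a prefix of one of $\encR(p)$. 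So the substantive part is the middle link, that $\pstar$ is equi-safe with $\pstarstar$.

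For the middle link I would move to the least-fixed-point interpretation. Let $\interp^*$ be the least fixed point of $\ico_{\pstarstar}$. By \Cref{lem:safety-under-i}, $\pstarstar$ is safe iff it is safe under $\interp^*$; the same reduction applies to $\pstar$, which does not mention the freshly introduced predicate $R$ --- the $\Rightarrow$-argument of \Cref{lem:safety-under-i} still goes through with target $\interp^*$, since $\interp^*$ restricted to the predicates actually occurring in $\pstar$ still makes every predicate-assertion of $\pstar$ pass (and, if $p$ itself uses uninterpreted predicates, one additionally notes their fixed-point interpretations agree across $\pstar$ and $\pstarstar$ because the two programs run in lock-step). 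Now apply \Cref{lem:encR-preservation} with this $\interp^*$: for every initial stack $s$ --- in particular over all values of the prophecy variable $\pg$ and all values of the counter $\encNc$ --- the outcome components agree, $\sigma_1 = \sigma_2$, so a run of $\pstar$ under $\interp^*$ raises an error exactly when the corresponding run of $\pstarstar$ raises the same error. Hence $\pstar$ is safe under $\interp^*$ iff $\pstarstar$ is, so $\pstar$ is equi-safe with $\pstarstar$; chaining the three links yields that $p$ is equi-safe with $\encR(p)$.

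The genuine difficulty is packaged inside \Cref{lem:encR-preservation}, which I may assume: it is proved by induction on $\encNc = n$, maintaining the four-part invariant $\foutcomes \wedge \fstacks \wedge \freads \wedge \fallocs$ statement by statement. The delicate case is the \readkw-rewrite: when $\pg \neq p$, the snippet ``$\dethavoccall{\readresult};\ \assumecmd{R(\inG,\cnt,\readresult)}$'' must return exactly $\rd(h_1,p)$, the object stored at $p$ in the companion run of $\pstar$. This is forced because, in the companion execution with $\pg = p$ (same input $\inG$, same read index $\cnt$), the \readkw-rewrite runs $\assertcmd{R(\inG,\cnt,\lastn)}$ with $\lastn = \rd(h_1,p)$, so the least fixed point contains $(\inG,\cnt,\rd(h_1,p)) \in \interp^*(R)$; and \Cref{prop:R-partial} guarantees this is the only triple over $(\inG,\cnt,\cdot)$ in $\interp^*(R)$, so the $\assumekw$ pins $\readresult$ down to that single value and the read stays deterministic. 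The remaining cases (\writekw, \allockw) are just $\lastn$/$\allocctr$ bookkeeping. At the level of the theorem itself, the only care-points beyond invoking the lemmas are the two bookkeeping items above --- absorbing the $\encN$ counter and aligning the least-fixed-point interpretations of $\pstar$ and $\pstarstar$ --- both routine and relying only on determinism of \lang{} programs, as already used in the proof of \Cref{lem:safety-under-i}.
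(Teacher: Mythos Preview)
Your proposal is correct and follows essentially the same approach as the paper: reduce to the intermediate programs $\pstar=\encN(p)$ and $\pstarstar=\encR(\pstar)$, use \Cref{lem:encN-proof} for the $p\leftrightarrow\pstar$ link, and use \Cref{lem:safety-under-i} together with the outcome-preservation clause $\foutcomes$ of \Cref{lem:encR-preservation} for the $\pstar\leftrightarrow\pstarstar$ link.

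You are in fact slightly more careful than the paper in one place. The paper's proof ends with ``by \Cref{lem:encN-proof} and transitivity, $p$ and $\encR(p)$ are equi-safe,'' but the chain as written only reaches $\pstarstar=\encR(\encN(p))$, not $\encR(p)$. You explicitly add the missing third link, observing that $\encR(\encN(p))$ differs from $\encR(p)$ only by the counter prefixes ``$\assigncmd{\encNc}{\encNc-1};\ \assumecmd{\encNc\ge 0}$'' inserted before each rewritten block, and that the truncation argument behind \Cref{lem:encN-proof} applies verbatim to these blocks. That is a genuine (if small) improvement in rigor. You also correctly take $\interp^*$ to be the least fixed point of $\ico_{\pstarstar}$; the paper's statement of \Cref{lem:encR-preservation} writes $\ico_p$, but the proof in the appendix, and \Cref{prop:R-partial}, make clear that the fixed point must be for the encoded program where $R$ actually occurs. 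Your parenthetical about aligning the interpretations of any pre-existing uninterpreted predicates across $\pstar$ and $\pstarstar$ is at the same level of informality as the paper's own treatment.
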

\begin{proof}
  Let $\pstar = \encN(p)$ and $\pstarstar = \encR(\pstar)$. By \Cref{lem:encN-proof}, $p$ and $\pstar$ are equi-safe, so it suffices to show equi-safety between $\pstar$ and $\pstarstar$.
  By \Cref{lem:safety-under-i}, it suffices to show equi-safety under the least interpretation $\interp^*$.

  By \Cref{lem:encR-preservation}, for all $s$ we have $\tupleget{\bigstepevalp{\interp^*}(\pstar, s, \eh)}{1} = \sigma_1 = \sigma_2 = \tupleget{\bigstepevalp{\interp^*}(\pstarstar, s, \eh)}{1}$. That is, $\pstar$ and $\pstarstar$ always have the same outcome for every initial stack $s$; therefore if one is safe, the other will be safe too. By \Cref{lem:encN-proof} and transitivity, this shows $p$ and $\encR(p)$ are equi-safe.
\end{proof}

\section[The Time-Indexed Heap Invariant Encoding RW (EncRW)]{The Time-Indexed Heap Invariant Encoding $RW$ ($\encRW$)}\label{sec:rw-encoding}
Using the $R$ encoding it is sometimes difficult to express invariants only in terms of the $\cnt$ value when a read happens, making invariant inference more difficult.
The $RW$ encoding ($\encRW$, rightmost column of \Cref{tbl:r-encoding-rules}) introduces an additional uninterpreted predicate $W$ for write operations. Unlike the $R$ encoding, each write is also indexed using $\cnt$, and the $\cnt$ value of the last written object is stored along with the object. As opposed to only keeping track of the last written value to some address (in the $R$ encoding), using the additional $W$ predicate we store the richer object~$(\lastn, \lastwtcnt)$. The part of the encoding involving $\lastwtcnt$ is the same as the $R$ encoding, with~$\lastwtcnt$ replacing $\lastn$. Like the $R$ predicate, the $W$ predicate is functionally consistent; the proof is similar to the proof of \Cref{prop:R-partial}. The $W$ predicate is then used to \emph{look up} the object residing at that index.

Similarly to the $R$ encoding, 
given \alang\ program $p$, the $RW$ encoding $\encRW$ rewrites $p$ into an \emph{equi-safe} \lang\ program $\encRW(p)$ that is free of $\addr$ variables and the heap operations $\readkw$, $\writekw$ and $\allockw$. In $p$ we assume the variable $\inG$ represents the (arbitrary) program input.
The encoding $\encRW$ introduces the two uninterpreted predicates $R : (\mathit{in} : \integer, \mathit{cnt} : \integer, \mathit{cnt}_\mathit{last} : \integer)$ and $W : (\mathit{in} : \integer, \mathit{cnt} : \integer, \mathit{obj} : \obj)$,
%
%
where the first argument to both predicates is always the program input $\inG$, the second argument is the $\cnt$ value of the read or the write. The third argument in $R$ is the $\cnt$ value of the last write, and in $W$ the last heap object of type $\obj$.

\subsection[Rewriting p into EncRW(p)]{Rewriting $p$ into $\encRW(p)$}\label{sec:rw-encoding-rules}
$\encRW$ rewrites $p$ through the following steps (in order):
\begin{itemize}
\item The first step is the same as in the $R$ encoding (i.e., converting $\addr$es to $\integer$s).
\item Next, the fresh auxiliary variables $\allocctr$, $\cnt$, $\lastwtcnt$, $\pg$ and $\readtmpcnt$ are introduced, and some of them are initialised by adding the statement to the start of the program from the ``Initialisation'' row of \Cref{tbl:r-encoding-rules} for $\encRW$ to the start of the program from the previous step. The $RW$ encoding replaces $\lastn$ with $\lastwtcnt$.
\item Finally, the rewrite rules in the rightmost column of \Cref{tbl:r-encoding-rules} are applied once.
\end{itemize}

\subsection[Correctness of EncRW]{Correctness of $\encRW$}
The correctness of the $\encRW$ encoding follows a similar argument to that of $\encR$. The proof relies on showing that both $R$ and $W$ are partial functions in $\interp^*$. An inductive argument then shows that the two-step lookup process, using $R$ to find a write-counter and $W$ to find the corresponding data, correctly simulates a heap read. 
A detailed proof sketch is provided in \Cref{app:proof-rw}.

\section{Approximations and Extensions}\label{sec:approximations}

While the base $\encR$ and $\encRW$ encodings are both sound and complete, we explore several approximations and extensions designed to enhance scalability and solver efficiency, while allowing control over completeness through a controlled abstraction strategy.

\subsection{Adding Supplementary Information to the Uninterpreted Predicates}
\label{sec:supplementary}
One strategy to extend the base encodings ($\encR$ and $\encRW$) is to augment the uninterpreted predicates~$R, W$ with additional arguments. Those arguments can provide, in general, any information that is uniquely determined by the other arguments, but that might be difficult to infer for a verification system automatically. Such supplementary information does not affect the correctness (soundness and completeness) of the encodings; however, it has the potential to significantly simplify invariant inference. We provide two examples.

\paragraph{Adding Meta-Information}
A simple kind of supplementary information that can be added to the uninterpreted predicates is meta-data, for instance the control location of writes and reads to/from the heap. Similar refinements were proposed also in the context of the space invariants encoding to improve precision~\cite{DBLP:conf/lpar/KahsaiKRS17}, where they required, however, a separate static analysis procedure for determining the data-flow from write to read statements. In our framework, meta-data can be added in a more elegant implicit way by extending the time-indexed heap invariants encoding. Adding meta-data has no effect on the precision of our encoding, since the encoding is already complete, but it can make life simpler for the back-end verification tool since simpler relational invariants can be found.

We consider our $R$ encoding, in which the uninterpreted predicate~$R$ initially has the signature $R : (\mathit{in} : \integer, \mathit{cnt} : \integer, \mathit{obj} : \obj)$. Suppose that $\mathcal{C}$ is a finite set representing the control locations of the program to be transformed, such that every \readkw{}, \writekw{}, and \allockw{} statement~$s$ has a unique control location~$s_{\mathit{Loc}} \in \mathcal{C}$. We extend the encoding by redefining the signature of $R$ as $R : (\mathit{in} : \integer, \mathit{cnt} : \integer, \mathit{obj} : \obj, \mathit{write}_{\mathit{Loc}} : \mathcal{C}, \mathit{read}_{\mathit{Loc}} : \mathcal{C})$, adding two arguments of type~$\mathcal{C}$ for specifying the control location of the \writekw{} or \allockw{} that put data on the heap, as well as the control location of the \readkw{} that is reading the data, respectively.

\begin{wrapfigure}{r}{0.41\textwidth}
\vspace{-1.5em}
\begin{minipage}{0.4\textwidth}
\begin{equation*}
    \begin{array}[t]{@{}l@{}}
    \assigncmd{\cnt}{\cnt+1};\\       
    \ifcmd{\pg = p}
      { 
        \\ \ \ \assertcmd{R(\inG, \cnt, \lastn, \mathit{last}_{\mathit{Loc}}, \mathit{read}_{\mathit{Loc}})};\\
        \ \ \assigncmd{\readresult}{\lastn} \\
      } {\\ 
         \ \ \dethavoccall{\readresult};\; \dethavoccall{l};\\
         \ \ \assumecmd{R(\inG, \cnt, \readresult, l, \mathit{read}_{\mathit{Loc}})}\\ 
      }
  \end{array}
\end{equation*}
\end{minipage}
\vspace{-1em}
\end{wrapfigure}
The transformation~$\encR$ in \Cref{tbl:r-encoding-rules} can be extended accordingly by adding a further variable~$\mathit{last}_{\mathit{Loc}}$ that is updated by the code snippets for \writekw{} or \allockw{}, recording the control location at which the write occurred (in conjunction with updating the $\lastn$ variable), and by adding arguments for the control locations of the write and read to the \assertkw{} and \assumekw{} statements in the encoding of \readkw{}, where $\mathit{read}_{\mathit{Loc}} \in \mathcal{C}$ is the control location of the encoded \readkw{}, as on the right.

The soundness of the augmented encoding follows directly from the fact that write and read locations are uniquely determined by the values of the $\inG$ and $\cnt$ variables; the added information is redundant, but can often help the back-end solver to find simpler invariants~\cite{DBLP:conf/lpar/KahsaiKRS17}. For instance, in the extended encoding, a time-indexed heap invariant could now state that some particular \writekw{} statement only writes data in the range~$[0, 10]$, or that a particular \readkw{} statement can only read data that was produced by certain \writekw{} statements.

In \Cref{sec:evaluation} we call this \emph{the tagging extension}, and denote it by subscripting encoding names with $T$ (e.g., $\mathit{R_T}$).


\paragraph{Adding Variables in Scope}
Similarly, the values of global or local program variables can be added as further arguments to the uninterpreted predicates. In the motivating example, for instance, constant values were written to the heap, so the inferred invariants needed only to reason using those constants. Consider a simple loop that writes the loop index to the heap at each iteration.
%
%
That is, the values written to the heap do not remain constant between the iterations of the loop. It is still possible to derive the assigned value from the program input and the recorded $\cnt$ values; however, with the loop index part of the invariant a much simpler invariant becomes expressible.

\subsection{Controlled Abstraction}
Another strategy is to deliberately use abstractions in exchange for improved solver performance. This can generally be done by simply \emph{removing} arguments of uninterpreted predicates. It is easy to see that removing arguments is a program transformation that is \emph{sound} but, in general, \emph{incomplete.}
For instance, removing the precise counter values tracking heap accesses from the arguments of the predicate~$R$ is an abstraction sacrificing completeness; however, in many cases, it can be sufficient for the verification task to replace the precise identifiers with abstract information such as control locations, or program variables as discussed earlier, giving rise to a systematic strategy for constructing heap encodings: we start from one of the encodings that are sound and complete~($\encR$ or $\encRW$), augment it with supplementary information (\Cref{sec:supplementary}), and finally remove predicate arguments that are deemed unnecessary. Every encoding constructed in this way is sound, while the degree of incompleteness can be controlled depending on how much information is kept. Different encodings presented in the literature (e.g., \cite{DBLP:conf/lpar/KahsaiKRS17}) can be obtained in this way.


\subsection{Tailoring the Encodings to Properties of Interest}
It is possible to tailor the encodings to target different program properties, for instance to only support checking functional safety or memory safety. Consider the encodings $\mathit{RWf}$ and $\mathit{RWm}$, shown in \Cref{tbl:extension-encoding-rules}. Compared to the $\mathit{RW}$ encoding, the $\mathit{RWf}$ encoding omits the $\assertkw$ statement to the $W$ predicate during initialisation, and it does not write any object to newly allocated addresses\footnote{This is assuming C-like semantics where freshly-allocated locations are considered uninitialised.}. This encoding is correct only for \emph{memory-safe} programs: programs in which it has already been shown that all heap reads are from addresses that are allocated and initialised prior to that read (this encoding is suitable for verification tasks in the \texttt{ReachSafety-Heap} category of SV-COMP~\cite{svcomp24}).

$\mathit{RWm}$ is a variation of the $\mathit{RWf}$ encoding, and it is tailored for checking the absence of invalid pointer references. It does this by inserting $\assertkw$ statements into the encoded read and write statements that fail when there are invalid accesses. The original $\assertkw$ statements of the program can also be dropped if the only property of interest is the absence of invalid pointer dereferences. Note that the $\mathit{RW}$-$\mathit{mem}$ encoding cannot detect accesses to newly allocated but uninitialised addresses, such as accessing the result of a \texttt{malloc} in C before writing to it.

Other extensions could, for instance, add support for the \textbf{\texttt{free}} operation, and check more properties related to memory safety, such as lack of memory leaks and double free operations. We believe these are interesting research avenues to explore on their own, and that the equi-safe encodings we provide serve as a framework to build upon.

\begin{table}
  \caption{Rewrite rules for the $\mathit{RWf}$ ($\encRWfun$) and $\mathit{RWm}$ ($\encRWmem$) encodings. The rules are applied similarly to $R$ and $\mathit{RW}$ encodings.}\label{tbl:extension-encoding-rules}
\begin{tabular}{lll}
  $p$ statement & $\encRWfun(p)$ statement & $\encRWmem(p)$ statement \\\toprule
  Initialisation 
    & $\begin{array}[t]{@{}l@{}}
        \assigncmd{\allocctr}{0};\ \assigncmd{\cnt}{0};\\\assigncmd{\lastwtcnt}{0};\ \assigncmd{\readtmpcnt}{0}
      \end{array}$
    & $\begin{array}[t]{@{}l@{}}
        \assigncmd{\allocctr}{0};\ \assigncmd{\cnt}{0};\\\assigncmd{\lastwtcnt}{0};\ \assigncmd{\readtmpcnt}{0}
      \end{array}$\\\midrule
  %
  $\alloccmd{p}{e}$  &
  $ \begin{array}[t]{@{}l@{}}
      \assigncmd{\allocctr}{\allocctr+1} ;\\
      \assigncmd{p}{\allocctr}
    \end{array}
  $&
  $ \begin{array}[t]{@{}l@{}}
      \assigncmd{\allocctr}{\allocctr+1} ;\\
      \assigncmd{p}{\allocctr}
    \end{array}
  $
  \\\midrule
  %
  $\readcmd{x}{p}$  &
  $\begin{array}[t]{@{}l@{}}
    \assigncmd{\cnt}{\cnt+1};\\\\       
    \ifcmd{\pg = p}
      { 
        \\ \ \ \assertcmd{R(\inG, \cnt, \lastwtcnt)};\\
        \ \ \assigncmd{\readtmpcnt}{\lastwtcnt}\\
      } {\\ 
      \ \ \dethavoccall{\readtmpcnt};\\
      \ \ \assumecmd{R(\inG, \cnt, \readtmpcnt)}\\
      };\\
      \dethavoccall{x};\\
      \assumecmd{W(\inG, \readtmpcnt, x)};\\
  \end{array}
  $&
  $\begin{array}[t]{@{}l@{}}
    \assigncmd{\cnt}{\cnt+1};\\       
    \assertcmd{0 < p \leq \allocctr}\\
    \ifcmd{\pg = p}
      { 
        \\ \ \ \assertcmd{R(\inG, \cnt, \lastwtcnt)};\\
        \ \ \assigncmd{\readtmpcnt}{\lastwtcnt}\\
      } {\\ 
      \ \ \dethavoccall{\readtmpcnt};\\
      \ \ \assumecmd{R(\inG, \cnt, \readtmpcnt)}\\
      };\\
      \dethavoccall{x};\\
      \assumecmd{W(\inG, \readtmpcnt, x)};\\
  \end{array}$
  \\\midrule
  %
  $\textbf{write}(p,e)$   &
  $\begin{array}[t]{@{}l@{}}
    \assigncmd{\cnt}{\cnt+1};\\ 
    \ifshortcmd{0 < p \leq \allocctr}
    {\\
    \ \ \assertcmd{W(\inG, \cnt, e)}\\
    \ \ \ifshortcmd{\pg = p} {\assigncmd{\lastwtcnt}{\cnt}};\\
    }
  \end{array}$&
  $\begin{array}[t]{@{}l@{}}
    \assigncmd{\cnt}{\cnt+1};\\ 
    \ifcmd{0 < p \leq \allocctr}
    {\\
    \ \ \assertcmd{W(\inG, \cnt, e)}\\
    \ \ \ifshortcmd{\pg = p} {\assigncmd{\lastwtcnt}{\cnt}};\\
    } { 
    \assertcmd{0}
    }
  \end{array}$\\\bottomrule
\end{tabular}
\end{table}

\subsection{Caching}\label{subsec:caching}
Programs often access the same address repeatedly, which can make verification harder when every access is modelled as a distinct heap operation. In~\cite{DBLP:conf/lpar/KahsaiKRS17} the authors carry out a data-flow based analysis to minimise the number of heap interactions. We instead build a similar optimisation directly into our encodings through a simple caching mechanism, requiring no external analysis. The extension, which we denote using the subscript $C$ in encoding names (e.g., $\mathit{R_C}$), introduces a simple, one-element cache, but can be extended to any number of elements.

\begin{wrapfigure}{r}{0.41\textwidth}
\vspace{-1em}
\begin{minipage}{0.4\textwidth}
\begin{equation*}
    \begin{array}[t]{@{}l@{}}
    \assigncmd{\cnt}{\cnt+1};\\
    \ifcmd{\mathit{lastc_{addr}} = p}{ \\
    \ \ \assigncmd{x}{\mathit{lastc_{data}}} \\
    }{ \\
    \ \
      \textit{[\ldots] // existing code for read}\\
    \ \ \assigncmd{\mathit{lastc_{addr}}}{p};\ \assigncmd{\mathit{lastc_{data}}}{x} \\
    }
    \end{array}
\end{equation*}
\end{minipage}
\vspace{-1em}
\end{wrapfigure}
\
\ 
The extension adds two global variables: $\mathit{lastc_{addr}}$ to store the address of the last heap access, and $\mathit{lastc_{data}}$ for the corresponding data. Every heap operation modifies the cache. The transformation for a \readkw{} in the~$R_C$ encoding is shown on the right. If the read address matches~$\mathit{lastc_{addr}}$ (a cache hit), the cached value is returned immediately. Otherwise (a cache miss), the original relational logic is executed, and the cache is updated with the new address and value. The \writekw{}$(p, e)$ and \allockw{}$(p, e)$ operations are similarly extended to update $\mathit{lastc_{addr}}$ and $\mathit{lastc_{data}}$ with $p$ and $e$.

\subsection{Handling Different Heap Regions and Object Types}
Our encodings do not distinguish between different heap regions or types of objects. We carried out our experiments using a single global heap and used algebraic data-types (ADTs) in order to support multiple object types. A possible extension to make the approach more scalable is to assign each heap region and object type an own set of predicates. This can be done through an external static analysis to determine the different heap regions and object types on the heap.
For instance, when such an analysis can determine the type $\tau$ of each pointer, the encoding can introduce a separate set of heap operations for each type~$\tau$ ($\readkw_\tau$, $\writekw_\tau$ etc.), each with its own set of predicates and auxiliary variables.

\section{Evaluation}\label{sec:evaluation}  

To evaluate the time-indexed heap invariants approach, we conducted experiments using two suites of benchmarks: a set of 22 manually crafted benchmarks, representing challenging heap-manipulating programs, and a set of 132 benchmarks from the SV-COMP \texttt{ReachSafety-Heap} category. 
Our evaluation aims at answering the following research questions:
\begin{enumerate}
    \item[\textbf{RQ1:}] Can time-indexed heap invariants, in combination with off-the-shelf verification tools for heap-free programs with uninterpreted predicates, be used to verify heap-manipulating programs that are beyond the capabilities of existing tools?
    \item[\textbf{RQ2:}] How do the different encoding variants affect verification performance?
    \item[\textbf{RQ3:}] How does time-indexed heap invariants approach compare to state-of-the-art verification tools on the standard SV-COMP benchmarks?
\end{enumerate}

We implemented two distinct pipelines to investigate those questions. We describe the implementation of these pipelines next, followed by details of the experimental setup and the results.

\subsection{Implementation}\label{sec:eval-impl}
\subsubsection{Semi-Automated Pipeline}
\label{sec:eval:semi}
For the crafted benchmark suite, we used a semi-automated process. We first manually normalised the C source code to replace heap operations with corresponding function calls (\texttt{read(p)}, \texttt{write(p, e)}, \texttt{alloc(e)}), and added annotations specifying input variables and the heap object type. Automated scripts then applied the encodings to these normalised programs, outputting C files with uninterpreted predicates that can be handled by the tools \seahorn{} and \tricera{}.

The goal of this pipeline is to demonstrate the viability of our approach using off-the-shelf tools on a curated set of benchmarks consisting of singly-linked lists and trees, while controlling for external factors that influence verification performance. In particular, \tricera{} represents C structs as algebraic data-types (ADTs), so updating a struct field involves reading the whole ADT, creating a new ADT with the modified field and writing it back. These additional reads can make verification harder, and are also problematic when a write happens directly after an allocation, as the read is then to an uninitialised location, breaking the memory-safety assumption of the ``$\mathit{fun}$'' encoding (a variant, detailed in \Cref{tbl:extension-encoding-rules}, that targets functional safety and thus assumes all memory accesses are valid). To avoid this, during normalisation we merge common \texttt{alloc-write}$^*$ patterns (e.g., struct allocation and initialisation) into a single \texttt{alloc}.

\subsubsection{\tricerare{}: Prototype in \tricera{}}
\label{sec:eval:auto}
The second pipeline is a prototypical implementation of our approach on top of the open-source verification tool \tricera{}, called \tricerare{}, replacing the native heap model of \tricera{} with our encodings. \tricerare{} is fully automatic and used for more large-scale experiments with SV-COMP benchmarks. However, at the point of submitting the paper, \tricerare{} is still in a relatively early stage of development and does not include many relevant optimisations, like the merging of \texttt{alloc-write}$^*$ patterns that was applied in the semi-automated pipeline. This is somewhat alleviated by the caching extension discussed in \Cref{subsec:caching}, which we used in a subset of the encodings. We also adjusted the ``$\mathit{fun}$'' variants of the encodings to write an initial object on allocation to make any potential reads after an allocation memory-safe.

\paragraph{Making Input Programs Deterministic.}
Our encodings require deterministic input programs, but many SV-COMP benchmarks contain non-deterministic calls (e.g., \texttt{nondet\_int()}). We implemented a Clang-based preprocessor to replace these calls with unique program input variables. For calls that can be executed multiple times from the same location (e.g., inside a loop), we use mathematical arrays as input variables with a corresponding counter that is incremented at each call. These input variables are then passed as arguments to the time-indexed heap invariants.

\paragraph{Translation to Constrained Horn Clauses.}
CHC-based verification tools such as \seahorn{} and \tricera{} translate input program into a set of Constrained Horn Clauses (CHCs)~\cite{DBLP:conf/birthday/BjornerGMR15}, using uninterpreted predicates to represent program invariants. Our use of uninterpreted predicates in \texttt{assert} and \texttt{assume} statements is already supported by both \seahorn{} and \tricera{}~\cite{DBLP:conf/ecoop/WesleyCNTWG24,tricera}. 
An $\assertcmd{P(\bar{x})}$ statement asserts that $P(\bar{x})$ must hold at that program location, which translates to the predicate appearing in the \emph{head} (the consequent) of a clause. Conversely, an $\assumecmd{P(\bar{x})}$ statement constrains an execution path from a program location, which translates to the predicate~$P(\bar{x})$ appearing in the \emph{body} (the antecedent) of a CHC, in conjunction to the uninterpreted predicate representing that program location. 
This implies the resulting CHCs will be non-linear, i.e., at least two uninterpreted predicates will appear in the body of the CHC where the predicate was assumed, which can make it harder for the CHC solver. Non-linear CHCs can also arise when encoding other program constructs, such as function calls, where the predicates representing function contracts are asserted and assumed at call sites. In our experiments with \seahorn{} and \tricera{} we inline all non-recursive functions.

\subsection{Benchmarks and Experimental Setup}

All experiments for both pipelines were conducted on a Linux machine with an Intel Core i7-7800X CPU @ 3.50GHz and 16 GB of RAM, with a wall-clock timeout of 900 seconds.

\subsubsection{Semi-Automated Pipeline Benchmarks}
The first suite consists of 22 benchmarks (14 safe, 8 unsafe) and was evaluated using the semi-automated pipeline. Among these, six benchmarks are derived from three SV-COMP benchmarks (\texttt{simple\_and\_skiplist\_2lvl-1}, \texttt{simple\_built\_from\_\discretionary{}{}{}end} and \texttt{tree-3}), while the remaining 16 benchmarks are manually crafted, with most of them involving unbounded singly-linked lists. We evaluated the three primary encodings: the base $R$ and~$RW$ encodings given in \Cref{tbl:r-encoding-rules}, and a specialised variant of the $RW$ encoding, $\mathit{RW}$-$\mathit{fun}$, that targets only functional safety properties that is given in \Cref{tbl:extension-encoding-rules}. The $\mathit{None}$ encoding refers to unencoded programs.

\seahorn~(llvm14-nightly, 21-03-2025, using the options ``\texttt{pf --enable-nondet-init --inline}'') and \tricera~(version 0.3.2, using the options ``\texttt{-abstractPO -reachsafety -valid-deref}'') were executed on both encoded and unencoded benchmarks. \cpachecker~(version 4.0) and \predatorhp~(version 3.1415) were evaluated only on the original, unencoded benchmarks since they do not support uninterpreted predicates. The \cpachecker\ and \predatorhp\ tools were configured to only check for explicit assertion failures (i.e., the \texttt{ReachSafety} category of SV-COMP) in both pipelines, using the same options for these tools as in SV-COMP 2025's \texttt{ReachSafety-Heap} category.

\subsubsection{\tricerare{} Pipeline Benchmarks}
The second suite is SV-COMP's \texttt{ReachSafety-Heap} category, which consists of 240 memory-safe benchmarks. Out of the original 240, we excluded those that do not perform heap allocations (e.g., only use stack pointers) or use a fragment of~C unsupported by \tricerare{} (e.g., function pointers), which resulted in 132 benchmarks. We evaluated the unencoded benchmarks using \cpachecker, \predatorhp\ and \tricera\ using the same configuration for the unencoded benchmarks as in the semi-automated pipeline.

Using \tricerare{}, we evaluated a broader range of encodings. These combine the base $R$ and~$RW$ encodings with several extensions, denoted by subscripts: ``$C$'' for \emph{caching} and ``$T$'' for \emph{tagging}. The functional-safety variant is indicated by appending ``$\text{-}\mathit{fun}$''. For example, $\mathit{RWf_{CT}}$ refers to the $RW$ encoding with both caching and tagging extensions applied to the functional-safety variant. The caching extension adds a one-element cache to reduce redundant reads, while the tagging extension incorporates control-flow location identifiers into the time-indexed heap invariants (\Cref{sec:approximations}).

\subsection{Results and Discussion}
\label{sec:results}
We structure the results according to the research questions. Detailed per-benchmark results for the semi-automated and fully-automated pipelines are available in the appendix in \Cref{tbl:per-benchmark-results} and \Cref{tbl:per-benchmark-results-tri}, respectively.

\begin{table}
\caption{Summary of the results for the manually normalised benchmarks. ``Unknown'' represents timeouts, errors and unknown results returned by a tool.}\label{tbl:results-summary}
  \begin{tabular}{llrrrr}
    \toprule
    Tool & Encoding & Safe & Unsafe & Unknown & Total \\ \midrule
    \cpa{} & None & 4 & 8 & 10 & 22 \\
\pred{} & None & 10 & 8 & 4 & 22 \\
\sea{} & None & 3 & 8 & 11 & 22 \\
\tri{} & None & 4 & 8 & 10 & 22 \\
\midrule
\sea{} & $\mathit{R}$ & 7 & 0 & 15 & 22 \\
\tri{} & $\mathit{R}$ & 7 & 8 & 7 & 22 \\
\sea{} & $\mathit{RW}$ & 2 & 0 & 20 & 22 \\
\tri{} & $\mathit{RW}$ & 7 & 8 & 7 & 22 \\
\sea{} & $\mathit{RWf}$ & 12 & 0 & 10 & 22 \\
\tri{} & $\mathit{RWf}$ & 12 & 8 & 2 & 22 \\
    \bottomrule
  \end{tabular}
\end{table}

\begin{table}[h]
  \centering
  \caption{Comparison matrix for the manually normalised benchmarks. Each cell shows the number of (correct) safe/unsafe benchmarks that could be solved by the configuration in the row of that cell, but not by the configuration of its column.}
  \label{tbl:comparison-matrix-old}
  \begin{tabular}{lcccc|cccccc}
     & \rotatebox{90}{\cpa{}} & \rotatebox{90}{\pred{}} & \rotatebox{90}{\sea{}} & \rotatebox{90}{\tri{}} & \rotatebox{90}{$\mathit{R}$ \sea{}} & \rotatebox{90}{$\mathit{R}$ \tri{}} & \rotatebox{90}{$\mathit{RW}$ \sea{}} & \rotatebox{90}{$\mathit{RW}$ \tri{}} & \rotatebox{90}{$\mathit{RWf}$ \sea{}} & \rotatebox{90}{$\mathit{RWf}$ \tri{}} \\ \midrule
    \cpa{} & --- & 0/0 & 1/0 & 1/0 & 3/8 & 1/0 & 3/8 & 1/0 & 0/8 & 0/0 \\
\pred{} & 6/0 & --- & 7/0 & 6/0 & 5/8 & 5/0 & 8/8 & 5/0 & 0/8 & 0/0 \\
\sea{} & 0/0 & 0/0 & --- & 0/0 & 2/8 & 0/0 & 2/8 & 0/0 & 0/8 & 0/0 \\
\tri{} & 1/0 & 0/0 & 1/0 & --- & 3/8 & 0/0 & 3/8 & 0/0 & 0/8 & 0/0 \\
\midrule
$\mathit{R}$ \sea{} & 6/0 & 2/0 & 6/0 & 6/0 & --- & 4/0 & 5/0 & 4/0 & 1/0 & 1/0 \\
$\mathit{R}$ \tri{} & 4/0 & 2/0 & 4/0 & 3/0 & 4/8 & --- & 5/8 & 0/0 & 1/8 & 0/0 \\
$\mathit{RW}$ \sea{} & 1/0 & 0/0 & 1/0 & 1/0 & 0/0 & 0/0 & --- & 0/0 & 0/0 & 0/0 \\
$\mathit{RW}$ \tri{} & 4/0 & 2/0 & 4/0 & 3/0 & 4/8 & 0/0 & 5/8 & --- & 1/8 & 0/0 \\
$\mathit{RWf}$ \sea{} & 8/0 & 2/0 & 9/0 & 8/0 & 6/0 & 6/0 & 10/0 & 6/0 & --- & 1/0 \\
$\mathit{RWf}$ \tri{} & 8/0 & 2/0 & 9/0 & 8/0 & 6/8 & 5/0 & 10/8 & 5/0 & 1/8 & --- \\
  \end{tabular}
\end{table}

\subsubsection{RQ1: Verifying Challenging Heap Programs}
A key question is whether time-indexed heap invariants, combined with off-the-shelf verification tools, can be used to verify heap programs beyond the capabilities of existing tools. The results for the curated benchmarks (\Cref{tbl:results-summary}, \Cref{tbl:comparison-matrix-old}) answer this positively: using a portfolio of our encodings with \tricera{} and \seahorn{}, we successfully verify all 22 crafted benchmarks. These include programs that require identifying complex invariants involving the shape of heap-allocated data structures. In contrast, both \cpachecker{} and the native \tricera{} solve only 12 benchmarks, and \predatorhp{} times out on 4 benchmarks.

\paragraph{\seahorn{} Counterexample Validation.}
We use \seahorn{}'s unbounded verification engine (\texttt{pf}), which may spuriously report that a program is unsafe. We therefore rely on \seahorn{}'s counterexample generation to validate unsafe results, and treat results that cannot be validated as unknown. We use a wrapper script that automatically performs this validation. In the semi-automated pipeline, \seahorn{} is run on both base and encoded benchmarks; counterexample generation works for unencoded benchmarks but currently fails for most encoded benchmarks, leading to 0 unsafe solved instances for \seahorn{} on encoded benchmarks. We have reported this issue to the \seahorn{} developers. When comparing against \tricerare{}, \seahorn{} is run only on the base unencoded benchmarks.

\subsubsection{RQ2: Effect of Encoding Variants}
\Cref{tbl:results-summary} (semi-automated) and \Cref{tbl:results-summary-svcomp} (\tricerare{}) compare the different encoding variants. Among the three encodings evaluated in the semi-automated pipeline ($R$, $RW$, and $\mathit{RWf}$), the $\mathit{RWf}$ encoding achieves the strongest performance with \tricera{}, verifying 20 out of 22 benchmarks compared to 15 for both $R$ and $RW$. The $\mathit{RWf}$ encoding targets memory-safe programs and, as detailed in \Cref{sec:approximations}, omits the initialisation and allocation asserts to $W$, simplifying the invariants required by avoiding reasoning about uninitialised reads.

The base $\mathit{RWf}$ encoding performs significantly worse in the fully-automated \tricerare{} pipeline (39/132) than in the semi-automated pipeline (20/22). This is primarily due to \tricerare{}'s handling of struct updates via ADTs, which introduces read-modify-write patterns that increase heap interactions (see \Cref{sec:eval:auto}). The caching extension addresses this by reducing redundant reads: $\mathit{RWf_{C}}$ solves 68 benchmarks, and with tagging ($\mathit{RWf_{CT}}$), 74 benchmarks. \Cref{fig:cactus-plot} visualises the effect of the different encodings on both solving time and the number of solved instances, and \Cref{tbl:comparison-matrix-svcomp} shows the pairwise comparison between encodings.

\begin{table}[t!]
\caption{Summary of the results for the \tricerare{} pipeline of experiments. ``V. Portfolio'' rows are computed virtually by taking the first result returned by \tricerare{}. ``All - None'' is the result of running all invariant encodings in parallel, and ``All'' in addition runs \tricera{} (the ``None'' encoding).}\label{tbl:results-summary-svcomp}
  \begin{tabular}{llrrrr}
    Tool & Encoding & Safe & Unsafe & Unknown & Total \\ \midrule
    \cpa{} & None & 48 & 37 & 47 & 132 \\
\pred{} & None & 71 & 37 & 24 & 132 \\
\sea{} & None & 13 & 25 & 94 & 132 \\
\tri{} & None & 13 & 19 & 100 & 132 \\
\midrule
\trire{} & $\mathit{R}$ & 13 & 22 & 97 & 132 \\
\trire{} & $\mathit{R_{C}}$ & 28 & 25 & 79 & 132 \\
\trire{} & $\mathit{R_{T}}$ & 16 & 22 & 94 & 132 \\
\trire{} & $\mathit{RW}$ & 6 & 16 & 110 & 132 \\
\trire{} & $\mathit{RW_{T}}$ & 10 & 20 & 102 & 132 \\
\trire{} & $\mathit{RW_{CT}}$ & 35 & 23 & 74 & 132 \\
\trire{} & $\mathit{RWf}$ & 17 & 22 & 93 & 132 \\
\trire{} & $\mathit{RWf_{C}}$ & 46 & 22 & 64 & 132 \\
\trire{} & $\mathit{RWf_{T}}$ & 11 & 20 & 101 & 132 \\
\trire{} & $\mathit{RWf_{CT}}$ & 50 & 24 & 58 & 132 \\\midrule
V. Portfolio & All - None & 53 & 26 & 53 & 132 \\
V. Portfolio & All & 54 & 27 & 51 & 132 \\
  \end{tabular}

\end{table}

\begin{table}[t!]
  \centering
  \caption{Comparison matrix for the \tricerare{} pipeline of experiments. The first four rows/columns show the results for the named tools, the rest show the results for our encodings implemented in \tricerare{}.}
  \label{tbl:comparison-matrix-svcomp}
  \resizebox{\textwidth}{!}{
  \setlength{\tabcolsep}{2pt}
  \begin{tabular}{lcccc|cccccccccc}
     & \rotatebox{90}{\cpa{}} & \rotatebox{90}{\pred{}} & \rotatebox{90}{\sea{}} & \rotatebox{90}{\tri{}} & \rotatebox{90}{$\mathit{R}$ } & \rotatebox{90}{$\mathit{R_{C}}$ } & \rotatebox{90}{$\mathit{R_{T}}$ } & \rotatebox{90}{$\mathit{RW}$ } & \rotatebox{90}{$\mathit{RW_{T}}$ } & \rotatebox{90}{$\mathit{RW_{CT}}$ } & \rotatebox{90}{$\mathit{RWf}$ } & \rotatebox{90}{$\mathit{RWf_{C}}$ } & \rotatebox{90}{$\mathit{RWf_{T}}$ } & \rotatebox{90}{$\mathit{RWf_{CT}}$ } \\ \midrule
    \cpa{} & --- & 1/1 & 36/12 & 35/18 & 35/15 & 20/12 & 32/15 & 42/21 & 40/17 & 19/14 & 31/15 & 4/15 & 39/17 & 6/13 \\
\pred{} & 24/1 & --- & 59/12 & 58/19 & 59/15 & 44/13 & 56/15 & 66/21 & 63/17 & 40/14 & 55/15 & 28/15 & 62/17 & 27/13 \\
\sea{} & 1/0 & 1/0 & --- & 5/10 & 8/6 & 3/6 & 6/6 & 8/10 & 8/7 & 2/6 & 5/6 & 2/7 & 8/7 & 1/5 \\
\tri{} & 0/0 & 0/1 & 5/4 & --- & 7/3 & 3/1 & 6/3 & 9/7 & 9/3 & 3/2 & 5/2 & 1/3 & 9/3 & 1/2 \\
\midrule
$\mathit{R}$ & 0/0 & 1/0 & 8/3 & 7/6 & --- & 0/0 & 1/0 & 7/6 & 5/2 & 0/0 & 1/1 & 0/1 & 5/2 & 0/0 \\
$\mathit{R_{C}}$  & 0/0 & 1/1 & 18/6 & 18/7 & 15/3 & --- & 12/3 & 22/9 & 20/5 & 2/2 & 12/3 & 0/3 & 20/5 & 1/2 \\
$\mathit{R_{T}}$  & 0/0 & 1/0 & 9/3 & 9/6 & 4/0 & 0/0 & --- & 10/6 & 8/2 & 0/0 & 3/1 & 0/1 & 8/2 & 0/0 \\
$\mathit{RW}$  & 0/0 & 1/0 & 1/1 & 2/4 & 0/0 & 0/0 & 0/0 & --- & 0/0 & 0/0 & 0/0 & 0/1 & 0/0 & 0/0 \\
$\mathit{RW_{T}}$  & 2/0 & 2/0 & 5/2 & 6/4 & 2/0 & 2/0 & 2/0 & 4/4 & --- & 0/0 & 2/0 & 2/1 & 1/0 & 0/0 \\
$\mathit{RW_{CT}}$  & 6/0 & 4/0 & 24/4 & 25/6 & 22/1 & 9/0 & 19/1 & 29/7 & 25/3 & --- & 18/1 & 6/1 & 25/3 & 0/0 \\
$\mathit{RWf}$  & 0/0 & 1/0 & 9/3 & 9/5 & 5/1 & 1/0 & 4/1 & 11/6 & 9/2 & 0/0 & --- & 1/1 & 9/2 & 0/0 \\
$\mathit{RWf_{C}}$  & 2/0 & 3/0 & 35/4 & 34/6 & 33/1 & 18/0 & 30/1 & 40/7 & 38/3 & 17/0 & 30/1 & --- & 37/3 & 3/0 \\
$\mathit{RWf_{T}}$  & 2/0 & 2/0 & 6/2 & 7/4 & 3/0 & 3/0 & 3/0 & 5/4 & 2/0 & 1/0 & 3/0 & 2/1 & --- & 0/0 \\
$\mathit{RWf_{CT}}$  & 8/0 & 6/0 & 38/4 & 38/7 & 37/2 & 23/1 & 34/2 & 44/8 & 40/4 & 15/1 & 33/2 & 7/2 & 39/4 & --- \\
  \end{tabular}
  }
\end{table}

\begin{figure}[t]
  \centering
  \includegraphics[width=.6\textwidth]{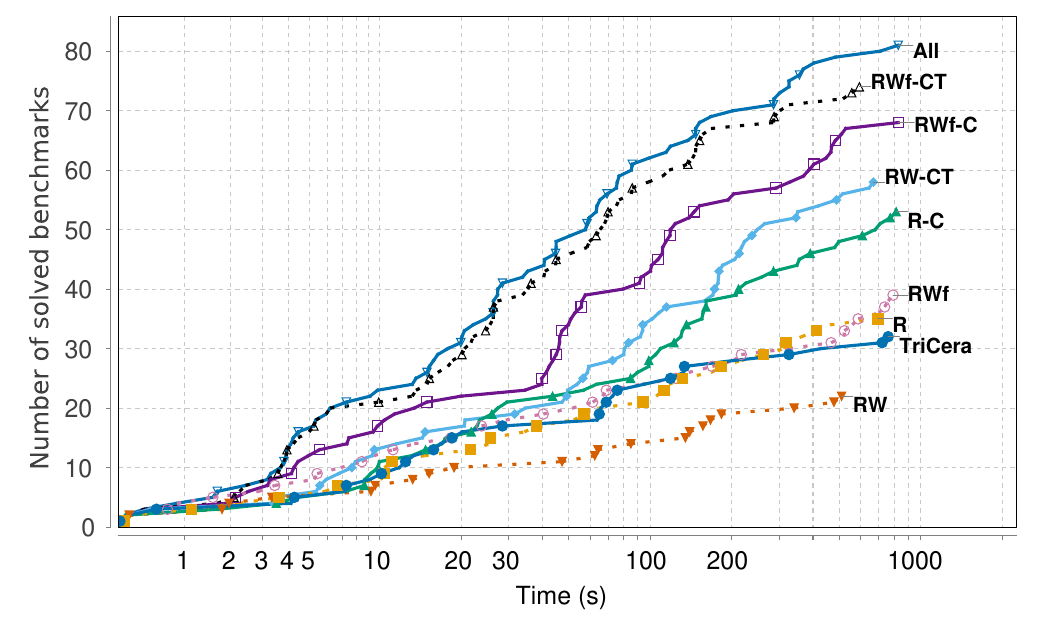}
  \Description{Plot comparing the effect of different encoding variants on the number of benchmarks solved within a given timeout. The caching extension provides the largest improvement.}
  \caption{The plot shows base \tricera{} and a selection of encodings using \tricerare{} to illustrate the impact of encoding variations. For each timeout (x-axis), the y-axis shows how many benchmarks could be solved. "All" is the virtual portfolio of all included encodings and base \tricera{}.}
  \label{fig:cactus-plot}
\end{figure}

\subsubsection{RQ3: Comparison with State-of-the-Art}
\Cref{tbl:results-summary-svcomp} and \Cref{tbl:comparison-matrix-svcomp} present the SV-COMP results. \tricerare{} with the $\mathit{RWf_{CT}}$ encoding verifies 5 safe benchmarks that none of the other tools (\cpachecker{}, \predatorhp{}, \seahorn, \tricera{}) can handle. More specifically, it solves 8 benchmarks that \cpachecker{} cannot verify, 6 that \predatorhp{} cannot verify, 42 that \seahorn{} cannot verify, and 45 that the native \tricera{} cannot verify\footnote{Although not included in our experiments, we note that some of these benchmarks were also solved by other tools participating in SV-COMP, in particular \textsc{2LS}~\cite{DBLP:conf/tacas/MalikMSSVW18}.}. 

In terms of solved instances, the best \tricerare{} configuration ($\mathit{RWf_{CT}}$) solves 50 safe programs, narrowly beating \cpachecker{}'s 48, but behind \predatorhp{}'s 71. For unsafe programs, \tricerare{} (24) trails both \cpachecker{} and \predatorhp{} (37 each). However, a virtual portfolio combining all \tricerare{} encodings with base \tricera{} solves 81 benchmarks (54 safe, 27 unsafe), approaching \cpachecker{}'s 85 (48 safe, 37 unsafe).

Our prototype \tricerare{} is at an early stage of development, and several directions could extend its reach. Static analyses could minimise the number of heap interactions, which we observed to have a significant effect when done manually in our crafted benchmarks. The caching extension builds a similar optimisation directly into the encoding (though at a cost in verification complexity), more than doubling the number of solved instances.

\tricerare{} uses mathematical arrays as sources of non-deterministic values retrieved multiple times, e.g., use of $\dethavoccall{x}$ within loops. While this adds array arguments to the time-indexed heap invariants~($R$, $W$), the experiments show that the required formulas about arrays are relatively easy to find for a CHC solver. We believe that this is because the arrays are only read, never modified, and the precise values present in the arrays are therefore immaterial for time-indexed heap invariants.



\section{Related Work}\label{sec:related}

Verification of heap-manipulating programs has inspired a variety of techniques that trade off precision, automation, and annotation effort. We discuss the most closely related work.

\paragraph{CHC-Based Approaches}
Many CHC-based approaches encode heaps using high-level theories such as the theory of arrays~\cite{DBLP:conf/ifip/McCarthy62} and less-commonly the theory of heaps~\cite{esenTheoryHeapConstrained2021}. The backends we used in our implementation, \seahorn\ and \tricera, are examples of tools using these approaches. Solving arrays typically require finding quantified invariants.
Bj{\o}rner et al. present a method for finding quantified invariants over arrays by guiding Horn solvers through constraints put on the form of the proof~\cite{DBLP:conf/sas/BjornerMR13}. In contrast, our approach encodes the heap using only integers.
Monniaux et al. present a method for transforming programs with arrays into nonlinear Horn clauses over scalar variables~\cite{DBLP:conf/sas/MonniauxG16}. While this approach is sound, it is incomplete.

The closest study (that also inspired our work) is by Kahsai et al. in the context of the verification tool \jayhorn~\cite{DBLP:conf/cav/KahsaiRSS16,DBLP:conf/lpar/KahsaiKRS17}, where uninterpreted predicates are used as \emph{space invariants}, representing the invariants of objects on heap. The space invariants approach itself was inspired by techniques based on refinement types and liquid types~\cite{refinementtypes,liquidtypes}. The space invariants encoding is incomplete, but the authors provide several refinements to improve the precision of the analysis, similar to the extensions we provide in \Cref{sec:approximations}. The space invariants approach can be seen as an overapproximation of our approach.

Faella and Parlato present a CHC-based verification approach for tree-manipulating programs~\cite{DBLP:conf/cav/FaellaP25}: the \emph{knitted-tree} approach that models program executions as tree data structures, followed by encoding the trees as CHCs. Our encoding of heap is fundamentally different from knitted trees; while we provide a general encoding of heap operations, agnostic of data structures, knitted trees are a verification approach specific to trees in which the execution of a program is interpreted as a traversal of the tree, associating intermediate program states with the tree nodes. In a sense, \cite{DBLP:conf/cav/FaellaP25} and our approach have dual characteristics: where \cite{DBLP:conf/cav/FaellaP25} inlines program executions in tree data structures, our approach represents data structures sequentially as part of program executions.

In the context of modular verification through CHCs, Garcia-Contreras et.\ al~\cite{DBLP:conf/sas/Garcia-Contreras22} use a static analysis to infer the size of heap used by a function, then represent this bounded memory using a theory of \emph{finite maps}. 
In~\cite{DBLP:conf/vmcai/SuNGG25} the authors use abstract interpretation~\cite{DBLP:conf/popl/CousotC77,DBLP:conf/birthday/Cousot03} and introduce a composite abstract domain to infer relational object invariants in the presence of field updates where such invariants temporarily do not hold. While such static analysis-based methods can be powerful, they are inherently incomplete. Our experiments show that even without any additional static analyses, our approach can already solve many challenging verification tasks, and could serve as a foundation in such approaches. For instance in~\cite{DBLP:conf/sas/Garcia-Contreras22} when the static analysis fails to infer the size of the heap, it will fall back to using unbounded arrays, for which our general heap encoding provides a quantifier-free alternative.




\paragraph{Methods Based on Separation Logic and Shape Analysis}
An alternative foundation for heap verification is separation logic~\cite{DBLP:conf/lics/Reynolds02, DBLP:conf/csl/OHearnRY01}, which extends Hoare logic with pointers and functions to enable local reasoning about disjoint heaps. 
A long line of work on separation logic has led to powerful verification tools and approaches such as \viper~\cite{viper} and the flow framework that encodes global heap properties using local flow equations~\cite{DBLP:journals/pacmpl/KrishnaSW18,DBLP:conf/esop/KrishnaSW20,DBLP:conf/tacas/MeyerWW23}. 
Separation logic tools often require user-provided annotations to handle complex structures. In contrast, our approach is fully automatic and avoids the need for manual annotations.

Another major approach is abstract interpretation of heap shapes, as exemplified by shape analysis~\cite{DBLP:conf/popl/JonesM79, DBLP:journals/ftpl/ChangDMRR20} algorithms. Shape analyses automatically infer an over-approximation of all possible heap configurations a program can create (often via graph-based abstractions of memory). Modern shape analysers like \predator~\cite{predator, predatorhp} and \forester~\cite{forester} build on this idea, using automata or graph rewriting to represent unbounded linked lists and trees. These methods are fully automatic and can verify memory safety and some structural invariants without user input. However, the abstractions may lead to false alarms. For instance \predatorhp\ tries to work around this by running an additional instance of \predator\ without heap abstractions. This limitation arises because shape domains often focus on heap topology and handle data content in a coarse way. For instance \predator\ uses an abstract domain for lists, and cannot handle programs with trees. It also has limited support for non-pointer data. Our approach is not specialised to any particular domain.

\section{Conclusion}\label{sec:conclusion}

We have introduced a sound and complete time-indexed heap invariants approach for verifying safety properties of programs operating on mutable, heap-allocated data structures. Our encoding replaces heap accesses with access-local invariants that are often easier for automated tools to infer than global heap invariants, providing a complementary approach to existing heap verification techniques. Our formal proofs establish the conditions under which such encodings remain sound and complete, providing a solid theoretical foundation for future research. Through our experimental evaluation, we showed that our approach can verify programs beyond the reach of current state-of-the-art tools. The generality of our framework opens avenues for further investigation into different encodings, abstractions, and optimisations. We also plan to investigate extensions to arrays, pointer arithmetic, and concurrency; we believe these extensions are straightforward in theory, but further research is needed to make them practical.

\section{Data-Availability Statement}
All materials, including the prototype tool \tricerare, our crafted benchmarks and other scripts used in the experiments, are publicly available as part of an accompanying artifact at~\cite{artifact-zenodo}.

\begin{acks}
We want to thank the anonymous reviewers for helpful feedback.
The work was supported by the \grantsponsor{VR}{Swedish Research Council}{https://doi.org/10.13039/501100004359}
through the grants~\grantnum{VR}{2021-06327}, \grantnum{VR}{2024-05195} and \grantnum{VR}{2025-06185}, and by the Ethereum Foundation under grant~\grantnum{EF}{FY25-2124}.
The experimental evaluation was enabled by resources in project NAISS 2024/22-46 provided by the National Academic Infrastructure for Supercomputing in Sweden (NAISS) at UPPMAX, funded by the Swedish Research Council through grant agreement no. 2022-06725.
\end{acks}

\bibliographystyle{ACM-Reference-Format}
\bibliography{refs}

@String{Computing = "Computing" }

@String{Computer = "{IEEE} Computer" }

@String{Springer = "Springer-Verlag" }

@article{esenTheoryHeapConstrained2021,
  author       = {Zafer Esen and
                  Philipp R{\"{u}}mmer},
  title        = {A Theory of Heap for Constrained Horn Clauses (Extended Technical
                  Report)},
  journal      = {CoRR},
  volume       = {abs/2104.04224},
  year         = {2021},
  url          = {https://arxiv.org/abs/2104.04224},
  eprinttype    = {arXiv},
  eprint       = {2104.04224},
  timestamp    = {Tue, 13 Apr 2021 16:46:17 +0200},
  biburl       = {https://dblp.org/rec/journals/corr/abs-2104-04224.bib},
  bibsource    = {dblp computer science bibliography, https://dblp.org}
}

@inproceedings{DBLP:conf/smt/EsenR22,
  author       = {Zafer Esen and
                  Philipp R{\"{u}}mmer},
  editor       = {David D{\'{e}}harbe and
                  Antti E. J. Hyv{\"{a}}rinen},
  title        = {An {SMT-LIB} Theory of Heaps},
  booktitle    = {Proceedings of the 20th Internal Workshop on Satisfiability Modulo
                  Theories co-located with the 11th International Joint Conference on
                  Automated Reasoning {(IJCAR} 2022) part of the 8th Federated Logic
                  Conference (FLoC 2022), Haifa, Israel, August 11-12, 2022},
  series       = {{CEUR} Workshop Proceedings},
  volume       = {3185},
  pages        = {38--53},
  publisher    = {CEUR-WS.org},
  year         = {2022},
  url          = {https://ceur-ws.org/Vol-3185/paper1180.pdf},
  timestamp    = {Fri, 10 Mar 2023 16:22:48 +0100},
  biburl       = {https://dblp.org/rec/conf/smt/EsenR22.bib},
  bibsource    = {dblp computer science bibliography, https://dblp.org}
}

@inproceedings{tricera,
  author    = {Zafer Esen and
               Philipp R{\"{u}}mmer},
  editor    = {Alberto Griggio and
               Neha Rungta},
  title     = {Tricera: Verifying {C} Programs Using the Theory of Heaps},
  booktitle = {22nd Formal Methods in Computer-Aided Design, {FMCAD} 2022, Trento,
               Italy, October 17-21, 2022},
  pages     = {380--391},
  publisher = {{IEEE}},
  year      = {2022},
  url       = {https://doi.org/10.34727/2022/isbn.978-3-85448-053-2\_45},
  doi       = {10.34727/2022/isbn.978-3-85448-053-2\_45},
  timestamp = {Mon, 13 Feb 2023 21:53:10 +0100},
  biburl    = {https://dblp.org/rec/conf/fmcad/EsenR22.bib},
  bibsource = {dblp computer science bibliography, https://dblp.org}
}

@inproceedings{DBLP:conf/birthday/BjornerGMR15,
  author    = {Nikolaj Bj{\o}rner and
               Arie Gurfinkel and
               Kenneth L. McMillan and
               Andrey Rybalchenko},
  editor    = {Lev D. Beklemishev and
               Andreas Blass and
               Nachum Dershowitz and
               Bernd Finkbeiner and
               Wolfram Schulte},
  title     = {Horn Clause Solvers for Program Verification},
  booktitle = {Fields of Logic and Computation {II} - Essays Dedicated to Yuri Gurevich
               on the Occasion of His 75th Birthday},
  series    = {Lecture Notes in Computer Science},
  volume    = {9300},
  pages     = {24--51},
  publisher = {Springer},
  year      = {2015},
  url       = {https://doi.org/10.1007/978-3-319-23534-9\_2},
  doi       = {10.1007/978-3-319-23534-9\_2},
  timestamp = {Tue, 14 May 2019 10:00:52 +0200},
  biburl    = {https://dblp.org/rec/conf/birthday/BjornerGMR15.bib},
  bibsource = {dblp computer science bibliography, https://dblp.org}
}

@inproceedings{DBLP:conf/popl/CousotC77,
  author    = {Patrick Cousot and
               Radhia Cousot},
  editor    = {Robert M. Graham and
               Michael A. Harrison and
               Ravi Sethi},
  title     = {Abstract Interpretation: {A} Unified Lattice Model for Static Analysis
               of Programs by Construction or Approximation of Fixpoints},
  booktitle = {Conference Record of the Fourth {ACM} Symposium on Principles of Programming
               Languages, Los Angeles, California, USA, January 1977},
  pages     = {238--252},
  publisher = {{ACM}},
  year      = {1977},
  url       = {https://doi.org/10.1145/512950.512973},
  doi       = {10.1145/512950.512973},
  timestamp = {Tue, 06 Nov 2018 11:07:42 +0100},
  biburl    = {https://dblp.org/rec/conf/popl/CousotC77.bib},
  bibsource = {dblp computer science bibliography, https://dblp.org}
}

@inproceedings{DBLP:conf/birthday/Cousot03,
  author    = {Patrick Cousot},
  editor    = {Nachum Dershowitz},
  title     = {Verification by Abstract Interpretation},
  booktitle = {Verification: Theory and Practice, Essays Dedicated to Zohar Manna
               on the Occasion of His 64th Birthday},
  series    = {Lecture Notes in Computer Science},
  volume    = {2772},
  pages     = {243--268},
  publisher = {Springer},
  year      = {2003},
  url       = {https://doi.org/10.1007/978-3-540-39910-0\_11},
  doi       = {10.1007/978-3-540-39910-0\_11},
  timestamp = {Tue, 14 May 2019 10:00:52 +0200},
  biburl    = {https://dblp.org/rec/conf/birthday/Cousot03.bib},
  bibsource = {dblp computer science bibliography, https://dblp.org}
}

@inproceedings{DBLP:conf/cav/KahsaiRSS16,
  author       = {Temesghen Kahsai and
                  Philipp R{\"{u}}mmer and
                  Huascar Sanchez and
                  Martin Sch{\"{a}}f},
  editor       = {Swarat Chaudhuri and
                  Azadeh Farzan},
  title        = {{JayHorn}: {A} Framework for Verifying {Java} programs},
  booktitle    = {Computer Aided Verification - 28th International Conference, {CAV}
                  2016, Toronto, ON, Canada, July 17-23, 2016, Proceedings, Part {I}},
  series       = {Lecture Notes in Computer Science},
  volume       = {9779},
  pages        = {352--358},
  publisher    = {Springer},
  year         = {2016},
  url          = {https://doi.org/10.1007/978-3-319-41528-4\_19},
  doi          = {10.1007/978-3-319-41528-4\_19},
  timestamp    = {Tue, 14 May 2019 10:00:43 +0200},
  biburl       = {https://dblp.org/rec/conf/cav/KahsaiRSS16.bib},
  bibsource    = {dblp computer science bibliography, https://dblp.org}
}

@inproceedings{DBLP:conf/lpar/KahsaiKRS17,
  author    = {Temesghen Kahsai and
               Rody Kersten and
               Philipp R{\"{u}}mmer and
               Martin Sch{\"{a}}f},
  editor    = {Thomas Eiter and
               David Sands},
  title     = {Quantified Heap Invariants for Object-Oriented Programs},
  booktitle = {LPAR-21, 21st International Conference on Logic for Programming, Artificial
               Intelligence and Reasoning, Maun, Botswana, May 7-12, 2017},
  series    = {EPiC Series in Computing},
  volume    = {46},
  pages     = {368--384},
  publisher = {EasyChair},
  year      = {2017},
  url       = {https://doi.org/10.29007/zrct},
  doi       = {10.29007/zrct},
  timestamp = {Sun, 15 Aug 2021 00:16:53 +0200},
  biburl    = {https://dblp.org/rec/conf/lpar/KahsaiKRS17.bib},
  bibsource = {dblp computer science bibliography, https://dblp.org}
}

@inproceedings{DBLP:conf/cade/BohmeM11,
  author    = {Sascha B{\"{o}}hme and
               Michal Moskal},
  editor    = {Nikolaj S. Bj{\o}rner and
               Viorica Sofronie{-}Stokkermans},
  title     = {Heaps and Data Structures: {A} Challenge for Automated Provers},
  booktitle = {Automated Deduction - {CADE-23} - 23rd International Conference on
               Automated Deduction, Wroclaw, Poland, July 31 - August 5, 2011. Proceedings},
  series    = {Lecture Notes in Computer Science},
  volume    = {6803},
  pages     = {177--191},
  publisher = {Springer},
  year      = {2011},
  url       = {https://doi.org/10.1007/978-3-642-22438-6\_15},
  doi       = {10.1007/978-3-642-22438-6\_15},
  timestamp = {Thu, 14 Apr 2022 20:26:15 +0200},
  biburl    = {https://dblp.org/rec/conf/cade/BohmeM11.bib},
  bibsource = {dblp computer science bibliography, https://dblp.org}
}

@article{DBLP:journals/tplp/AngelisFGHPP22,
  author    = {Emanuele {De Angelis} and
               Fabio Fioravanti and
               John P. Gallagher and
               Manuel V. Hermenegildo and
               Alberto Pettorossi and
               Maurizio Proietti},
  title     = {Analysis and Transformation of Constrained Horn Clauses for Program
               Verification},
  journal   = {Theory Pract. Log. Program.},
  volume    = {22},
  number    = {6},
  pages     = {974--1042},
  year      = {2022},
  url       = {https://doi.org/10.1017/S1471068421000211},
  doi       = {10.1017/S1471068421000211},
  timestamp = {Sun, 13 Nov 2022 17:53:21 +0100},
  biburl    = {https://dblp.org/rec/journals/tplp/AngelisFGHPP22.bib},
  bibsource = {dblp computer science bibliography, https://dblp.org}
}

@article{tarski-fp,
author = {Alfred Tarski},
title = {{A lattice-theoretical fixpoint theorem and its applications.}},
volume = {5},
journal = {Pacific Journal of Mathematics},
number = {2},
publisher = {Pacific Journal of Mathematics, A Non-profit Corporation},
pages = {285 -- 309},
year = {1955},
url = {https://doi.org/10.2140/pjm.1955.5.285}
}

@inproceedings{DBLP:conf/ifip/McCarthy62,
  author       = {John McCarthy},
  title        = {Towards a Mathematical Science of Computation},
  booktitle    = {Information Processing, Proceedings of the 2nd {IFIP} Congress 1962,
                  Munich, Germany, August 27 - September 1, 1962},
  pages        = {21--28},
  publisher    = {North-Holland},
  year         = {1962},
  timestamp    = {Fri, 26 Jul 2019 12:25:11 +0200},
  biburl       = {https://dblp.org/rec/conf/ifip/McCarthy62.bib},
  bibsource    = {dblp computer science bibliography, https://dblp.org}
}

@inproceedings{DBLP:conf/ecoop/WesleyCNTWG24,
  author       = {Scott Wesley and
                  Maria Christakis and
                  Jorge A. Navas and
                  Richard J. Trefler and
                  Valentin W{\"{u}}stholz and
                  Arie Gurfinkel},
  editor       = {Jonathan Aldrich and
                  Guido Salvaneschi},
  title        = {Inductive Predicate Synthesis Modulo Programs},
  booktitle    = {38th European Conference on Object-Oriented Programming, {ECOOP} 2024,
                  September 16-20, 2024, Vienna, Austria},
  series       = {LIPIcs},
  volume       = {313},
  pages        = {43:1--43:30},
  publisher    = {Schloss Dagstuhl - Leibniz-Zentrum f{\"{u}}r Informatik},
  year         = {2024},
  url          = {https://doi.org/10.4230/LIPIcs.ECOOP.2024.43},
  doi          = {10.4230/LIPICS.ECOOP.2024.43},
  timestamp    = {Thu, 12 Sep 2024 13:29:32 +0200},
  biburl       = {https://dblp.org/rec/conf/ecoop/WesleyCNTWG24.bib},
  bibsource    = {dblp computer science bibliography, https://dblp.org}
}

@article{DBLP:journals/tcs/AbadiL91,
  author       = {Mart{\'{\i}}n Abadi and
                  Leslie Lamport},
  title        = {The Existence of Refinement Mappings},
  journal      = {Theor. Comput. Sci.},
  volume       = {82},
  number       = {2},
  pages        = {253--284},
  year         = {1991},
  url          = {https://doi.org/10.1016/0304-3975(91)90224-P},
  doi          = {10.1016/0304-3975(91)90224-P},
  timestamp    = {Wed, 17 Feb 2021 21:59:17 +0100},
  biburl       = {https://dblp.org/rec/journals/tcs/AbadiL91.bib},
  bibsource    = {dblp computer science bibliography, https://dblp.org}
}

@inproceedings{cpachecker1,
  author    = {Dirk Beyer and
               M. Erkan Keremoglu},
  editor    = {Ganesh Gopalakrishnan and
               Shaz Qadeer},
  title     = {{CPAchecker}: {A} Tool for Configurable Software Verification},
  booktitle = {Computer Aided Verification - 23rd International Conference, {CAV}
               2011, Snowbird, UT, USA, July 14-20, 2011. Proceedings},
  series    = {Lecture Notes in Computer Science},
  volume    = {6806},
  pages     = {184--190},
  publisher = {Springer},
  year      = {2011},
  url       = {https://doi.org/10.1007/978-3-642-22110-1\_16},
  doi       = {10.1007/978-3-642-22110-1\_16},
  timestamp = {Tue, 14 May 2019 10:00:43 +0200},
  biburl    = {https://dblp.org/rec/conf/cav/BeyerK11.bib},
  bibsource = {dblp computer science bibliography, https://dblp.org}
}

@inproceedings{cpachecker2,
  author       = {Daniel Baier and
                  Dirk Beyer and
                  Po{-}Chun Chien and
                  Marek Jankola and
                  Matthias Kettl and
                  Nian{-}Ze Lee and
                  Thomas Lemberger and
                  Marian Lingsch Rosenfeld and
                  Martin Spiessl and
                  Henrik Wachowitz and
                  Philipp Wendler},
  editor       = {Bernd Finkbeiner and
                  Laura Kov{\'{a}}cs},
  title        = {CPAchecker 2.3 with Strategy Selection - (Competition Contribution)},
  booktitle    = {Tools and Algorithms for the Construction and Analysis of Systems
                  - 30th International Conference, {TACAS} 2024, Held as Part of the
                  European Joint Conferences on Theory and Practice of Software, {ETAPS}
                  2024, Luxembourg City, Luxembourg, April 6-11, 2024, Proceedings,
                  Part {III}},
  series       = {Lecture Notes in Computer Science},
  volume       = {14572},
  pages        = {359--364},
  publisher    = {Springer},
  year         = {2024},
  url          = {https://doi.org/10.1007/978-3-031-57256-2\_21},
  doi          = {10.1007/978-3-031-57256-2\_21},
  timestamp    = {Sat, 08 Jun 2024 13:13:56 +0200},
  biburl       = {https://dblp.org/rec/conf/tacas/BaierBCJKLLRSWW24.bib},
  bibsource    = {dblp computer science bibliography, https://dblp.org}
}

@inproceedings{svcomp24,
        author = {D.~Beyer},
        title = {State of the Art in Software Verification and Witness Validation: {SV-COMP 2024}},
        booktitle = {Proc.\ TACAS~(3)},
        pages = {299-329},
        year = {2024},
        series = {LNCS~14572},
        publisher = {Springer},
        doi = {10.1007/978-3-031-57256-2_15},
}

@inproceedings{predator,
  author       = {Luk{\'{a}}s Hol{\'{\i}}k and
                  Michal Kotoun and
                  Petr Peringer and
                  Veronika Sokov{\'{a}} and
                  Marek Trt{\'{\i}}k and
                  Tom{\'{a}}s Vojnar},
  editor       = {Roderick Bloem and
                  Eli Arbel},
  title        = {Predator Shape Analysis Tool Suite},
  booktitle    = {Hardware and Software: Verification and Testing - 12th International
                  Haifa Verification Conference, {HVC} 2016, Haifa, Israel, November
                  14-17, 2016, Proceedings},
  series       = {Lecture Notes in Computer Science},
  volume       = {10028},
  pages        = {202--209},
  year         = {2016},
  url          = {https://doi.org/10.1007/978-3-319-49052-6\_13},
  doi          = {10.1007/978-3-319-49052-6\_13},
  timestamp    = {Mon, 03 Jan 2022 22:33:54 +0100},
  biburl       = {https://dblp.org/rec/conf/hvc/HolikKPSTV16.bib},
  bibsource    = {dblp computer science bibliography, https://dblp.org}
}

@inproceedings{predatorhp,
  author       = {Petr Peringer and
                  Veronika Sokov{\'{a}} and
                  Tom{\'{a}}s Vojnar},
  editor       = {Armin Biere and
                  David Parker},
  title        = {PredatorHP Revamped (Not Only) for Interval-Sized Memory Regions and
                  Memory Reallocation (Competition Contribution)},
  booktitle    = {Tools and Algorithms for the Construction and Analysis of Systems
                  - 26th International Conference, {TACAS} 2020, Held as Part of the
                  European Joint Conferences on Theory and Practice of Software, {ETAPS}
                  2020, Dublin, Ireland, April 25-30, 2020, Proceedings, Part {II}},
  series       = {Lecture Notes in Computer Science},
  volume       = {12079},
  pages        = {408--412},
  publisher    = {Springer},
  year         = {2020},
  url          = {https://doi.org/10.1007/978-3-030-45237-7\_30},
  doi          = {10.1007/978-3-030-45237-7\_30},
  timestamp    = {Fri, 14 May 2021 08:34:19 +0200},
  biburl       = {https://dblp.org/rec/conf/tacas/PeringerSV20.bib},
  bibsource    = {dblp computer science bibliography, https://dblp.org}
}

@inproceedings{seahorn,
  author    = {Arie Gurfinkel and
               Temesghen Kahsai and
               Anvesh Komuravelli and
               Jorge A. Navas},
  editor    = {Daniel Kroening and
               Corina S. Pasareanu},
  title     = {{The SeaHorn Verification Framework}},
  booktitle = {Computer Aided Verification - 27th International Conference, {CAV}
               2015, San Francisco, CA, USA, July 18-24, 2015, Proceedings, Part
               {I}},
  series    = {Lecture Notes in Computer Science},
  volume    = {9206},
  pages     = {343--361},
  publisher = {Springer},
  year      = {2015},
  url       = {https://doi.org/10.1007/978-3-319-21690-4\_20},
  doi       = {10.1007/978-3-319-21690-4\_20},
  timestamp = {Mon, 03 Jan 2022 22:13:44 +0100},
  biburl    = {https://dblp.org/rec/conf/cav/GurfinkelKKN15.bib},
  bibsource = {dblp computer science bibliography, https://dblp.org}
}

@inproceedings{flanagan01popl,
  author = {Flanagan, Cormac and Saxe, James},
  year = {2001},
  month = {03},
  pages = {193-205},
  title = {Avoiding exponential explosion: Generating compact verification conditions},
  volume = {36},
  booktitle = {Conference Record of the Annual {ACM} Symposium on Principles of Programming Languages},
  doi = {10.1145/373243.360220}
}

@phdthesis{DBLP:phd/us/Owicki75,
  author       = {Susan S. Owicki},
  title        = {Axiomatic Proof Techniques for Parallel Programs},
  school       = {Cornell University, {USA}},
  year         = {1975},
  timestamp    = {Mon, 04 Apr 2022 21:23:28 +0200},
  biburl       = {https://dblp.org/rec/phd/us/Owicki75.bib},
  bibsource    = {dblp computer science bibliography, https://dblp.org}
}

@inproceedings{DBLP:conf/lopstr/EsenR20,
  author       = {Zafer Esen and
                  Philipp R{\"{u}}mmer},
  editor       = {Maribel Fern{\'{a}}ndez},
  title        = {Reasoning in the Theory of Heap: Satisfiability and Interpolation},
  booktitle    = {Logic-Based Program Synthesis and Transformation - 30th International
                  Symposium, {LOPSTR} 2020, Bologna, Italy, September 7-9, 2020, Proceedings},
  series       = {Lecture Notes in Computer Science},
  volume       = {12561},
  pages        = {173--191},
  publisher    = {Springer},
  year         = {2020},
  doi          = {10.1007/978-3-030-68446-4\_9},
  timestamp    = {Mon, 15 Feb 2021 15:06:21 +0100},
  biburl       = {https://dblp.org/rec/conf/lopstr/EsenR20.bib},
  bibsource    = {dblp computer science bibliography, https://dblp.org}
}

@inproceedings{DBLP:conf/lics/Reynolds02,
  author       = {John C. Reynolds},
  title        = {Separation Logic: {A} Logic for Shared Mutable Data Structures},
  booktitle    = {17th {IEEE} Symposium on Logic in Computer Science {(LICS} 2002),
                  22-25 July 2002, Copenhagen, Denmark, Proceedings},
  pages        = {55--74},
  publisher    = {{IEEE} Computer Society},
  year         = {2002},
  url          = {https://doi.org/10.1109/LICS.2002.1029817},
  doi          = {10.1109/LICS.2002.1029817},
  timestamp    = {Fri, 24 Mar 2023 00:01:50 +0100},
  biburl       = {https://dblp.org/rec/conf/lics/Reynolds02.bib},
  bibsource    = {dblp computer science bibliography, https://dblp.org}
}

@inproceedings{DBLP:conf/fmcad/KomuravelliBGM15,
  author       = {Anvesh Komuravelli and
                  Nikolaj S. Bj{\o}rner and
                  Arie Gurfinkel and
                  Kenneth L. McMillan},
  editor       = {Roope Kaivola and
                  Thomas Wahl},
  title        = {Compositional Verification of Procedural Programs using Horn Clauses
                  over Integers and Arrays},
  booktitle    = {Formal Methods in Computer-Aided Design, {FMCAD} 2015, Austin, Texas,
                  USA, September 27-30, 2015},
  pages        = {89--96},
  publisher    = {{IEEE}},
  year         = {2015},
  url          = {https://doi.org/10.1109/FMCAD.2015.7542257},
  doi          = {10.1109/FMCAD.2015.7542257},
  timestamp    = {Mon, 26 Jun 2023 20:45:03 +0200},
  biburl       = {https://dblp.org/rec/conf/fmcad/KomuravelliBGM15.bib},
  bibsource    = {dblp computer science bibliography, https://dblp.org}
}

@article{DBLP:journals/fuin/AngelisFPP17a,
  author    = {Emanuele {De Angelis} and
               Fabio Fioravanti and
               Alberto Pettorossi and
               Maurizio Proietti},
  title     = {Program Verification using Constraint Handling Rules and Array Constraint
               Generalizations},
  journal   = {Fundam. Inform.},
  volume    = {150},
  number    = {1},
  pages     = {73--117},
  year      = {2017},
  url       = {https://doi.org/10.3233/FI-2017-1461},
  doi       = {10.3233/FI-2017-1461},
  timestamp = {Fri, 02 Nov 2018 09:29:23 +0100},
  biburl    = {https://dblp.org/rec/journals/fuin/AngelisFPP17a.bib},
  bibsource = {dblp computer science bibliography, https://dblp.org}
}

@inproceedings{DBLP:conf/tacas/DistefanoOY06,
  author       = {Dino Distefano and
                  Peter W. O'Hearn and
                  Hongseok Yang},
  editor       = {Holger Hermanns and
                  Jens Palsberg},
  title        = {A Local Shape Analysis Based on Separation Logic},
  booktitle    = {Tools and Algorithms for the Construction and Analysis of Systems,
                  12th International Conference, {TACAS} 2006 Held as Part of the Joint
                  European Conferences on Theory and Practice of Software, {ETAPS} 2006,
                  Vienna, Austria, March 25 - April 2, 2006, Proceedings},
  series       = {Lecture Notes in Computer Science},
  volume       = {3920},
  pages        = {287--302},
  publisher    = {Springer},
  year         = {2006},
  url          = {https://doi.org/10.1007/11691372\_19},
  doi          = {10.1007/11691372\_19},
  timestamp    = {Sat, 30 Sep 2023 09:57:43 +0200},
  biburl       = {https://dblp.org/rec/conf/tacas/DistefanoOY06.bib},
  bibsource    = {dblp computer science bibliography, https://dblp.org}
}

@inproceedings{DBLP:conf/csl/OHearnRY01,
  author       = {Peter W. O'Hearn and
                  John C. Reynolds and
                  Hongseok Yang},
  editor       = {Laurent Fribourg},
  title        = {Local Reasoning about Programs that Alter Data Structures},
  booktitle    = {Computer Science Logic, 15th International Workshop, {CSL} 2001. 10th
                  Annual Conference of the EACSL, Paris, France, September 10-13, 2001,
                  Proceedings},
  series       = {Lecture Notes in Computer Science},
  volume       = {2142},
  pages        = {1--19},
  publisher    = {Springer},
  year         = {2001},
  url          = {https://doi.org/10.1007/3-540-44802-0\_1},
  doi          = {10.1007/3-540-44802-0\_1},
  timestamp    = {Tue, 14 May 2019 10:00:42 +0200},
  biburl       = {https://dblp.org/rec/conf/csl/OHearnRY01.bib},
  bibsource    = {dblp computer science bibliography, https://dblp.org}
}

@inproceedings{DBLP:conf/cav/DudkaPV11,
  author       = {Kamil Dudka and
                  Petr Peringer and
                  Tom{\'{a}}s Vojnar},
  editor       = {Ganesh Gopalakrishnan and
                  Shaz Qadeer},
  title        = {Predator: {A} Practical Tool for Checking Manipulation of Dynamic
                  Data Structures Using Separation Logic},
  booktitle    = {Computer Aided Verification - 23rd International Conference, {CAV}
                  2011, Snowbird, UT, USA, July 14-20, 2011. Proceedings},
  series       = {Lecture Notes in Computer Science},
  volume       = {6806},
  pages        = {372--378},
  publisher    = {Springer},
  year         = {2011},
  url          = {https://doi.org/10.1007/978-3-642-22110-1\_29},
  doi          = {10.1007/978-3-642-22110-1\_29},
  timestamp    = {Tue, 14 May 2019 10:00:43 +0200},
  biburl       = {https://dblp.org/rec/conf/cav/DudkaPV11.bib},
  bibsource    = {dblp computer science bibliography, https://dblp.org}
}

@inproceedings{DBLP:conf/sas/ChangRN07,
  author       = {Bor{-}Yuh Evan Chang and
                  Xavier Rival and
                  George C. Necula},
  editor       = {Hanne Riis Nielson and
                  Gilberto Fil{\'{e}}},
  title        = {Shape Analysis with Structural Invariant Checkers},
  booktitle    = {Static Analysis, 14th International Symposium, {SAS} 2007, Kongens
                  Lyngby, Denmark, August 22-24, 2007, Proceedings},
  series       = {Lecture Notes in Computer Science},
  volume       = {4634},
  pages        = {384--401},
  publisher    = {Springer},
  year         = {2007},
  url          = {https://doi.org/10.1007/978-3-540-74061-2\_24},
  doi          = {10.1007/978-3-540-74061-2\_24},
  timestamp    = {Tue, 14 May 2019 10:00:52 +0200},
  biburl       = {https://dblp.org/rec/conf/sas/ChangRN07.bib},
  bibsource    = {dblp computer science bibliography, https://dblp.org}
}

@inproceedings{DBLP:conf/cav/BerdineCCDOWY07,
  author       = {Josh Berdine and
                  Cristiano Calcagno and
                  Byron Cook and
                  Dino Distefano and
                  Peter W. O'Hearn and
                  Thomas Wies and
                  Hongseok Yang},
  editor       = {Werner Damm and
                  Holger Hermanns},
  title        = {Shape Analysis for Composite Data Structures},
  booktitle    = {Computer Aided Verification, 19th International Conference, {CAV}
                  2007, Berlin, Germany, July 3-7, 2007, Proceedings},
  series       = {Lecture Notes in Computer Science},
  volume       = {4590},
  pages        = {178--192},
  publisher    = {Springer},
  year         = {2007},
  url          = {https://doi.org/10.1007/978-3-540-73368-3\_22},
  doi          = {10.1007/978-3-540-73368-3\_22},
  timestamp    = {Sat, 30 Sep 2023 09:35:55 +0200},
  biburl       = {https://dblp.org/rec/conf/cav/BerdineCCDOWY07.bib},
  bibsource    = {dblp computer science bibliography, https://dblp.org}
}

@inproceedings{refinementtypes,
  author       = {Timothy S. Freeman and
                  Frank Pfenning},
  editor       = {David S. Wise},
  title        = {Refinement Types for {ML}},
  booktitle    = {Proceedings of the {ACM} SIGPLAN'91 Conference on Programming Language
                  Design and Implementation (PLDI), Toronto, Ontario, Canada, June 26-28,
                  1991},
  pages        = {268--277},
  publisher    = {{ACM}},
  year         = {1991},
  url          = {https://doi.org/10.1145/113445.113468},
  doi          = {10.1145/113445.113468},
  timestamp    = {Fri, 09 Jul 2021 14:03:46 +0200},
  biburl       = {https://dblp.org/rec/conf/pldi/FreemanP91.bib},
  bibsource    = {dblp computer science bibliography, https://dblp.org}
}

@inproceedings{liquidtypes,
  author       = {Patrick Maxim Rondon and
                  Ming Kawaguchi and
                  Ranjit Jhala},
  editor       = {Rajiv Gupta and
                  Saman P. Amarasinghe},
  title        = {Liquid types},
  booktitle    = {Proceedings of the {ACM} {SIGPLAN} 2008 Conference on Programming
                  Language Design and Implementation, Tucson, AZ, USA, June 7-13, 2008},
  pages        = {159--169},
  publisher    = {{ACM}},
  year         = {2008},
  url          = {https://doi.org/10.1145/1375581.1375602},
  doi          = {10.1145/1375581.1375602},
  timestamp    = {Fri, 25 Jun 2021 14:48:54 +0200},
  biburl       = {https://dblp.org/rec/conf/pldi/RondonKJ08.bib},
  bibsource    = {dblp computer science bibliography, https://dblp.org}
}

@inproceedings{DBLP:conf/csfw/BengtsonBFGM08,
  author       = {Jesper Bengtson and
                  Karthikeyan Bhargavan and
                  C{\'{e}}dric Fournet and
                  Andrew D. Gordon and
                  Sergio Maffeis},
  title        = {Refinement Types for Secure Implementations},
  booktitle    = {Proceedings of the 21st {IEEE} Computer Security Foundations Symposium,
                  {CSF} 2008, Pittsburgh, Pennsylvania, USA, 23-25 June 2008},
  pages        = {17--32},
  publisher    = {{IEEE} Computer Society},
  year         = {2008},
  url          = {https://doi.org/10.1109/CSF.2008.27},
  doi          = {10.1109/CSF.2008.27},
  timestamp    = {Fri, 24 Mar 2023 00:04:59 +0100},
  biburl       = {https://dblp.org/rec/conf/csfw/BengtsonBFGM08.bib},
  bibsource    = {dblp computer science bibliography, https://dblp.org}
}

@article{DBLP:journals/toplas/KnowlesF10,
  author       = {Kenneth L. Knowles and
                  Cormac Flanagan},
  title        = {Hybrid type checking},
  journal      = {{ACM} Trans. Program. Lang. Syst.},
  volume       = {32},
  number       = {2},
  pages        = {6:1--6:34},
  year         = {2010},
  url          = {https://doi.org/10.1145/1667048.1667051},
  doi          = {10.1145/1667048.1667051},
  timestamp    = {Fri, 03 Jun 2022 09:45:59 +0200},
  biburl       = {https://dblp.org/rec/journals/toplas/KnowlesF10.bib},
  bibsource    = {dblp computer science bibliography, https://dblp.org}
}

@inproceedings{DBLP:conf/sas/BjornerMR13,
  author       = {Nikolaj S. Bj{\o}rner and
                  Kenneth L. McMillan and
                  Andrey Rybalchenko},
  editor       = {Francesco Logozzo and
                  Manuel F{\"{a}}hndrich},
  title        = {On Solving Universally Quantified Horn Clauses},
  booktitle    = {Static Analysis - 20th International Symposium, {SAS} 2013, Seattle,
                  WA, USA, June 20-22, 2013. Proceedings},
  series       = {Lecture Notes in Computer Science},
  volume       = {7935},
  pages        = {105--125},
  publisher    = {Springer},
  year         = {2013},
  url          = {https://doi.org/10.1007/978-3-642-38856-9\_8},
  doi          = {10.1007/978-3-642-38856-9\_8},
  timestamp    = {Thu, 14 Apr 2022 20:26:15 +0200},
  biburl       = {https://dblp.org/rec/conf/sas/BjornerMR13.bib},
  bibsource    = {dblp computer science bibliography, https://dblp.org}
}

@inproceedings{DBLP:conf/sas/MonniauxG16,
  author       = {David Monniaux and
                  Laure Gonnord},
  editor       = {Xavier Rival},
  title        = {Cell Morphing: From Array Programs to Array-Free Horn Clauses},
  booktitle    = {Static Analysis - 23rd International Symposium, {SAS} 2016, Edinburgh,
                  UK, September 8-10, 2016, Proceedings},
  series       = {Lecture Notes in Computer Science},
  volume       = {9837},
  pages        = {361--382},
  publisher    = {Springer},
  year         = {2016},
  url          = {https://doi.org/10.1007/978-3-662-53413-7\_18},
  doi          = {10.1007/978-3-662-53413-7\_18},
  timestamp    = {Tue, 14 May 2019 10:00:52 +0200},
  biburl       = {https://dblp.org/rec/conf/sas/MonniauxG16.bib},
  bibsource    = {dblp computer science bibliography, https://dblp.org}
}

@inproceedings{DBLP:conf/popl/JonesM79,
  author       = {Neil D. Jones and
                  Steven S. Muchnick},
  editor       = {Alfred V. Aho and
                  Stephen N. Zilles and
                  Barry K. Rosen},
  title        = {Flow Analysis and Optimization of Lisp-Like Structures},
  booktitle    = {Conference Record of the Sixth Annual {ACM} Symposium on Principles
                  of Programming Languages, San Antonio, Texas, USA, January 1979},
  pages        = {244--256},
  publisher    = {{ACM} Press},
  year         = {1979},
  url          = {https://doi.org/10.1145/567752.567776},
  doi          = {10.1145/567752.567776},
  timestamp    = {Thu, 19 Feb 2026 16:17:02 +0100},
  biburl       = {https://dblp.org/rec/conf/popl/JonesM79.bib},
  bibsource    = {dblp computer science bibliography, https://dblp.org}
}

@article{DBLP:journals/ftpl/ChangDMRR20,
  author       = {Bor{-}Yuh Evan Chang and
                  Cezara Dragoi and
                  Roman Manevich and
                  Noam Rinetzky and
                  Xavier Rival},
  title        = {Shape Analysis},
  journal      = {Found. Trends Program. Lang.},
  volume       = {6},
  number       = {1-2},
  pages        = {1--158},
  year         = {2020},
  url          = {https://doi.org/10.1561/2500000037},
  doi          = {10.1561/2500000037},
  timestamp    = {Sat, 09 Apr 2022 12:22:52 +0200},
  biburl       = {https://dblp.org/rec/journals/ftpl/ChangDMRR20.bib},
  bibsource    = {dblp computer science bibliography, https://dblp.org}
}

@inproceedings{DBLP:conf/cav/FaellaP25,
  author       = {Marco Faella and
                  Gennaro Parlato},
  editor       = {Ruzica Piskac and
                  Zvonimir Rakamaric},
  title        = {Verifying Tree-Manipulating Programs via CHCs},
  booktitle    = {Computer Aided Verification - 37th International Conference, {CAV}
                  2025, Zagreb, Croatia, July 23-25, 2025, Proceedings, Part {I}},
  series       = {Lecture Notes in Computer Science},
  volume       = {15931},
  pages        = {3--28},
  publisher    = {Springer},
  year         = {2025},
  url          = {https://doi.org/10.1007/978-3-031-98668-0\_1},
  doi          = {10.1007/978-3-031-98668-0\_1},
  timestamp    = {Sun, 02 Nov 2025 12:33:32 +0100},
  biburl       = {https://dblp.org/rec/conf/cav/FaellaP25.bib},
  bibsource    = {dblp computer science bibliography, https://dblp.org}
}

@inproceedings{forester,
  author       = {Luk{\'{a}}s Hol{\'{\i}}k and
                  Ondrej Leng{\'{a}}l and
                  Adam Rogalewicz and
                  Jir{\'{\i}} Sim{\'{a}}cek and
                  Tom{\'{a}}s Vojnar},
  editor       = {Natasha Sharygina and
                  Helmut Veith},
  title        = {Fully Automated Shape Analysis Based on Forest Automata},
  booktitle    = {Computer Aided Verification - 25th International Conference, {CAV}
                  2013, Saint Petersburg, Russia, July 13-19, 2013. Proceedings},
  series       = {Lecture Notes in Computer Science},
  volume       = {8044},
  pages        = {740--755},
  publisher    = {Springer},
  year         = {2013},
  url          = {https://doi.org/10.1007/978-3-642-39799-8\_52},
  doi          = {10.1007/978-3-642-39799-8\_52},
  timestamp    = {Tue, 14 Oct 2025 19:33:59 +0200},
  biburl       = {https://dblp.org/rec/conf/cav/HolikLRSV13.bib},
  bibsource    = {dblp computer science bibliography, https://dblp.org}
}

@inproceedings{viper,
  author      = {P. M{\"u}ller and M. Schwerhoff and A. J. Summers},
  title       = {Viper: A Verification Infrastructure for Permission-Based Reasoning},
  booktitle   = {Verification, Model Checking, and Abstract Interpretation (VMCAI)},
  editor      = {B. Jobstmann and K. R. M. Leino},
  year        = {2016},
  publisher   = {Springer-Verlag},
  series      = {LNCS},
  pages       = {41-62},
  volume      = {9583},
  url = {https://doi.org/10.1007/978-3-662-49122-5_2},
  urltext = {[Publisher]}
}

@article{DBLP:journals/pacmpl/KrishnaSW18,
  author       = {Siddharth Krishna and
                  Dennis E. Shasha and
                  Thomas Wies},
  title        = {Go with the flow: compositional abstractions for concurrent data structures},
  journal      = {Proc. {ACM} Program. Lang.},
  volume       = {2},
  number       = {{POPL}},
  pages        = {37:1--37:31},
  year         = {2018},
  url          = {https://doi.org/10.1145/3158125},
  doi          = {10.1145/3158125},
  timestamp    = {Mon, 03 Mar 2025 22:20:18 +0100},
  biburl       = {https://dblp.org/rec/journals/pacmpl/KrishnaSW18.bib},
  bibsource    = {dblp computer science bibliography, https://dblp.org}
}

@inproceedings{DBLP:conf/esop/KrishnaSW20,
  author       = {Siddharth Krishna and
                  Alexander J. Summers and
                  Thomas Wies},
  editor       = {Peter M{\"{u}}ller},
  title        = {Local Reasoning for Global Graph Properties},
  booktitle    = {Programming Languages and Systems - 29th European Symposium on Programming,
                  {ESOP} 2020, Held as Part of the European Joint Conferences on Theory
                  and Practice of Software, {ETAPS} 2020, Dublin, Ireland, April 25-30,
                  2020, Proceedings},
  series       = {Lecture Notes in Computer Science},
  volume       = {12075},
  pages        = {308--335},
  publisher    = {Springer},
  year         = {2020},
  url          = {https://doi.org/10.1007/978-3-030-44914-8\_12},
  doi          = {10.1007/978-3-030-44914-8\_12},
  timestamp    = {Mon, 03 Mar 2025 21:03:47 +0100},
  biburl       = {https://dblp.org/rec/conf/esop/KrishnaSW20.bib},
  bibsource    = {dblp computer science bibliography, https://dblp.org}
}

@inproceedings{DBLP:conf/tacas/MeyerWW23,
  author       = {Roland Meyer and
                  Thomas Wies and
                  Sebastian Wolff},
  editor       = {Sriram Sankaranarayanan and
                  Natasha Sharygina},
  title        = {Make Flows Small Again: Revisiting the Flow Framework},
  booktitle    = {Tools and Algorithms for the Construction and Analysis of Systems
                  - 29th International Conference, {TACAS} 2023, Held as Part of the
                  European Joint Conferences on Theory and Practice of Software, {ETAPS}
                  2022, Paris, France, April 22-27, 2023, Proceedings, Part {I}},
  series       = {Lecture Notes in Computer Science},
  volume       = {13993},
  pages        = {628--646},
  publisher    = {Springer},
  year         = {2023},
  url          = {https://doi.org/10.1007/978-3-031-30823-9\_32},
  doi          = {10.1007/978-3-031-30823-9\_32},
  timestamp    = {Wed, 17 May 2023 21:55:33 +0200},
  biburl       = {https://dblp.org/rec/conf/tacas/MeyerWW23.bib},
  bibsource    = {dblp computer science bibliography, https://dblp.org}
}

@inproceedings{DBLP:conf/cav/WolffGEHRW25,
  author       = {Sebastian Wolff and
                  Ekanshdeep Gupta and
                  Zafer Esen and
                  Hossein Hojjat and
                  Philipp R{\"{u}}mmer and
                  Thomas Wies},
  editor       = {Ruzica Piskac and
                  Zvonimir Rakamaric},
  title        = {Arithmetizing Shape Analysis},
  booktitle    = {Computer Aided Verification - 37th International Conference, {CAV}
                  2025, Zagreb, Croatia, July 23-25, 2025, Proceedings, Part {I}},
  series       = {Lecture Notes in Computer Science},
  volume       = {15931},
  pages        = {56--80},
  publisher    = {Springer},
  year         = {2025},
  url          = {https://doi.org/10.1007/978-3-031-98668-0\_3},
  doi          = {10.1007/978-3-031-98668-0\_3},
  timestamp    = {Sun, 03 Aug 2025 16:56:30 +0200},
  biburl       = {https://dblp.org/rec/conf/cav/WolffGEHRW25.bib},
  bibsource    = {dblp computer science bibliography, https://dblp.org}
}

@inproceedings{DBLP:conf/sas/Garcia-Contreras22,
  author       = {Isabel Garcia{-}Contreras and
                  Arie Gurfinkel and
                  Jorge A. Navas},
  editor       = {Gagandeep Singh and
                  Caterina Urban},
  title        = {Efficient Modular SMT-Based Model Checking of Pointer Programs},
  booktitle    = {Static Analysis - 29th International Symposium, {SAS} 2022, Auckland,
                  New Zealand, December 5-7, 2022, Proceedings},
  series       = {Lecture Notes in Computer Science},
  volume       = {13790},
  pages        = {227--246},
  publisher    = {Springer},
  year         = {2022},
  url          = {https://doi.org/10.1007/978-3-031-22308-2\_11},
  doi          = {10.1007/978-3-031-22308-2\_11},
  timestamp    = {Sun, 25 Dec 2022 14:02:21 +0100},
  biburl       = {https://dblp.org/rec/conf/sas/Garcia-Contreras22.bib},
  bibsource    = {dblp computer science bibliography, https://dblp.org}
}

@inproceedings{DBLP:conf/vmcai/SuNGG25,
  author       = {Yusen Su and
                  Jorge A. Navas and
                  Arie Gurfinkel and
                  Isabel Garcia{-}Contreras},
  editor       = {Shankaranarayanan Krishna and
                  Sriram Sankaranarayanan and
                  Ashutosh Trivedi},
  title        = {Automatic Inference of Relational Object Invariants},
  booktitle    = {Verification, Model Checking, and Abstract Interpretation - 26th International
                  Conference, {VMCAI} 2025, Denver, CO, USA, January 20-21, 2025, Proceedings,
                  Part {I}},
  series       = {Lecture Notes in Computer Science},
  volume       = {15529},
  pages        = {214--236},
  publisher    = {Springer},
  year         = {2025},
  url          = {https://doi.org/10.1007/978-3-031-82700-6\_10},
  doi          = {10.1007/978-3-031-82700-6\_10},
  timestamp    = {Fri, 07 Mar 2025 18:29:07 +0100},
  biburl       = {https://dblp.org/rec/conf/vmcai/SuNGG25.bib},
  bibsource    = {dblp computer science bibliography, https://dblp.org}
}

@TECHREPORT{BarFT-RR-17,
  author =	 {Clark Barrett and Pascal Fontaine and Cesare Tinelli},
  title =	 {{The SMT-LIB Standard: Version 2.6}},
  institution =	 {Department of Computer Science, The University of Iowa},
  year =	 2017,
  note =	 {Available at {\tt www.SMT-LIB.org}}
}

@inproceedings{DBLP:conf/tacas/MalikMSSVW18,
  author       = {Viktor Mal{\'{\i}}k and
                  Stefan Marticek and
                  Peter Schrammel and
                  Mandayam K. Srivas and
                  Tom{\'{a}}s Vojnar and
                  Johanan Wahlang},
  editor       = {Dirk Beyer and
                  Marieke Huisman},
  title        = {2LS: Memory Safety and Non-termination - (Competition Contribution)},
  booktitle    = {Tools and Algorithms for the Construction and Analysis of Systems
                  - 24th International Conference, {TACAS} 2018, Held as Part of the
                  European Joint Conferences on Theory and Practice of Software, {ETAPS}
                  2018, Thessaloniki, Greece, April 14-20, 2018, Proceedings, Part {II}},
  series       = {Lecture Notes in Computer Science},
  volume       = {10806},
  pages        = {417--421},
  publisher    = {Springer},
  year         = {2018},
  url          = {https://doi.org/10.1007/978-3-319-89963-3\_24},
  doi          = {10.1007/978-3-319-89963-3\_24},
  timestamp    = {Fri, 09 Apr 2021 18:45:36 +0200},
  biburl       = {https://dblp.org/rec/conf/tacas/MalikMSSVW18.bib},
  bibsource    = {dblp computer science bibliography, https://dblp.org}
}

@software{artifact-zenodo,
  author       = {Esen, Zafer and
                  Rümmer, Philipp and
                  Weber, Tjark},
  title        = {Artifact for the paper "Sound and Complete
                   Invariant-Based Heap Encodings"
                  },
  month        = feb,
  year         = 2026,
  publisher    = {Zenodo},
  doi          = {10.5281/zenodo.18500638},
  url          = {https://doi.org/10.5281/zenodo.18500638},
}

\appendix
\newpage
\section{Appendix}\label{sec:appendix}
\subsection[Correctness of EncR]{Correctness of $\encR$}\label{app:proof-r}
We show the correctness of $\encR$ by showing that given \alang\ program $p$, $p$ is \emph{equi-safe} with $\encR(p)$. The core observation used in the proof is that the relation~$R$, obtained as the least fixed point of the immediate consequence operator, correctly represents the values read from the heap.
We prove this result in multiple steps; the first step is to show that the relation $R$ is a \emph{partial function} that maps the program inputs~$\inG$ and the read count~$\cnt$ to the value that is read:
\begin{lemma}[Functional consistency of $\interp^*(R)$]
  \label{prop:R-partial-app}
  Let $\interp^*$ be the least fixed point of the immediate consequence operator $\ico_p$ for \alang\ program $p=\encR(q)$ obtained as the $R$-encoding of some program~$q$. Then $\interp^*(R)$ is a partial function from its first two arguments to its third argument:
  \begin{equation}
    \label{eq:functional-app}
    \forall g, n, v_1, v_2.\ \Bigl( (g, n, v_1) \in \interp^*(R) \wedge (g, n, v_2) \in \interp^*(R) \Bigr)
      \Longrightarrow v_1 = v_2.
  \end{equation}
\end{lemma}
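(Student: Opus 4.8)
**Proof proposal for Lemma (Functional consistency of $\interp^*(R)$).**

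The plan is to prove the statement by induction on the ordinal $\alpha$ in the fixed-point approximation $\interp_0 \sqsubseteq \interp_1 \sqsubseteq \cdots$, where $\interp_0(R) = \emptyset$ and $\interp_{\alpha+1} = \ico_p(\interp_\alpha)$, with $\interp^* = \bigsqcup_\alpha \interp_\alpha$. Since every tuple of $\interp^*(R)$ enters at some stage $\alpha$, it suffices to establish the invariant: for all $\alpha$, $\interp_\alpha(R)$ is a partial function from its first two arguments to its third. The base case $\alpha = 0$ is trivial since $\interp_0(R) = \emptyset$. For limit ordinals, we use that a union of a chain of partial functions that agree pairwise is again a partial function, which reduces to the successor case.

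For the successor step, assume $\interp_\alpha(R)$ is a partial function; I must show $\interp_{\alpha+1}(R) = \ico_p(\interp_\alpha)(R)$ is too. By the definition of $\ico_p$, a new tuple $(g,n,v)$ is added to $\interp_\alpha(R)$ only when some execution $\bigstepevalp{\interp_\alpha}(p, s, \eh)$ fails with $\error(R, (g,n,v))$. Inspecting the code snippet that $\encR$ introduces for a $\readkw$ statement (\Cref{tbl:r-encoding-rules}), the only $\assertkw$ over $R$ is $\assertcmd{R(\inG, \cnt, \lastn)}$, reached after $\assigncmd{\cnt}{\cnt+1}$ in the branch where $\pg = p$. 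The key arithmetic fact is that $\cnt$ is incremented by exactly one immediately before each such assertion (and, in the $\encR$ encoding, $\cnt$ is modified nowhere else), and is initialised to $0$; hence at the $k$-th $\readkw$-encoding executed, the value of $\cnt$ is exactly $k$. Since the program $q$ (and therefore $p = \encR(q)$) is deterministic and $\inG$ is never modified after initialisation, the pair $(s(\inG), k) = (g, n)$ uniquely determines the entire execution prefix up to that assertion, and in particular uniquely determines the value $v = s(\lastn)$ at that point. Therefore any failing assertion over $R$ with first two components $(g, n)$ carries the same third component $v$.

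Combining: a tuple $(g, n, v_2)$ newly added by $\ico_p$ must have $v_2$ equal to this uniquely-determined value; and any old tuple $(g, n, v_1) \in \interp_\alpha(R)$ was, by the induction hypothesis applied together with an analogous argument at the earlier stage, also forced to have $v_1$ equal to this same value — more carefully, one shows that the determinism argument pins the value independently of the stage $\alpha$, so old and new tuples cannot conflict. Hence $\interp_{\alpha+1}(R)$ remains a partial function, closing the induction, and $\interp^*(R)$ satisfies \eqref{eq:functional-app}.

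The main obstacle I anticipate is making precise the claim that ``the pair $(g,n)$ determines the execution prefix up to the assertion.'' This requires a small auxiliary lemma: in a deterministic $\lang$ program, the state at the point of the $k$-th execution of any fixed statement is a function of the initial stack alone; and moreover, changing the interpretation from $\interp_\alpha$ to $\interp_{\alpha+1}$ does not alter this prefix, because an $\assumekw$ over $R$ that passed under the smaller interpretation still passes under the larger one (monotonicity of $\ico$), and one that failed aborts the run before any relevant assertion — the same reasoning already used in the proof of \Cref{lem:safety-under-i}. Care is also needed because $\pg$ is chosen arbitrarily by the initial stack, not by $\inG$; but since $\pg$ is part of the stack $s$, and for a fixed assertion instance the branch condition $\pg = p$ is determined, this does not break the functional dependence of the asserted third argument on $(g,n)$ — different values of $\pg$ simply yield different reads being ``captured,'' never two different $v$'s for the same $(g,n)$.
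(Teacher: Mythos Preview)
Your overall strategy---induction on the approximation index $\alpha$ with the invariant that $\interp_\alpha(R)$ is a partial function---matches the paper's. The gap is in your successor step, specifically the sentence ``Since the program $q$ (and therefore $p = \encR(q)$) is deterministic \ldots\ the pair $(s(\inG), k) = (g, n)$ uniquely determines the entire execution prefix.'' This is not true as stated: $p = \encR(q)$ is deterministic only as a function of the \emph{entire} initial stack, and that stack contains the prophecy variable $\pg$ (and the $\havocvar$ bits consumed by the encoding's own $\dethavockw$ calls), neither of which is part of $\inG$. Two executions with the same $\inG$ but different $\pg$ take different branches at every encoded $\readkw$, and the then-branch assigns $\readresult := \lastn$ while the else-branch assigns a havoced-then-assumed value---so a priori the original-program variables could diverge, and with them the address $p_n$ read at step~$n$ and the value of $\lastn$ there. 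Your closing remark (``different values of $\pg$ simply yield different reads being captured, never two different $v$'s for the same $(g,n)$'') is exactly the claim to be proved, not an argument for it.

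The missing step is to \emph{use the induction hypothesis itself} to collapse the two branches. At each earlier read~$j$ that the execution passes through, either the then-branch assert $R(\inG,j,\lastn)$ succeeds---forcing $\lastn$ to equal the unique value $v_j$ with $(g,j,v_j)\in\interp_\alpha(R)$ by the partial-function IH---or the else-branch assume passes, forcing $\readresult = v_j$ by the same IH. In both cases $\readresult = v_j$, so the values fed back into the original-program variables are independent of $\pg$ and of the havoc bits. Only then does determinism of $q$ give you that the address $p_n$ read at step $n$, and the history of writes to it, are functions of $g$ alone; since the then-branch requires $\pg = p_n$, the value of $\lastn$ asserted is pinned. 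The paper makes this work by strengthening the inductive invariant to include, in addition to the partial-function property, that every tuple in $\interp_\alpha(R)$ has counter at most $\alpha$, that $\lastn$ in any reachable state is determined by $(\inG,\pg)$, and that at a failing $R$-assert both $\lastn$ and $\pg$ are determined by $\inG$ alone with $\cnt = \alpha+1$. You will need some such strengthening (or the explicit branch-collapsing argument above) to close the induction.
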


The lemma follows from the shape of the code snippet introduced by $\encR$ for \readkw{} statements and can be proved by induction on the iteration count~$\alpha$ in approximations $\interp_\alpha$ of the least fixed point~$\interp^*$. For sake of brevity, we only give a proof sketch:
\begin{proof}
  Consider an approximation~$\interp_n = \ico_p^n(\interp_0)$ for $n \in \mathbb{N}$. Due to the \assertkw{} in the code snippet for \readkw{}, the relation~$\interp_n(R)$ will contain the values read during the first~$n$ \readkw{} statements encountered during any execution of $p$; the $(n+1)$-th \readkw{} of an execution will block either due to a failing \assertkw{} in the then-branch of the encoding of \readkw{}, or due to a blocking \assumekw{} in the else-branch. Moving from $\interp_n$ to the next approximation~$\interp_{n+1} = \ico_p(\interp_{n})$, the values that can be read during the $(n+1)$-th \readkw{} will be added to the interpretation of $R$. If $\interp_n(R)$ is a partial function, then also $\interp_{n+1}(R)$ is a partial function, because the tuple added for the $(n+1)$-th \readkw{} cannot contradict any of the tuples already present in $\interp_n(R)$ (since $\cnt$ is strictly increasing), and because at most one tuple can be added to $\interp_{n+1}(R)$ for every value of $\inG$.
  
  More formally, this can be shown inductively by proving that for every ordinal~$\alpha$, the approximation $\interp_\alpha$ has the following properties:
  \begin{itemize}
    \item $\interp_\alpha(R)$ is a partial function in the sense of \eqref{eq:functional-app}.
    \item For every tuple~$(\_, n, \_) \in \interp_\alpha(R)$ it is the case that $n \leq \alpha$.
    \item In every final state $\bigstepevalp{\interp_\alpha}(p, s, \eh) = (\_, s', h)$ of an execution of $p$ for $\interp_\alpha$, it is the case that $s(\pg) = s'(\pg)$ and that the value~$s'(\lastn)$ is uniquely determined by the input $s(\inG)$ and $s(\pg)$.
    \item Whenever an execution of $p$ for $\interp_\alpha$ fails with the result~$\bigstepevalp{\interp_\alpha}(p, s, \eh) = (\error(R, \bar{v}), s', h)$, the final values~$s'(\lastn)$ and $s'(\pg)$ are uniquely determined by the input~$s(\inG)$, and it is the case that $s'(\cnt) = \alpha + 1$.
  \end{itemize}
  The property to be shown follows since $\interp^*$ is the limit of the approximations~$\interp_\alpha$.
\end{proof}

As the next step for proving the correctness of $\encR$, we introduce an intermediate encoding $\encN$, and show that $p$ is equi-safe with $\encN(p)$. The purpose of this encoding is to introduce a counter that is incremented by each encoded statement, which we will use in the inductive proof of correctness for $\encR$.

\paragraph{The Encoding $\encN$.}
Given \alang\ program $p$, $\encN(p)$ is obtained by introducing an $\sortint$ variable $\encNc$,
and before every statement $S$ over one of $\{\writekw, \allockw, \readkw\}$ in $p$, inserting
  \[\assigncmd{\encNc}{\encNc - 1};\ \assumecmd{\encNc \ge 0}.\]
Starting from the same stack $s$ with $n = s(c)$, the executions of $p$ and $\encN(p)$ will remain identical up until the $(n+1)$-th evaluation of any statement $S$ over one of $\{\writekw, \allockw, \readkw\}$ (apart from the value of $\encNc$ in $\encN(p)$), after which the evaluation of $\assumekw$ that was inserted right before $S$ will fail and program execution is stopped.

\begin{lemma}[$\encN$ is correct]\label{lem:encN-proof-app}
  Let $p$ be \alang\ program, and $\interp$ some interpretation, then $p$ and $\encN(p)$ are equi-safe, i.e.,
  $p$ is safe if and only if $\encN(p)$ is safe.
\end{lemma}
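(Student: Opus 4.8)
The plan is to establish a precise \emph{simulation} between $p$ and $\encN(p)$, from which equi-safety follows directly. First I would make rigorous the informal observation stated just before the lemma. The key point is that $\encN$ adds only the statements $\assigncmd{\encNc}{\encNc-1}$ and $\assumecmd{\encNc \ge 0}$, which are over concrete expressions and involve no uninterpreted predicates, so the same interpretation $\interp$ governs both programs. Fix such an $\interp$. Claim: for every stack $s$ with $s(\encNc) = n$ and every heap $h$, if $\bigstepevalp{\interp}(p, s, h)$ is defined and its computation executes at most $n$ statements over $\{\writekw,\allockw,\readkw\}$, then $\bigstepevalp{\interp}(\encN(p), s, h)$ is defined and agrees with it on the outcome $\sigma$, on the final heap, and on the final stack restricted to $\progvars(p)$; in every other case (i.e.\ $\bigstepevalp{\interp}(p, s, h)$ is undefined, or it is defined but executes more than $n$ such statements) $\bigstepevalp{\interp}(\encN(p), s, h)$ is undefined, because it blocks on the inserted $\assumecmd{\encNc \ge 0}$ in front of the $(n+1)$-th such statement (or because $p$ itself already blocks or diverges). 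I would prove the defined case by induction on the computation (equivalently, the big-step derivation) of $\bigstepevalp{\interp}(p, s, h)$ — the only non-routine cases being $\seqcmd{S_1}{S_2}$, where the budget $n$ is split between the two sub-computations, and $\whilecmd{e}{S}$, where it is threaded through successive unfoldings of the body — and the undefined case by tracking when the decremented counter first goes negative.

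Given the simulation, the ($\Rightarrow$) direction is immediate. Suppose $p$ is safe, say under $\interp$; I claim $\encN(p)$ is safe under the same $\interp$. Fix any initial stack $s'$ for $\encN(p)$ and let $n = s'(\encNc)$; view $s'$ as a stack for $p$ by forgetting $\encNc$. Since $p$ is safe under $\interp$, $\bigstepevalp{\interp}(p, s', \eh)$ never has an error: it is either undefined or terminates with $\success$. By the simulation, $\bigstepevalp{\interp}(\encN(p), s', \eh)$ is either undefined (because $p$'s run was, or because it blocked at the $(n+1)$-th heap statement) or terminates with $\success$. In no case does it error, so $\encN(p)$ is safe.

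For ($\Leftarrow$) I argue by contraposition. Assume $p$ is unsafe: for \emph{every} interpretation $\interp$ there is a stack $s$ with $\tupleget{\bigstepevalp{\interp}(p, s, \eh)}{1} = \error(P, \bar v)$ for some $P, \bar v$. Fix an arbitrary $\interp$ and such an $s$. A computation of $\bigstepeval$ that ends in an error is finite, so it executes only finitely many statements over $\{\writekw,\allockw,\readkw\}$ before reaching the failing $\assertkw$; call this number $m$, and note no $(m+1)$-th such statement occurs on the way to the error. Take $s' = s[\encNc \mapsto m]$, regarded as a stack for $\encN(p)$. The run of $p$ from $(s', \eh)$ under $\interp$ is defined (it ends in an error) and executes exactly $m \le n$ heap statements, so by the simulation $\bigstepevalp{\interp}(\encN(p), s', \eh)$ is defined with the same outcome $\error(P, \bar v)$ — all $m$ inserted guards $\assumecmd{\encNc \ge 0}$ pass, since $\encNc$ counts down from $m-1$ to $0$ across the $m$ heap statements. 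Hence $\encN(p)$ is unsafe under $\interp$; as $\interp$ was arbitrary, $\encN(p)$ is unsafe. (The argument in fact shows the slightly stronger fact that $p$ and $\encN(p)$ are safe under exactly the same interpretations.)

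The main obstacle is the clean formulation and proof of the simulation claim in the presence of the partiality of $\bigstepeval$: one must treat uniformly (i) genuine non-termination of loops, (ii) $\assumekw$ statements already present in $p$ that block, and (iii) the new blocking introduced by $\encN$, while being careful throughout that an undefined computation never counts as an error, so that safety transports in \emph{both} directions. The $\whilekw$ case of the induction, where the budget consumed by one unfolding of the loop body must be deducted from what remains available to the loop's continuation, is where the bookkeeping is most delicate; stating the invariant in terms of ``number of heap statements executed so far'' rather than attempting a closed-form relation between the two runs keeps it manageable.
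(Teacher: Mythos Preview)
Your proof is correct and follows essentially the same approach as the paper's, though you are considerably more explicit: you formulate a precise simulation lemma and argue it by induction on the big-step derivation, whereas the paper's proof is a brief two-paragraph sketch that directly argues ``$p$ unsafe iff $\encN(p)$ unsafe'' by observing that the inserted $\assumekw$ cannot cause an error and that choosing $s'(\encNc) = n$ (your $m$) lets the $\encN(p)$ run reach the same failing assertion. Your ($\Leftarrow$)-by-contraposition matches the paper's ($\Rightarrow$), and your ($\Rightarrow$) matches the contrapositive of the paper's ($\Leftarrow$); the added rigor in tracking budget through $\seqcmd{S_1}{S_2}$ and $\whilekw$ and in handling partiality is sound and fills in details the paper leaves implicit. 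One cosmetic slip: in ``executes exactly $m \le n$ heap statements'' you have not defined $n$ at that point --- you presumably mean $n = s'(\encNc) = m$, so the inequality is trivial.
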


\begin{proof}
  We show that $p$ is unsafe if and only if $\encN(p)$ is unsafe, which is equivalent to the definition of equi-safety.

  ($\Rightarrow$)
  Assume $p$ is unsafe, i.e., there is some stack $s$ such that $\tupleget{\bigstepevalp{\interp}(p, s, \eh)}{1} = \error(P_1, \bar{v}_1)$. Let $n$ denote the number of evaluations of any statement $S$ over one of $\{\writekw, \allockw, \readkw\}$ before the failing assertion. In $\encN(p)$, the execution with the initial configuration $(s', \eh)$, where $s'$ is the same as $s$ except that $s'(\encNc) = n$, will also fail the same assertion, because the $\assumekw$ statement in $\cmd{\assigncmd{\encNc}{\encNc - 1};\ \assumecmd{\encNc \ge 0}}$ will always result in $\success$ in the first $n$ evaluations.

  \noindent
  ($\Leftarrow$)
  Assume that $\encN(p)$ is unsafe for some stack $s$. $\encN(p)$ is identical to $p$ except for $\encNc$ and the $\assumekw$ statements added by $\encNc$. An $\assumekw$ statement does not lead to an assertion failure (i.e., $\error(P, \bar{v})$ for some $P$ and $\bar{v}$). Therefore, an execution of $p$ using the stack $s$ will fail the same assertion that failed in $\encN(p)$.
\end{proof}

We now state the last lemma needed to show the correctness of $\encR$.

\begin{lemma}[Preservation of final states by $\encR$]\label{lem:encR-preservation-app}
  Let $p$ be \alang\ program, and $\interp^*$ be the least fixed point of $\ico_{p}$. Let $\pstar = \encN(p)$ and $\pstarstar = \encR(\pstar)$,
 %
  $\bigstepevalp{\interp^*}(\pstar, s, \eh) = (\sigma_1, s_1, h_1)$, and
  $\bigstepevalp{\interp^*}(\pstarstar, s, \eh) = (\sigma_2, s_2, h_2)$.
  Then the following holds:
\begin{equation}
\begin{array}{c@{\quad}l}
\vcenter{\hbox{$\forall s \in \stacks,\, a \in \integer.\; n = s(\encNc) \wedge a = s(\pg) \rightarrow$}} &
\begin{aligned}
\overbrace{\sigma_1 = \sigma_2}^{\foutcomes} \wedge
\overbrace{\forall v \in \progvars(\pstar).\; s_1(v) = s_2(v)}^{\fstacks} \wedge\\
\underbrace{\rd(h_1, a) = s_2(\lastn)}_{\freads} \wedge
\underbrace{|h_1| = s_2(\allocctr)}_{\fallocs}
\end{aligned}
\end{array}.
\label{eq:r-preservation-app}
\end{equation}

\end{lemma}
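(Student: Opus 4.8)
The plan is to establish the statement by induction on $n = s(\encNc)$, after slightly strengthening it so that it also asserts that $\bigstepevalp{\interp^*}(\pstar,s,\eh)$ is defined exactly when $\bigstepevalp{\interp^*}(\pstarstar,s,\eh)$ is (equation~\eqref{eq:r-preservation-app} being the content of the common, defined case), and that, whenever a run is undefined because it blocks at one of the $\assumekw$ statements introduced by $\encN$, both runs block at corresponding statements with $\fstacks$, $\freads$ and $\fallocs$ holding at that point. Wrapping $p$ in $\encN$ first is exactly what turns the number of heap operations into an induction measure: a run of $\pstar$ from a stack with $s(\encNc) = n$ performs the first $n$ of $p$'s \readkw/\writekw/\allockw{} statements and then blocks (or terminates/diverges even earlier). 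Throughout I would use that $\encR$ leaves heap-free code and control flow unchanged — re-typing $\addr$ variables as $\sortint$ is harmless because there is no pointer arithmetic, and the fresh variables $\allocctr, \cnt, \lastn, \pg$ do not occur in $p$ — so the only non-routine reasoning happens at the (at most $n$) heap operations.

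For $n = 0$ no heap operation runs, so $h_1 = \eh$, and the initialisation block of $\encR$ establishes $s_2(\allocctr) = 0 = |\eh|$ and $s_2(\lastn) = \defObj = \rd(\eh, a)$; the remainder of both runs is heap-free and hence identical, which gives $\foutcomes$ and $\fstacks$. For the step $n-1 \to n$: if $p$ performs strictly fewer than $n$ heap operations the budget never bites and the claim reduces to the level-$(n-1)$ claim; otherwise I would trace the two runs in lock-step, using the (strengthened) induction hypothesis up to the state just before the $n$-th heap operation, then perform a case split on that operation, and finally note that the remaining code is heap-free (the budget is now exhausted) and so executed identically by both programs. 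For \allockw{} and \writekw{}, the conjuncts $\freads$ and $\fallocs$ are re-established by unfolding the interpretations of $\alloc$ and $\wt$ from \Cref{tbl:heap-ops} against the snippets of \Cref{tbl:r-encoding-rules}: $\allocctr$ is incremented exactly when the heap grows, the allocated variable receives $\allocctr = |h|+1$ in agreement with $\alloc$, and $\lastn$ is reassigned to the written object precisely on the branch $\pg = p$, with the side condition $0 < p \le \allocctr$ for \writekw{} mirroring the in-bounds condition of $\wt$.

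The crux is the \readkw{} case: I must show that both branches of the snippet return $\rd(h_1, s(p))$ and, on the else-branch, that $\assumecmd{R(\inG, \cnt, \readresult)}$ is feasible and pins $\readresult$ to that value. By \Cref{prop:R-partial-app}, $\interp^*(R)$ is a partial function in its first two arguments, so at most one value of $\readresult$ is admissible; it remains to exhibit one, equal to the object $o = \rd(h_1, s(p))$ currently stored at the accessed address $a' = s(p)$. For this I would consider the run of $\pstarstar$ that differs from the current one only in that the prophecy variable is set to $\pg = a'$. By the induction hypothesis it reaches the $n$-th heap operation without erring or getting stuck, and there — since $\pg$ now equals the accessed address — it takes the branch $\pg = p$ and runs $\assertcmd{R(\inG, \cnt, \lastn)}$ with $\lastn = o$, because $\lastn$ faithfully tracks the last write to $a'$ (this is exactly what $\freads$ says). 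If that assertion failed, the immediate consequence operator would add $(\inG, \cnt, o)$ to $\interp^*(R)$, contradicting that $\interp^*$ is a fixed point; hence $(\inG, \cnt, o) \in \interp^*(R)$. Back in the current run, on the branch $\pg = p$ one already has $\lastn = \rd(h_1, a) = o$ so its $\assertkw$ succeeds too, and on the else-branch the $\assumekw$ is feasible and, by functional consistency, forces $\readresult = o$. Either way the read returns $o$, matching $\pstar$, and it leaves $h_1$, $\allocctr$ and $\lastn$ untouched, so $\freads$ and $\fallocs$ persist.

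I expect the main obstacle to be exactly this entanglement of the induction with the fixed-point characterisation of $\interp^*$: the feasibility of an $\assumekw$ at the $n$-th read has to be witnessed by a sibling run that only reaches that read, so that the induction hypothesis applies to it, whereas the matching $\assertkw$ in that sibling run must be discharged from $\interp^*$ being a fixed point, not from the induction hypothesis. Getting the degenerate runs right (non-termination, and blocking at an $\encN$-$\assumekw$) is the reason for strengthening the statement as above; the rest is a routine unfolding of the semantics against the rewrite rules.
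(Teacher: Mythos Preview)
Your proposal is correct and follows essentially the same route as the paper: induction on $n=s(\encNc)$, a case split on the $n$-th heap operation, and for the \readkw\ case the combination of \Cref{prop:R-partial-app} with a sibling run at $\pg=a'$ to argue that $(\inG,\cnt,o)\in\interp^*(R)$, so the $\assumekw$ is feasible and pins $\readresult$ to~$o$. Your explicit strengthening (handling blocked runs so the IH applies at the intermediate state) and your explicit appeal to $\interp^*$ being a fixed point for the sibling $\assertkw$ are exactly the points the paper treats informally; otherwise the arguments coincide.
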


The lemma states that, for the same initial configuration $(s,\eh)$, both $\pstar$ and $\pstarstar$ will result in the same outcome ($\foutcomes$), with the same values for all common variables in the final stacks ($\fstacks$), with the variable $\lastn$ holding the same object that is stored at $\pg$ in the final heap $h_1$ of $\pstar$ ($\freads$), and with the value of $\allocctr$ in $\pstarstar$ matching the size of the heap $h_1$ in $\pstar$ ($\fallocs$).
We will use this lemma in \Cref{thm:encR-proof-app} to show that $\pstar$ and $\pstarstar$ are equi-safe (\Cref{def:equisafety}), which is a weaker claim.

\begin{proof}
We show that each rewrite preserves~\eqref{eq:r-preservation-app} by induction on $n$.

\paragraph{\textbf{Base Case ($n = 0$):}}
When $n = 0$, no $\readkw$, $\writekw$ or $\allockw$ statements are executed due to the failing $\assumekw$ statement added by $\encN$. Therefore, for the given initial configuration, $\pstarstar$ has a single execution that is identical to the execution of $\pstar$, except for the auxiliary variable initialisation in $\pstarstar$. Both $\foutcomes$ and $\fstacks$ hold, because statements that might affect the outcome ($\assertkw$ and $\assumekw$) are only over expressions that use the common variables of $\pstar$ and $\pstarstar$, and those variables have the same values in both.  

We have $s_2(\lastn) = \defObj$, since $\lastn$ is initialised with $\defObj$ in $\pstarstar$ and has the same value in all executions with $n = 0$ since no $\writekw$s occur. We also have $h_1 = \eh$, and $\rd(h_1, a) = \defObj$ by heap theory semantics, and $\freads$ follows.

Finally, from the heap theory semantics we have $|h_1| = |\eh| = 0$, and we have $s_2(\allocctr) = 0$ due to the initialisation of $\allocctr$ in $\pstarstar$, and $\fallocs$ follows.

\paragraph{\textbf{Successor Case ($n = k + 1$):}}

Assume the induction hypothesis \eqref{eq:r-preservation-app} holds for $n = k$, for some $k \in \mathbb{N}$. 
That is, for the execution of $\pstar$ starting with some initial configuration $(s, \eh)$ and resulting in $(\sigma_1, s_1, h_1)$, there exists some execution of $\pstarstar$ that results in $(\sigma_2, s_2, h_2)$ such that~\eqref{eq:r-preservation-app} holds for $n = k$.
We show that the following (for $n = k + 1$) also holds:
\begin{equation}
\begin{array}{c@{\quad}l}
\vcenter{\hbox{$\forall s \in \stacks, a \in \integer.\; k+1 = s(\encNc) \wedge a = s(\pg) \rightarrow$}} &
\begin{aligned}
\overbrace{\sigma_1' = \sigma_2'}^{\foutcomes'} \wedge
\overbrace{\forall v \in \progvars(\pstar).\; s_1'(v) = s_2'(v)}^{\fstacks'} \wedge\\
\underbrace{\rd(h_1', a) = s_2'(\lastn)}_{\freads'} \wedge
\underbrace{|h_1'| = s_2'(\allocctr)}_{\fallocs'}
\end{aligned}
\end{array}
\label{eq:r-step-app}
\end{equation}

Recall that, due to $\encN$, incrementing $k$ has the effect of evaluating at most one additional rewritten statement $S$ (followed by any number of non-rewritten statements). If $\sigma_1 \neq \success$ due to an earlier evaluation than the last evaluated $S$, the result of that evaluation will be propagated by the semantics of \lang, trivially establishing~\eqref{eq:r-step-app} by~\eqref{eq:r-preservation-app}. Therefore we focus on the case where $\sigma_1 \sigma_2 = \success$, with $(s_1, h_1)$ as the initial state for evaluating $S$, and $(\delta_1', s_1', h_1')$ as the result of that evaluation. In the successor case we show that there exists an execution of $\pstarstar$ that starts with the initial configuration $(s_2, h_2)$ satisfying~\eqref{eq:r-preservation-app}, and results in $(\sigma_2', s_2', h_2')$ satisfying \eqref{eq:r-step-app}. Any statements that are not rewritten will trivially satisfy~\eqref{eq:r-step-app} by~\eqref{eq:r-preservation-app}, so we only focus on rewritten statements $S$ over one of $\{\writekw, \allockw, \readkw\}$ in $\pstar$.

\paragraph{\textbf{Case } $S \equiv \writecmd{x}{e}$:}
The statement $\encR(S)$ is\\
\hspace*{4ex}$\cmd{\ifcmd{\pg = x \wedge 0 < x \leq \allocctr}{\assigncmd{\lastn}{e}}{\skipcmd}}$. 
\begin{itemize}
\item $\fstacks'$: The only variable modified by $\encR(S)$ is $\lastn$, which is not in $\progvars(\pstar)$. Therefore $\fstacks'$ holds by~\eqref{eq:r-preservation-app}.
\item $\foutcomes'$: $\encR(S)$ does not contain assertions or assumptions that can change the outcome. Since the remaining executions after $S$ and $\encR(S)$ are over identical statements, we have $\foutcomes' \leftrightarrow \foutcomes$, so $\foutcomes'$ holds.
\item $\fallocs'$: $\wt$ does not modify the size of the heap, so we have $|h_1| = |h_1'|$. The variable $\allocctr$ is also not modified in $\encR(S)$, so we have $s_2(\allocctr) = s_2'(\allocctr)$. By~\eqref{eq:r-preservation-app} we have $|h_1| = s_2(\allocctr)$, thus $\fallocs'$ also holds. 
\item $\freads'$: We need to show $\rd(h_1', a) = s_2'(\lastn)$.
We have $s_2(\allocctr) = s_2'(\allocctr) = |h_1| = |h_1'|$ by $\fallocs'$. Let $o = \semtwo{e}{s_1'} = \semtwo{e}{s_2'}$ (by $\fstacks'$).
  \begin{itemize} 
  \item if $s_2(\pg) = s_2(x) \wedge 0 < s_2(x) \leq s_2(\allocctr)$, the \emph{then} branch of $\encR(S)$ is taken. Substituting $s_2(x)$ for $a$ and $s_2(\allocctr)$ for $|h_1|$, by the theory of heaps semantics, after evaluating $S$ in $\pstar$, for $0 < a \leq |h_1|$, we have $h_1'[a-1] = o$ and $\rd(h_1', a) = o$. In $\encR(S)$, we also assign $e$ to $\lastn$ in the \emph{then} branch, therefore after evaluation we have $s_2'(\lastn) = o$ and $\freads'$ holds.
  \item otherwise, the \emph{else} branch of $\encR(S)$ is taken. In this case, $\encR(S)$ reduces to $\skipcmd$, thus $s_2'(\lastn) = s_2(\lastn)$, and we need to show $\rd(h_1', a) = s_2(\lastn)$
  The condition for the \emph{else} branch is $s_2(\pg) \neq s_2(x) \vee \neg(0 < s_2(x) \leq s_2(\allocctr))$.
  \begin{itemize}
      \item Case $s_2(\pg) \neq s_2(x)$: In $\pstar$, $S = \writecmd{x}{e}$ results in $h_1' = \wt(h_1, s_1(x), o)$. Since $a = s(\pg) = s_2(\pg) \neq s_2(x) = s_1(x)$, the address $a$ is different from the address being written to. Therefore, $\rd(h_1', a) = \rd(\wt(h_1, s_1(x), o), a) = \rd(h_1, a)$ by the theory of heaps semantics.
      \item Case $\neg(0 < s_2(x) \leq s_2(\allocctr))$: This means the write to address $s_2(x)$ in $\pstar$ is invalid. In this case, by the semantics of $\wt$, $h_1' = \wt(h_1, s_1(x), o) = h_1$. Thus, $\rd(h_1', a) = \rd(h_1, a)$.
  \end{itemize}
  In both cases, $\rd(h_1', a) = \rd(h_1, a)$. By the induction hypothesis $\freads$, we have $\rd(h_1, a) = s_2(\lastn)$. Therefore, $\rd(h_1', a) = \rd(h_1, a) = s_2(\lastn) = s_2'(\lastn)$, which establishes $\freads'$.
  \end{itemize}
\end{itemize}

\paragraph{\textbf{Case } $S \equiv \alloccmd{x}{e}$:} Let $o = \semtwo{e}{s_1} = \semtwo{e}{s_2}$ (by $\fstacks$). 
\begin{itemize}
  \item $\fallocs'$: We need to show $|h_1'| = s_2'(\allocctr)$.
  In $\pstar$, we have $|h_1'| = |h_1| + 1$ by the semantics of $\alloc$.
  In $\pstarstar$, we also have $s_2'(\allocctr) = s_2(\allocctr) + 1$ due to the statement $\assigncmd{\allocctr}{\allocctr+1}$. By the induction hypothesis $\fallocs$, we have $|h_1| = s_2(\allocctr)$. Therefore, $|h_1'| = |h_1| + 1 = s_2(\allocctr) + 1 = s_2'(\allocctr)$. Hence, $\fallocs'$ holds.
  \item $\fstacks'$: The only modified variable common to $\pstar$ and $\pstarstar$ is $x$, for unmodified common variables $\fstacks'$ is established by the hypothesis~\eqref{eq:r-preservation-app}.
  In $\pstar$, $s_1'(x) = |h_1'|$ by the semantics $\alloc$. 
  In $\pstarstar$, $s_2'(\allocctr) = s_2'(x)$ due to the statement $\assigncmd{x}{\allocctr}$ in $\encR(S)$. Using $\fallocs'$ ($|h_1'| = s_2'(\allocctr)$) we have $s_1'(x) = s_2'(x)$, which establishes $\fstacks'$ also for $x$.
  \item $\foutcomes'$: follows the same reasoning as the proof of $\foutcomes'$ for $\writekw$.
  \item $\freads'$: We need to show $\rd(h_1', a) = s_2'(\lastn)$.
  In $\pstar$, we have $o = \rd(h_1', a)$ due to the semantics of $\alloc$.
  In $\pstarstar$, let $s_2^*$ be the stack after the assignment to $x$, but before the $\ifkw$ command. The proof follows the same reasoning as the proof of $\freads'$ for $\writekw$, except here $s_2^*(x) = s_2^*(\allocctr)$, so we only consider the two cases where $s_2^*(\pg) = s_2^*(x)$ and $s_2^*(\pg) \neq s_2^*(x)$.
  If $s_2^*(\pg) = s_2^*(x)$, the then branch is taken and $\freads'$ holds as $s_2'(\lastn) = o$.
  Otherwise, $s_2^*(\pg) \neq s_2^*(x)$, and the address $a$ is different from the allocated address. Therefore, $\rd(h_1', a) = \rd(h_1, a) = o$ by the theory of heaps semantics. In $\pstarstar$ the value of $\lastn$ remains the same, thus by the hypothesis~\eqref{eq:r-preservation-app} $\rd(h_1', a) = s_2'(\lastn)$. In both cases $\freads'$ holds.
\end{itemize}

\paragraph{\textbf{Case } $S \equiv \readcmd{x}{e}$:}
Let $o = \rd(h_1, \semtwo{e}{s_1})$.
The $\cmd{\dethavoccall{\readresult}}$ in $\encR(S)$ has the effect of assigning an arbitrary value to $\readresult$, but this arbitrary value is derived from the input value $\havocvar$. Since $\havocvar$ is an arbitrary input, after each $\dethavockw$ the value of $\readresult$ can be any $\sortint$ value.
\begin{itemize}
  \item $\fstacks'$: The only variable modified by $\encR(S)$ that is also in $\progvars(\pstar)$ is $x$. For other common variables, $\fstacks'$ holds by the hypothesis~\eqref{eq:r-preservation-app}.

  Let $e_v = \semtwo{e}{s_1} = \semtwo{e}{s_2}$ (by $\fstacks$).
  In $\pstar$, $s_1'(x) = \rd(h_1, e_v)$ by the semantics of $\rd$.
  In $\pstarstar$, we need to show that $s_2'(x) = \rd(h_1, e_v)$.

  Consider the following statement in $\encR(S)$:
  \[
    \ifcmd{\pg = e}
      { 
        \assertcmd{R(\inG, \cnt, \lastn)};\
        \assigncmd{\readresult}{\lastn}
      } { 
        \dethavoccall{\readresult};\ 
        \assumecmd{R(\inG, \cnt, \readresult)}
      }
  \]
\begin{figure}[tbp]
  \centering
  \includegraphics[scale=0.85]{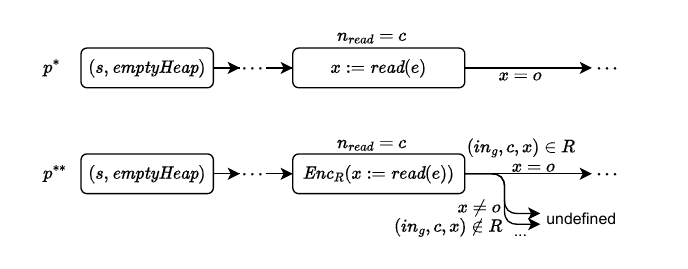}
  \Description{An illustration depicting the executions of p* and p** after a read. In p**, only a single execution survives due to R being a partial function. The diagram illustrates that all other executions are undefined after the read in p**.}
  \caption{A depiction of the executions of $\pstar$ and $\pstarstar$ after a $\readkw$. In $\pstarstar$, only one execution \emph{survives} due to $R$ being a partial function.}\label{fig:pstarstar-illustration-app}
\end{figure}
  %
  \begin{itemize}
      \item If $a = e_v$, the \emph{then} branch is taken. The assertion $\assertcmd{R(\inG, \cnt, \lastn)}$ always passes under $\interp^*$, and due to the assignment $s_2'(\readresult) = s_2(\lastn)$. In this case $a = e_v$, so $\rd(h_1, e_v) = \rd(h_1, a) = s_2(\lastn) = s_2'(\readresult)$, which shows $s_2'(\readresult) = \rd(h_1, e_v)$.

      \item If $a \neq e_v$, the \emph{else} branch is taken, and $\assumecmd{R(\inG, \cnt, \readresult)}$ is executed after \emph{havocing} $\readresult$. Under $\interp^*$, $R$ is a partial function from its first two arguments to its last (by \Cref{prop:R-partial-app}). In the execution where $a = e_v$, the $\assertkw$ statement in the \emph{then} branch defines the partial function $R$ over $(s_2(\inG), s_2(\cnt)+1)$ for the value $s_2(\lastn)$. All executions of $\pstarstar$ use the same two values $(s_2(\inG), s_2(\cnt)+1)$ as the first two arguments of the $\assumekw$ statement, but the third argument, $\readresult$, holds an arbitrary value due to $\cmd{\dethavoccall{\readresult}}$. Since $R$ is a partial function, the $\assumekw$ statement will only pass in executions where after executing $\dethavockw$, the variable $\readresult$ holds the value $\rd(h_1, e_v)$. The result of all other executions is undefined, because the $\assumekw$ statement over $R$ will not pass for other values of $\readresult$ (a depiction is shown in \Cref{fig:pstarstar-illustration-app}). Therefore, at the exit of this branch we have $s_2'(\readresult) = \rd(h_1, e_v)$.
  \end{itemize}
  In both cases, we have $s_2'(\readresult) = \rd(h_1, e_v) = s_1'(readresult)$, which shows $\fstacks'$.

  \item $\foutcomes'$: $\encR(S)$ contains an $\assertkw$ in the \emph{then} branch, and an $\assumekw$ in the \emph{else} branch. Under $\interp^*$, the $\assertkw$ over $R$ always passes and has no effect on the outcome. If $a \neq e_v$, there is an $s$ where $\assumekw$ passes and $\fstacks'$ holds. Any other assertions and assumptions of the program (which are over the common variables of $\pstar$ and $\pstarstar$) will be evaluated under the same stack (due to $\fstacks'$), leading to the same outcome, which shows $\foutcomes'$. 

  \item $\fallocs'$: $\rd$ does not modify the heap, so $h_1' = h_1$ and $|h_1'| = |h_1|$. $\encR(S)$ does not modify $\allocctr$ in any execution, so $s_2'(\allocctr) = s_2(\allocctr)$. By the induction hypothesis $\fallocs$, $|h_1| = s_2(\allocctr)$, thus $|h_1'| = s_2'(\allocctr)$, and $\fallocs'$ holds.

  \item $\freads'$: We need to show $\rd(h_1', a) = s_2'(\lastn)$. Since $h_1' = h_1$, we need to show $\rd(h_1, a) = s_2'(\lastn)$. $\encR(S)$ does not modify $\lastn$, so $s_2'(\lastn) = s_2(\lastn)$. By the induction hypothesis $\freads$, we have $\rd(h_1, a) = s_2(\lastn)$. Therefore, $\rd(h_1', a) = \rd(h_1, a) = s_2(\lastn) = s_2'(\lastn)$, and $\freads'$ holds.
\end{itemize}

\end{proof}

\begin{theorem}[$\encR$ is correct]\label{thm:encR-proof-app}
  Let $p$ be \alang\ program. $p$ and $\encR(p)$ are equi-safe.
\end{theorem}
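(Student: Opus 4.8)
The proof is brief once the lemmas above are in hand; the plan is to chain three equi-safety facts. Write $\pstar = \encN(p)$ and $\pstarstar = \encR(\pstar)$. First I would observe that the $\encN$ counter is a pure proof device that is transparent to $\encR$: the guard statements $\encNc \assignsym \encNc - 1;\ \assumekw(\encNc \ge 0)$ inserted by $\encN$ mention only the integer variable $\encNc$ and are therefore left untouched by $\encR$, so $\pstarstar$ is exactly $\encR(p)$ with these guards inserted before the code blocks that $\encR$ generated from the original $\readkw$, $\writekw$ and $\allockw$ statements. The argument of \Cref{lem:encN-proof-app} --- now reading ``statement over $\{\writekw,\allockw,\readkw\}$'' as ``block produced by $\encR$ from such a statement'' --- then shows $\encR(p)$ and $\pstarstar$ are equi-safe, while \Cref{lem:encN-proof-app} as stated gives that $p$ and $\pstar$ are equi-safe. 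By transitivity it therefore suffices to prove that $\pstar$ and $\pstarstar$ are equi-safe.

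For this remaining step I would invoke \Cref{lem:safety-under-i}, which reduces safety of a program to safety under the least fixed point $\interp^*$ of its immediate consequence operator; since $\pstar$ contains no occurrence of $R$, its executions do not depend on how $\interp^*$ interprets $R$, so $\pstar$ and $\pstarstar$ may both be evaluated under the same $\interp^*$. Applying \Cref{lem:encR-preservation-app} with the initial stack $s$ ranging over all of $\stacks$ and $a = s(\pg)$ yields the conjunct $\sigma_1 = \sigma_2$, i.e. $\tupleget{\bigstepevalp{\interp^*}(\pstar, s, \eh)}{1} = \tupleget{\bigstepevalp{\interp^*}(\pstarstar, s, \eh)}{1}$ for every $s$. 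Consequently $\pstar$ admits an erroring execution for some initial configuration of the form $(s,\eh)$ if and only if $\pstarstar$ does, which is precisely equi-safety of $\pstar$ and $\pstarstar$ under $\interp^*$, hence equi-safety. Combining the three links gives $p$ equi-safe with $\encR(p)$, as claimed.

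The one point that deserves care --- and where I expect a careful reader to look --- is the passage from ``equality of outcomes for every initial stack'' to equi-safety of $\pstarstar$. Here it matters that $\encR$ re-initialises the fresh variables $\allocctr$, $\cnt$ and $\lastn$ at the top of the program via the ``Initialisation'' row of \Cref{tbl:r-encoding-rules}, so their initial stack values are immaterial, whereas the prophecy variable $\pg$ is deliberately left arbitrary. Since \Cref{def:safety} already quantifies over all initial stacks, every value of $\pg$ is examined, and this is exactly the universally quantified $a \in \integer$ in \Cref{lem:encR-preservation-app}; this is the formal counterpart of the intuition that the prophecy variable forces the proof to consider all possible ``last addresses''. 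Beyond this bookkeeping, all of the real work has already been carried out in \Cref{lem:encN-proof-app}, \Cref{lem:safety-under-i}, \Cref{prop:R-partial-app} and \Cref{lem:encR-preservation-app}, so no further computation is needed.
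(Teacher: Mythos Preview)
Your proposal is correct and follows essentially the same route as the paper's proof: chain \Cref{lem:encN-proof-app}, \Cref{lem:safety-under-i}, and \Cref{lem:encR-preservation-app} to obtain equi-safety of $\pstar$ and $\pstarstar$, then transit back to $p$ and $\encR(p)$. You are in fact more careful than the paper on the final link: the paper simply writes ``By \Cref{lem:encN-proof-app} and transitivity, this shows $p$ and $\encR(p)$ are equi-safe,'' leaving implicit why $\pstarstar = \encR(\encN(p))$ is equi-safe with $\encR(p)$; your observation that $\encN$'s guard statements are untouched by $\encR$ and that the argument of \Cref{lem:encN-proof-app} applies verbatim to the encoded blocks makes this step explicit.
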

\begin{proof}
  Let $\pstar = \encN(p)$ and $\pstarstar = \encR(\pstar)$. By \Cref{lem:encN-proof-app}, $p$ and $\pstar$ are equi-safe, so it suffices to show equi-safety between $\pstar$ and $\pstarstar$.
  By \Cref{lem:safety-under-i}, it suffices to show equi-safety under the least interpretation $\interp^*$.

  By \Cref{lem:encR-preservation-app}, for all $s$ we have $\tupleget{\bigstepevalp{\interp^*}(\pstar, s, \eh)}{1} = \sigma_1 = \sigma_2 = \tupleget{\bigstepevalp{\interp^*}(\pstarstar, s, \eh)}{1}$. That is, $\pstar$ and $\pstarstar$ always has the same outcome for every initial stack $s$; therefore if one is safe, the other will be safe too. By \Cref{lem:encN-proof-app} and transitivity, this shows $p$ and $\encR(p)$ are equi-safe.
\end{proof}

\subsection[Correctness of EncRW]{Correctness of $\encRW$}\label{app:proof-rw}
We first propose, without proof, that both $R$ and $W$ are partial functions in $\interp^*$. The proofs are similar to the proof of \Cref{prop:R-partial-app}.



\begin{theorem}[Correctness of $\encRW$]\label{thm:encRW-proof-app}
Let $p$ be an \alang\ program. Then $p$ and $\encRW(p)$ are equi-safe.
\end{theorem}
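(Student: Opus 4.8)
The plan is to mirror the proof of \Cref{thm:encR-proof-app} step by step, replacing the single lookup through $R$ by the two-stage lookup through $R$ and then $W$. First we would record the analogue of \Cref{prop:R-partial-app}: in the least fixed point $\interp^*$ of $\ico_p$ for $p = \encRW(q)$, both $\interp^*(R)$ and $\interp^*(W)$ are partial functions from their first two arguments to their third. For $W$ this holds because the only statements that add tuples to $W$ (via \assertkw) are the encodings of \writekw{} and \allockw{}, each of which first increments $\cnt$ and then asserts $W(\inG, \cnt, e)$; since $\cnt$ is strictly increasing along every execution and does not depend on the prophecy variable $\pg$, no two asserted tuples can share their first two components with differing third components. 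For $R$ the argument is essentially that of \Cref{prop:R-partial-app}. Both facts are proved by induction on the approximation ordinal, exactly as before.

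Next we would reuse the intermediate encoding $\encN$ unchanged: set $\pstar = \encN(p)$ and $\pstarstar = \encRW(\pstar)$. By \Cref{lem:encN-proof-app} it suffices to show $\pstar$ and $\pstarstar$ are equi-safe, and by \Cref{lem:safety-under-i} it suffices to show this under $\interp^*$. The decrementing counter of $\encN$ is still inserted before every \readkw{}, \writekw{} and \allockw{} statement, so the induction on $n = s(\encNc)$ advances one heap operation at a time just as in \Cref{lem:encR-preservation-app}; the only difference is that, since $\encRW$ increments $\cnt$ on writes and allocations as well, the bookkeeping on $\cnt$ now concerns all three kinds of statement.

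The heart of the argument is the preservation lemma analogous to \Cref{lem:encR-preservation-app}. For $a = s(\pg)$ and $n = s(\encNc)$ we would prove by induction on $n$ that the executions $\bigstepevalp{\interp^*}(\pstar, s, \eh) = (\sigma_1, s_1, h_1)$ and $\bigstepevalp{\interp^*}(\pstarstar, s, \eh) = (\sigma_2, s_2, h_2)$ satisfy $\foutcomes$, $\fstacks$, $\fallocs$, and, crucially, $(s_2(\inG),\, s_2(\lastwtcnt),\, \rd(h_1, a)) \in \interp^*(W)$ — i.e. the $W$ predicate, indexed by the write-counter stored in $\lastwtcnt$, recovers exactly the object currently held at address $a$ in $h_1$ (with $\lastwtcnt = 0$ initially, matched by the initialisation \assertkw{} $\assertcmd{W(\inG, 0, \defObj)}$ and $\rd(\eh, a) = \defObj$ in the base case). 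In the step, for a \writekw{} or \allockw{} to address $p$ the encoded code asserts $W(\inG, \cnt, e)$ after incrementing $\cnt$ and sets $\lastwtcnt := \cnt$ precisely when $\pg = p$; if $\pg = p$ and the access is valid then $\rd(h_1', a) = e$ by the heap semantics while $\lastwtcnt$ now points at the matching $W$ tuple, and if $\pg \neq p$ (or the access is invalid) then $\rd(h_1', a) = \rd(h_1, a)$ with $\lastwtcnt$ unchanged, so the invariant carries over. For a \readkw{} from address $p$ the heap and $\lastwtcnt$ are untouched, so the only obligation is that $x$ ends up equal to $\rd(h_1, s_1(p))$. In every execution $x$ is havoced and then constrained by $\assumecmd{W(\inG, \readtmpcnt, x)}$; the two branches differ only in how $\readtmpcnt$ is fixed. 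When $\pg = s_1(p)$, $\readtmpcnt := \lastwtcnt$, and the invariant $(s_2(\inG), s_2(\lastwtcnt), \rd(h_1, s_1(p))) \in \interp^*(W)$ together with partiality of $\interp^*(W)$ leaves exactly one surviving value $x = \rd(h_1, s_1(p))$. When $\pg \neq s_1(p)$, $\readtmpcnt$ is havoced and constrained by $\assumecmd{R(\inG, \cnt, \readtmpcnt)}$; the companion execution with $\pg = s_1(p)$ — which reaches the same $\cnt$, since $\cnt$ is independent of $\pg$ — has already pinned $\interp^*(R)$ at $(s_2(\inG), s_2(\cnt), \ell)$, where $\ell$ is the $\cnt$ value of the most recent write or allocation to address $s_1(p)$, and $\interp^*(W)$ at $(s_2(\inG), \ell, \rd(h_1, s_1(p)))$, so functional consistency of both predicates forces $\readtmpcnt = \ell$ and $x = \rd(h_1, s_1(p))$, with every other execution becoming undefined (the picture is as in \Cref{fig:pstarstar-illustration-app}, but the chain now has length two). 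The outcome and stack conditions then follow as in \Cref{lem:encR-preservation-app}, and since $\sigma_1 = \sigma_2$ for every initial stack, composing with \Cref{lem:encN-proof-app} and transitivity yields the theorem.

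The main obstacle we anticipate is making the two-stage lookup rigorous: the induction hypothesis must be strengthened so that, in the designated execution where $\pg$ equals the address being read, $\lastwtcnt$ provably equals the $\cnt$ value of the most recent write or allocation to that address, and so that this $\cnt$ value is stable across all the executions indexed by different choices of $\pg$ (which holds because $\cnt$ advances identically regardless of $\pg$). Once this is in place, partiality of $\interp^*(R)$ pins down $\readtmpcnt$ and partiality of $\interp^*(W)$ pins down $x$, so the remainder is a routine adaptation of the $\encR$ argument.
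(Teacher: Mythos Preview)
Your proposal is correct, but it takes a different decomposition from the paper's sketch. You keep the intermediate encoding~$\encN$ unchanged, set $\pstar = \encN(p)$ and $\pstarstar = \encRW(\pstar)$, and directly strengthen the preservation lemma with the membership invariant $(s_2(\inG), s_2(\lastwtcnt), \rd(h_1,a)) \in \interp^*(W)$. The paper instead inserts a \emph{second} intermediate encoding between $p$ and $\encRW(p)$ that enriches the heap object type to a pair $\obj' = (\obj, \integer)$, writing the current counter into the second component at every \writekw{}/\allockw{}; with this boxing, the $R$ part of $\encRW$ becomes literally the $R$ encoding applied to the counter component (so \Cref{lem:encR-preservation-app} can be reused with $\lastwtcnt$ playing the role of $\lastn$), and the $W$ lookup recovers the first component. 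The paper then splits $\freads$ into $\freadsone \equiv \tupleget{\rd(h_1,a)}{2} = s_2(\lastwtcnt)$ and $\freadstwo \equiv \tupleget{\rd(h_1,a)}{1} = f_W(\lastwtcnt)(s_2)$. Your approach is more self-contained and avoids the boxing detour; the paper's approach buys a cleaner reduction to the already-proven $\encR$ machinery, at the cost of justifying the extra intermediate encoding. Both rely on the same core ingredients (functional consistency of $R$ and $W$, and the cross-execution argument that $\cnt$ evolves identically regardless of $\pg$), and your identification of the latter as the main obstacle is on point.
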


The proof of this theorem largely mirrors the proof of correctness for $\encR(p)$, we provide a sketch:
\begin{proof}
  As one way to show the encoding correct, we first assume an intermediate encoding that increments a counter $c$ that is initialised to zero, and is incremented before every call to the heap operations $\readkw$, $\writekw$ or $\allockw$ (mimicking $\cnt$ of the $RW$ encoding). We also redeclare the heap object type $\obj$ as a tuple $\obj' : (\obj, \integer)$. In the intermediate encoding, every access to a heap object is replaced with an access to the first component of this tuple. At every $\allockw$ and $\writekw$, the variable $c$ is assigned to the second component of the tuple. We claim, without proof, that this intermediate encoding is correct, as it merely introduces a counter and boxes heap objects together with the current count value. Let $\pstar$ be this intermediate encoding of the original program $p$, and $\pstarstar$ be the $RW$ encoding of $\pstar$.

  We tackle the correctness of $\encRW$ in two steps. The first part of the $RW$ encoding is precisely the $R$ encoding, but the $R$ predicate now records the value $\lastwtcnt$ (rather than $\lastn$, the last object stored at $\pg$), corresponding to the $c$ value of the last write to $\pg$. 
  
  Next, let $y = f_W(x)$ be shorthand for $\cmd{\dethavoccall{y};\ \assumecmd{W(\inG, x, y)}}$. We take a shortcut by using the notation $f_W(x)(s)$ to directly represent the value of $y$ in the resulting stack after evaluating this statement under stack $s$.

  Recall \Cref{lem:encR-preservation-app}, we adapt it to the current setting by replacing $\freads$ with $\freadsone \wedge \freadstwo$, where $\freadsone \equiv \tupleget{\rd(h_1,a)}{2} = s_2(\lastwtcnt)$
  and $\freadstwo \equiv \tupleget{\rd(h_1,a)}{1} = f_W(\lastwtcnt)(s_2)$.

  Using the partial function property of $R$ in the $RW$ encoding, it is straightforward to show that the adapted $\freadsone$ should hold, following a similar argument as in the correctness proof for $\encR$. By using the functional consistency of $W$, showing the preservation of $\freadstwo$ is also straightforward. Intuitively, $W$ indexes the heap objects of the original program using the count value associated with each write.

  With the adapted \Cref{lem:encR-preservation-app} (whose proof we omit, which largely parallels the original proof), and using the argument in \Cref{thm:encR-proof-app} adapted to this encoding, the correctness of $\encRW$ follows.
\end{proof}


\subsection{Detailed Evaluation Results}\label{subsec:appendix-detailed-results}

\begin{longtable}{lcccc|cccccc}
\caption{Per benchmark results for the manually normalised benchmarks. ("\textbf{T}": safe, "\textbf{F}": unsafe, "U": unknown)}\label{tbl:per-benchmark-results}\\
Benchmark & \rotatebox{90}{None \cpa{}} & \rotatebox{90}{None \pred{}} & \rotatebox{90}{None \sea{}} & \rotatebox{90}{None \tri{}} & \rotatebox{90}{$\mathit{R}$ \sea{}} & \rotatebox{90}{$\mathit{R}$ \tri{}} & \rotatebox{90}{$\mathit{RW}$ \sea{}} & \rotatebox{90}{$\mathit{RW}$ \tri{}} & \rotatebox{90}{$\mathit{RWf}$ \sea{}} & \rotatebox{90}{$\mathit{RWf}$ \tri{}} \\ \toprule
\endfirsthead
Benchmark & \rotatebox{90}{None \cpa{}} & \rotatebox{90}{None \pred{}} & \rotatebox{90}{None \sea{}} & \rotatebox{90}{None \tri{}} & \rotatebox{90}{$\mathit{R}$ \sea{}} & \rotatebox{90}{$\mathit{R}$ \tri{}} & \rotatebox{90}{$\mathit{RW}$ \sea{}} & \rotatebox{90}{$\mathit{RW}$ \tri{}} & \rotatebox{90}{$\mathit{RWf}$ \sea{}} & \rotatebox{90}{$\mathit{RWf}$ \tri{}} \\ \toprule
\endhead
\bottomrule
\endlastfoot
listing-2-safe & U & \textbf{T} & U & U & \textbf{T} & U & U & U & \textbf{T} & \textbf{T} \\
listing-2-unsafe & \textbf{F} & \textbf{F} & \textbf{F} & \textbf{F} & U & \textbf{F} & U & \textbf{F} & U & \textbf{F} \\
simple\_and\_skiplist\_2lvl-1-safe & U & \textbf{T} & U & U & \textbf{T} & U & U & U & \textbf{T} & \textbf{T} \\
simple\_and\_skiplist\_2lvl-1-unsafe & \textbf{F} & \textbf{F} & \textbf{F} & \textbf{F} & U & \textbf{F} & U & \textbf{F} & U & \textbf{F} \\
simple\_built\_from\_end-safe & \textbf{T} & \textbf{T} & U & U & U & U & U & U & \textbf{T} & \textbf{T} \\
simple\_built\_from\_end-unsafe & \textbf{F} & \textbf{F} & \textbf{F} & \textbf{F} & U & \textbf{F} & U & \textbf{F} & U & \textbf{F} \\
sll-constant-no-loop-1-safe & \textbf{T} & \textbf{T} & \textbf{T} & \textbf{T} & U & \textbf{T} & U & \textbf{T} & \textbf{T} & \textbf{T} \\
sll-constant-no-loop-1-unsafe & \textbf{F} & \textbf{F} & \textbf{F} & \textbf{F} & U & \textbf{F} & U & \textbf{F} & U & \textbf{F} \\
sll-constant-no-loop-no-struct-safe & \textbf{T} & \textbf{T} & \textbf{T} & \textbf{T} & U & \textbf{T} & U & \textbf{T} & \textbf{T} & \textbf{T} \\
sll-constant-no-loop-no-struct-unsafe & \textbf{F} & \textbf{F} & \textbf{F} & \textbf{F} & U & \textbf{F} & U & \textbf{F} & U & \textbf{F} \\
sll-constant-unbounded-1-safe & U & \textbf{T} & U & U & \textbf{T} & U & U & U & \textbf{T} & \textbf{T} \\
sll-constant-unbounded-1-unsafe & \textbf{F} & \textbf{F} & \textbf{F} & \textbf{F} & U & \textbf{F} & U & \textbf{F} & U & \textbf{F} \\
sll-constant-unbounded-2-safe & U & \textbf{T} & U & U & \textbf{T} & \textbf{T} & \textbf{T} & \textbf{T} & \textbf{T} & \textbf{T} \\
sll-constant-unbounded-3-safe & U & \textbf{T} & U & \textbf{T} & U & \textbf{T} & U & \textbf{T} & \textbf{T} & \textbf{T} \\
sll-constant-unbounded-4-safe & U & \textbf{T} & U & U & U & U & U & U & \textbf{T} & \textbf{T} \\
sll-variable-unbounded-1-safe & U & U & U & U & \textbf{T} & \textbf{T} & U & \textbf{T} & \textbf{T} & \textbf{T} \\
sll-variable-unbounded-1-unsafe & \textbf{F} & \textbf{F} & \textbf{F} & \textbf{F} & U & \textbf{F} & U & \textbf{F} & U & \textbf{F} \\
sll-variable-unbounded-2-safe & U & U & U & U & \textbf{T} & U & U & U & U & U \\
sll-variable-unbounded-2-unsafe & \textbf{F} & \textbf{F} & \textbf{F} & \textbf{F} & U & \textbf{F} & U & \textbf{F} & U & \textbf{F} \\
sll-variable-unbounded-3-safe & \textbf{T} & \textbf{T} & \textbf{T} & \textbf{T} & \textbf{T} & \textbf{T} & \textbf{T} & \textbf{T} & \textbf{T} & \textbf{T} \\
tree-3-safe & U & U & U & U & U & U & U & U & \textbf{T} & U \\
tree-3-simpler-safe & U & U & U & U & U & \textbf{T} & U & \textbf{T} & U & \textbf{T} \\
\end{longtable}


\begin{longtable}{lcccc|cccccccccc}
\caption{Per benchmark results for the benchmarks from SV-COMP's \texttt{ReachSafety-Heap} category. The encoding columns use \tricerare. ("\textbf{T}": safe, "\textbf{F}": unsafe, "U": unknown)}\label{tbl:per-benchmark-results-tri}\\
Benchmark & \rotatebox{90}{None \cpa{}} & \rotatebox{90}{None \pred{}} & \rotatebox{90}{None \sea{}} & \rotatebox{90}{None \tri{}} & \rotatebox{90}{$\mathit{R}$} & \rotatebox{90}{$\mathit{R_{C}}$} & \rotatebox{90}{$\mathit{R_{T}}$} & \rotatebox{90}{$\mathit{RW}$} & \rotatebox{90}{$\mathit{RW_{T}}$} & \rotatebox{90}{$\mathit{RW_{CT}}$} & \rotatebox{90}{$\mathit{RWf}$} & \rotatebox{90}{$\mathit{RWf_{C}}$} & \rotatebox{90}{$\mathit{RWf_{T}}$} & \rotatebox{90}{$\mathit{RWf_{CT}}$} \\ \toprule
\endfirsthead
Benchmark & \rotatebox{90}{None \cpa{}} & \rotatebox{90}{None \pred{}} & \rotatebox{90}{None \sea{}} & \rotatebox{90}{None \tri{}} & \rotatebox{90}{$\mathit{R}$} & \rotatebox{90}{$\mathit{R_{C}}$} & \rotatebox{90}{$\mathit{R_{T}}$} & \rotatebox{90}{$\mathit{RW}$} & \rotatebox{90}{$\mathit{RW_{T}}$} & \rotatebox{90}{$\mathit{RW_{CT}}$} & \rotatebox{90}{$\mathit{RWf}$} & \rotatebox{90}{$\mathit{RWf_{C}}$} & \rotatebox{90}{$\mathit{RWf_{T}}$} & \rotatebox{90}{$\mathit{RWf_{CT}}$} \\ \toprule
\endhead
\bottomrule
\endlastfoot
alternating\_list-1-safe & U & U & U & U & U & U & U & U & U & U & U & U & U & U \\
alternating\_list-2-unsafe & \textbf{F} & \textbf{F} & \textbf{F} & \textbf{F} & \textbf{F} & \textbf{F} & \textbf{F} & \textbf{F} & \textbf{F} & \textbf{F} & \textbf{F} & \textbf{F} & \textbf{F} & \textbf{F} \\
calendar-safe & U & U & U & U & U & U & U & U & U & \textbf{T} & U & \textbf{T} & U & \textbf{T} \\
cart-safe & U & U & U & U & U & U & U & U & U & U & U & \textbf{T} & U & \textbf{T} \\
dancing-safe & U & \textbf{T} & U & U & U & U & U & U & U & U & U & U & U & U \\
dll-01-1-unsafe & \textbf{F} & \textbf{F} & \textbf{F} & \textbf{F} & \textbf{F} & \textbf{F} & \textbf{F} & U & \textbf{F} & \textbf{F} & \textbf{F} & \textbf{F} & \textbf{F} & \textbf{F} \\
dll-01-2-safe & U & U & U & U & U & U & U & U & U & U & U & U & U & U \\
dll-circular-1-unsafe & \textbf{F} & \textbf{F} & \textbf{F} & \textbf{F} & \textbf{F} & \textbf{F} & \textbf{F} & \textbf{F} & \textbf{F} & \textbf{F} & \textbf{F} & \textbf{F} & \textbf{F} & \textbf{F} \\
dll-circular-2-safe & U & \textbf{T} & U & U & U & U & U & U & U & U & U & U & U & U \\
dll-optional-1-safe & U & U & U & U & U & U & U & U & U & U & U & U & U & U \\
dll-optional-2-unsafe & \textbf{F} & \textbf{F} & \textbf{F} & \textbf{F} & \textbf{F} & \textbf{F} & \textbf{F} & \textbf{F} & \textbf{F} & \textbf{F} & \textbf{F} & \textbf{F} & \textbf{F} & \textbf{F} \\
dll-queue-1-safe & U & \textbf{T} & U & U & U & U & U & U & U & U & U & U & U & U \\
dll-queue-2-unsafe & \textbf{F} & \textbf{F} & U & U & U & U & U & U & U & U & U & U & U & U \\
dll-rb-cnstr\_1-1-safe & U & U & U & U & U & U & U & U & U & U & U & U & U & U \\
dll-rb-cnstr\_1-2-unsafe & \textbf{F} & \textbf{F} & \textbf{F} & \textbf{F} & \textbf{F} & \textbf{F} & \textbf{F} & \textbf{F} & \textbf{F} & \textbf{F} & \textbf{F} & \textbf{F} & \textbf{F} & \textbf{F} \\
dll-rb-sentinel-1-unsafe & \textbf{F} & \textbf{F} & \textbf{F} & U & U & U & U & U & U & U & U & U & U & U \\
dll-rb-sentinel-2-safe & U & U & U & U & U & U & U & U & U & U & U & U & U & U \\
dll-reverse-safe & U & \textbf{T} & U & U & U & U & U & U & U & U & U & U & U & U \\
dll-simple-white-blue-1-unsafe & \textbf{F} & \textbf{F} & U & \textbf{F} & \textbf{F} & \textbf{F} & \textbf{F} & U & \textbf{F} & \textbf{F} & \textbf{F} & \textbf{F} & \textbf{F} & \textbf{F} \\
dll-simple-white-blue-2-safe & U & \textbf{T} & U & U & U & U & U & U & U & U & U & U & U & U \\
dll-sorted-1-unsafe & \textbf{F} & \textbf{F} & U & U & U & \textbf{F} & U & U & U & U & U & U & U & U \\
dll-sorted-2-safe & U & \textbf{T} & U & U & U & U & U & U & U & U & U & U & U & U \\
dll-token-1-safe & U & \textbf{T} & U & U & U & U & U & U & U & U & U & U & U & U \\
dll-token-2-unsafe & \textbf{F} & \textbf{F} & \textbf{F} & U & \textbf{F} & \textbf{F} & \textbf{F} & U & U & \textbf{F} & \textbf{F} & \textbf{F} & U & \textbf{F} \\
dll2c\_append\_equal-safe & \textbf{T} & \textbf{T} & U & \textbf{T} & U & U & U & U & U & U & U & \textbf{T} & U & \textbf{T} \\
dll2c\_append\_unequal-safe & \textbf{T} & \textbf{T} & U & \textbf{T} & U & U & U & U & U & U & U & \textbf{T} & U & \textbf{T} \\
dll2c\_insert\_equal-safe & \textbf{T} & \textbf{T} & U & U & U & U & U & U & U & U & U & \textbf{T} & U & \textbf{T} \\
dll2c\_insert\_unequal-safe & \textbf{T} & \textbf{T} & \textbf{T} & U & U & U & U & U & U & U & U & \textbf{T} & U & \textbf{T} \\
dll2c\_prepend\_equal-safe & \textbf{T} & \textbf{T} & U & U & U & U & U & U & U & U & U & \textbf{T} & U & \textbf{T} \\
dll2c\_prepend\_unequal-safe & \textbf{T} & \textbf{T} & U & U & U & U & U & U & U & U & U & \textbf{T} & U & U \\
dll2c\_remove\_all-safe & \textbf{T} & \textbf{T} & \textbf{T} & \textbf{T} & U & \textbf{T} & \textbf{T} & U & U & \textbf{T} & \textbf{T} & \textbf{T} & U & \textbf{T} \\
dll2c\_remove\_all\_reverse-safe & \textbf{T} & \textbf{T} & \textbf{T} & \textbf{T} & U & \textbf{T} & \textbf{T} & U & U & \textbf{T} & \textbf{T} & \textbf{T} & U & \textbf{T} \\
dll2c\_update\_all-safe & \textbf{T} & \textbf{T} & U & U & U & \textbf{T} & U & U & U & \textbf{T} & \textbf{T} & \textbf{T} & U & \textbf{T} \\
dll2c\_update\_all\_reverse-safe & \textbf{T} & \textbf{T} & U & U & U & \textbf{T} & U & U & U & \textbf{T} & \textbf{T} & \textbf{T} & U & \textbf{T} \\
dll2n\_append\_equal-safe & \textbf{T} & \textbf{T} & U & U & U & U & U & U & U & U & U & \textbf{T} & U & \textbf{T} \\
dll2n\_append\_unequal-safe & \textbf{T} & \textbf{T} & U & U & U & U & U & U & U & U & U & \textbf{T} & U & \textbf{T} \\
dll2n\_insert\_equal-safe & \textbf{T} & \textbf{T} & U & U & U & U & U & U & U & U & U & \textbf{T} & U & \textbf{T} \\
dll2n\_insert\_unequal-safe & \textbf{T} & \textbf{T} & U & U & U & U & U & U & U & U & U & \textbf{T} & U & \textbf{T} \\
dll2n\_prepend\_equal-safe & \textbf{T} & \textbf{T} & U & U & U & U & U & U & U & \textbf{T} & U & \textbf{T} & U & \textbf{T} \\
dll2n\_prepend\_unequal-safe & \textbf{T} & \textbf{T} & U & U & U & \textbf{T} & U & U & U & \textbf{T} & U & \textbf{T} & U & \textbf{T} \\
dll2n\_remove\_all-safe & \textbf{T} & \textbf{T} & U & \textbf{T} & \textbf{T} & \textbf{T} & \textbf{T} & U & U & \textbf{T} & \textbf{T} & \textbf{T} & U & \textbf{T} \\
dll2n\_remove\_all\_reverse-safe & \textbf{T} & \textbf{T} & \textbf{T} & \textbf{T} & U & \textbf{T} & U & U & U & \textbf{T} & U & \textbf{T} & U & \textbf{T} \\
dll2n\_update\_all-safe & \textbf{T} & \textbf{T} & U & U & U & \textbf{T} & U & U & U & \textbf{T} & U & \textbf{T} & U & \textbf{T} \\
dll2n\_update\_all\_reverse-safe & \textbf{T} & \textbf{T} & \textbf{T} & \textbf{T} & \textbf{T} & \textbf{T} & \textbf{T} & \textbf{T} & \textbf{T} & \textbf{T} & \textbf{T} & \textbf{T} & \textbf{T} & \textbf{T} \\
dll\_circular\_traversal-2-safe & \textbf{T} & \textbf{T} & U & U & U & U & U & U & U & U & U & U & U & U \\
dll\_nullified-1-unsafe & \textbf{F} & \textbf{F} & U & U & U & U & U & U & U & U & U & U & U & U \\
dll\_nullified-2-safe & \textbf{T} & \textbf{T} & U & U & U & U & U & U & U & U & U & \textbf{T} & \textbf{T} & \textbf{T} \\
hash\_fun-safe & U & U & U & U & U & U & U & U & \textbf{T} & \textbf{T} & U & U & \textbf{T} & \textbf{T} \\
list-1-safe & U & \textbf{T} & U & U & U & U & U & U & U & U & U & U & U & U \\
list-2-unsafe & \textbf{F} & \textbf{F} & U & U & \textbf{F} & \textbf{F} & \textbf{F} & U & U & \textbf{F} & U & \textbf{F} & U & \textbf{F} \\
list-ext-unsafe & \textbf{F} & \textbf{F} & \textbf{F} & \textbf{F} & \textbf{F} & \textbf{F} & \textbf{F} & \textbf{F} & \textbf{F} & \textbf{F} & \textbf{F} & \textbf{F} & \textbf{F} & \textbf{F} \\
list\_and\_tree\_cnstr-1-unsafe & \textbf{F} & \textbf{F} & U & U & U & U & U & U & U & U & U & U & U & U \\
list\_and\_tree\_cnstr-2-safe & U & U & U & U & U & U & U & U & U & U & U & U & U & U \\
list\_flag-1-unsafe & \textbf{F} & \textbf{F} & \textbf{F} & \textbf{F} & \textbf{F} & \textbf{F} & \textbf{F} & \textbf{F} & \textbf{F} & \textbf{F} & \textbf{F} & \textbf{F} & \textbf{F} & \textbf{F} \\
list\_flag-2-safe & U & \textbf{T} & U & U & U & U & U & U & U & \textbf{T} & U & U & U & \textbf{T} \\
list\_search-1-unsafe & \textbf{F} & \textbf{F} & U & \textbf{F} & U & \textbf{F} & U & U & U & \textbf{F} & \textbf{F} & \textbf{F} & U & \textbf{F} \\
list\_search-2-safe & \textbf{T} & \textbf{T} & \textbf{T} & U & U & U & U & U & U & \textbf{T} & \textbf{T} & U & U & \textbf{T} \\
merge\_sort-1-unsafe & \textbf{F} & \textbf{F} & \textbf{F} & U & U & U & U & U & U & U & U & U & U & U \\
merge\_sort-2-safe & U & \textbf{T} & U & U & U & U & U & U & U & U & U & U & U & U \\
min\_max-safe & U & U & U & U & U & U & U & U & U & U & U & U & U & U \\
packet\_filter-safe & U & U & U & U & U & U & U & U & U & U & U & U & U & U \\
process\_queue-safe & U & U & U & U & U & U & U & U & U & U & U & U & U & U \\
quick\_sort\_split-safe & \textbf{T} & U & \textbf{T} & U & \textbf{T} & \textbf{T} & \textbf{T} & \textbf{T} & \textbf{T} & \textbf{T} & \textbf{T} & \textbf{T} & \textbf{T} & \textbf{T} \\
rule60\_list-safe & \textbf{T} & \textbf{T} & \textbf{T} & \textbf{T} & \textbf{T} & \textbf{T} & \textbf{T} & \textbf{T} & \textbf{T} & \textbf{T} & \textbf{T} & \textbf{T} & \textbf{T} & \textbf{T} \\
running\_example-safe & U & U & U & U & U & U & U & U & U & U & U & U & U & \textbf{T} \\
shared\_mem1-safe & U & \textbf{T} & U & U & U & U & U & U & U & U & U & U & U & U \\
shared\_mem2-safe & U & \textbf{T} & \textbf{T} & U & U & U & U & U & U & U & U & U & U & U \\
simple-1-unsafe & \textbf{F} & \textbf{F} & \textbf{F} & U & \textbf{F} & \textbf{F} & \textbf{F} & \textbf{F} & \textbf{F} & \textbf{F} & \textbf{F} & \textbf{F} & \textbf{F} & \textbf{F} \\
simple-2-safe & U & \textbf{T} & U & U & U & U & U & U & U & \textbf{T} & U & U & U & \textbf{T} \\
simple-ext-unsafe & \textbf{F} & \textbf{F} & \textbf{F} & \textbf{F} & \textbf{F} & \textbf{F} & \textbf{F} & \textbf{F} & \textbf{F} & \textbf{F} & \textbf{F} & \textbf{F} & \textbf{F} & \textbf{F} \\
simple\_and\_skiplist\_2lvl-1-unsafe & \textbf{F} & \textbf{F} & \textbf{F} & U & \textbf{F} & \textbf{F} & \textbf{F} & \textbf{F} & \textbf{F} & \textbf{F} & \textbf{F} & \textbf{F} & \textbf{F} & \textbf{F} \\
simple\_and\_skiplist\_2lvl-2-safe & U & U & U & U & U & U & U & U & U & \textbf{T} & U & U & U & \textbf{T} \\
simple\_built\_from\_end-safe & U & \textbf{T} & U & U & U & U & U & U & \textbf{T} & \textbf{T} & U & U & \textbf{T} & \textbf{T} \\
simple\_search\_value-1-safe & U & U & U & U & U & U & U & U & U & U & U & U & U & U \\
simple\_search\_value-2-unsafe & \textbf{F} & \textbf{F} & U & U & U & U & U & U & U & U & U & U & U & U \\
sll-01-1-safe & U & U & U & U & U & U & U & U & U & U & U & U & U & U \\
sll-01-2-unsafe & \textbf{F} & U & U & \textbf{F} & U & \textbf{F} & U & U & U & U & U & U & U & U \\
sll-buckets-1-safe & U & \textbf{T} & U & U & U & U & U & U & U & U & U & U & U & U \\
sll-buckets-2-unsafe & \textbf{F} & \textbf{F} & \textbf{F} & U & U & U & U & U & U & U & U & U & U & U \\
sll-circular-1-safe & U & \textbf{T} & U & U & U & U & U & U & U & U & U & U & U & U \\
sll-circular-2-unsafe & \textbf{F} & \textbf{F} & \textbf{F} & \textbf{F} & \textbf{F} & \textbf{F} & \textbf{F} & \textbf{F} & \textbf{F} & \textbf{F} & \textbf{F} & \textbf{F} & \textbf{F} & \textbf{F} \\
sll-optional-1-safe & U & U & U & U & U & U & U & U & U & U & U & U & U & U \\
sll-optional-2-unsafe & \textbf{F} & \textbf{F} & \textbf{F} & \textbf{F} & \textbf{F} & \textbf{F} & \textbf{F} & \textbf{F} & \textbf{F} & \textbf{F} & \textbf{F} & \textbf{F} & \textbf{F} & \textbf{F} \\
sll-queue-1-safe & U & \textbf{T} & U & U & U & U & U & U & U & U & U & U & U & U \\
sll-queue-2-unsafe & \textbf{F} & \textbf{F} & U & U & U & U & U & U & U & U & U & U & U & U \\
sll-rb-cnstr\_1-1-safe & U & U & U & U & U & U & U & U & U & U & U & U & U & U \\
sll-rb-cnstr\_1-2-unsafe & \textbf{F} & \textbf{F} & \textbf{F} & \textbf{F} & \textbf{F} & \textbf{F} & \textbf{F} & \textbf{F} & \textbf{F} & \textbf{F} & \textbf{F} & \textbf{F} & \textbf{F} & \textbf{F} \\
sll-rb-sentinel-1-safe & U & U & U & U & U & U & U & U & U & U & U & U & U & U \\
sll-rb-sentinel-2-unsafe & \textbf{F} & \textbf{F} & \textbf{F} & U & U & U & U & U & U & U & U & U & U & \textbf{F} \\
sll-reverse\_simple-safe & U & \textbf{T} & U & U & U & U & U & U & U & U & U & U & U & U \\
sll-simple-white-blue-1-safe & U & \textbf{T} & U & U & U & U & U & U & U & U & U & U & U & U \\
sll-simple-white-blue-2-unsafe & \textbf{F} & \textbf{F} & \textbf{F} & \textbf{F} & \textbf{F} & \textbf{F} & \textbf{F} & U & \textbf{F} & \textbf{F} & \textbf{F} & \textbf{F} & \textbf{F} & \textbf{F} \\
sll-sorted-1-unsafe & \textbf{F} & \textbf{F} & U & U & U & U & U & U & U & U & U & U & U & U \\
sll-sorted-2-safe & U & \textbf{T} & U & U & U & U & U & U & U & U & U & U & U & U \\
sll-token-1-unsafe & \textbf{F} & \textbf{F} & \textbf{F} & U & \textbf{F} & \textbf{F} & \textbf{F} & \textbf{F} & \textbf{F} & \textbf{F} & \textbf{F} & \textbf{F} & \textbf{F} & \textbf{F} \\
sll-token-2-safe & U & \textbf{T} & U & U & U & U & U & U & U & U & U & U & U & U \\
sll2c\_append\_equal-safe & \textbf{T} & \textbf{T} & U & U & U & U & U & U & U & U & U & \textbf{T} & U & \textbf{T} \\
sll2c\_append\_unequal-safe & \textbf{T} & \textbf{T} & U & \textbf{T} & U & U & U & U & U & U & U & U & U & U \\
sll2c\_insert\_equal-safe & \textbf{T} & \textbf{T} & U & U & U & \textbf{T} & U & U & U & \textbf{T} & U & \textbf{T} & U & \textbf{T} \\
sll2c\_insert\_unequal-safe & \textbf{T} & \textbf{T} & \textbf{T} & U & U & \textbf{T} & U & U & U & \textbf{T} & U & \textbf{T} & U & \textbf{T} \\
sll2c\_prepend\_equal-safe & \textbf{T} & \textbf{T} & U & U & U & U & U & U & U & U & U & \textbf{T} & U & \textbf{T} \\
sll2c\_prepend\_unequal-safe & \textbf{T} & \textbf{T} & U & U & U & U & U & U & U & U & U & \textbf{T} & U & U \\
sll2c\_remove\_all-safe & \textbf{T} & \textbf{T} & U & \textbf{T} & \textbf{T} & \textbf{T} & U & U & U & \textbf{T} & \textbf{T} & \textbf{T} & U & \textbf{T} \\
sll2c\_remove\_all\_reverse-safe & \textbf{T} & \textbf{T} & \textbf{T} & \textbf{T} & U & \textbf{T} & U & U & U & \textbf{T} & U & \textbf{T} & U & \textbf{T} \\
sll2c\_update\_all-safe & \textbf{T} & \textbf{T} & U & U & \textbf{T} & \textbf{T} & \textbf{T} & U & \textbf{T} & \textbf{T} & \textbf{T} & \textbf{T} & U & \textbf{T} \\
sll2c\_update\_all\_reverse-safe & \textbf{T} & \textbf{T} & U & U & \textbf{T} & \textbf{T} & \textbf{T} & U & \textbf{T} & \textbf{T} & \textbf{T} & \textbf{T} & \textbf{T} & \textbf{T} \\
sll2n\_append\_equal-safe & \textbf{T} & \textbf{T} & U & U & U & \textbf{T} & U & U & U & \textbf{T} & U & \textbf{T} & U & \textbf{T} \\
sll2n\_append\_unequal-safe & \textbf{T} & \textbf{T} & U & U & U & \textbf{T} & U & U & U & U & U & \textbf{T} & U & U \\
sll2n\_insert\_equal-safe & \textbf{T} & \textbf{T} & U & U & U & \textbf{T} & U & U & U & U & U & \textbf{T} & U & \textbf{T} \\
sll2n\_insert\_unequal-safe & \textbf{T} & \textbf{T} & U & U & U & U & U & U & U & \textbf{T} & U & \textbf{T} & U & \textbf{T} \\
sll2n\_prepend\_equal-safe & \textbf{T} & \textbf{T} & U & U & U & \textbf{T} & \textbf{T} & U & U & \textbf{T} & U & \textbf{T} & U & \textbf{T} \\
sll2n\_prepend\_unequal-safe & \textbf{T} & \textbf{T} & U & U & U & \textbf{T} & \textbf{T} & U & U & \textbf{T} & U & \textbf{T} & U & \textbf{T} \\
sll2n\_remove\_all-safe & \textbf{T} & \textbf{T} & U & U & \textbf{T} & \textbf{T} & \textbf{T} & \textbf{T} & \textbf{T} & \textbf{T} & \textbf{T} & \textbf{T} & \textbf{T} & \textbf{T} \\
sll2n\_remove\_all\_reverse-safe & \textbf{T} & \textbf{T} & U & U & \textbf{T} & \textbf{T} & \textbf{T} & U & U & \textbf{T} & \textbf{T} & \textbf{T} & \textbf{T} & \textbf{T} \\
sll2n\_update\_all-safe & \textbf{T} & \textbf{T} & U & U & \textbf{T} & \textbf{T} & \textbf{T} & U & U & \textbf{T} & U & \textbf{T} & U & \textbf{T} \\
sll2n\_update\_all\_reverse-safe & \textbf{T} & \textbf{T} & U & U & \textbf{T} & \textbf{T} & \textbf{T} & U & U & \textbf{T} & \textbf{T} & \textbf{T} & U & \textbf{T} \\
sll\_circular\_traversal-1-safe & \textbf{T} & \textbf{T} & U & U & U & U & U & U & U & U & U & U & U & U \\
sll\_length\_check-1-unsafe & \textbf{F} & \textbf{F} & \textbf{F} & \textbf{F} & \textbf{F} & \textbf{F} & \textbf{F} & \textbf{F} & \textbf{F} & \textbf{F} & \textbf{F} & U & \textbf{F} & \textbf{F} \\
sll\_length\_check-2-safe & U & \textbf{T} & U & U & U & U & U & U & U & U & U & U & U & U \\
sll\_nondet\_insert-1-unsafe & \textbf{F} & \textbf{F} & \textbf{F} & U & \textbf{F} & \textbf{F} & \textbf{F} & \textbf{F} & \textbf{F} & \textbf{F} & \textbf{F} & \textbf{F} & \textbf{F} & \textbf{F} \\
sll\_nondet\_insert-2-safe & U & \textbf{T} & U & U & U & U & U & U & U & U & U & U & U & U \\
sll\_of\_sll\_nondet\_append-1-safe & U & U & U & U & U & U & U & U & U & U & U & U & U & U \\
sll\_of\_sll\_nondet\_append-2-unsafe & U & \textbf{F} & U & U & U & U & U & U & U & U & U & U & U & U \\
sll\_to\_dll\_rev-1-unsafe & \textbf{F} & \textbf{F} & U & \textbf{F} & \textbf{F} & \textbf{F} & \textbf{F} & \textbf{F} & \textbf{F} & \textbf{F} & \textbf{F} & \textbf{F} & \textbf{F} & \textbf{F} \\
sll\_to\_dll\_rev-2-safe & U & \textbf{T} & U & U & U & U & U & U & U & U & U & U & U & U \\
splice-1-unsafe & \textbf{F} & \textbf{F} & \textbf{F} & \textbf{F} & \textbf{F} & \textbf{F} & \textbf{F} & U & \textbf{F} & \textbf{F} & \textbf{F} & \textbf{F} & \textbf{F} & \textbf{F} \\
splice-2-safe & U & U & U & U & U & U & U & U & U & U & U & U & U & U \\
test\_malloc-1-safe & \textbf{T} & \textbf{T} & \textbf{T} & \textbf{T} & \textbf{T} & \textbf{T} & \textbf{T} & \textbf{T} & \textbf{T} & \textbf{T} & \textbf{T} & \textbf{T} & \textbf{T} & \textbf{T} \\
test\_malloc-2-safe & \textbf{T} & \textbf{T} & \textbf{T} & \textbf{T} & \textbf{T} & \textbf{T} & \textbf{T} & \textbf{T} & \textbf{T} & \textbf{T} & \textbf{T} & \textbf{T} & \textbf{T} & \textbf{T} \\
tree-2-unsafe & \textbf{F} & \textbf{F} & \textbf{F} & \textbf{F} & U & U & U & U & U & U & U & U & U & U \\
tree-3-safe & U & U & U & U & U & U & U & U & U & U & U & U & U & U \\
tree-4-unsafe & \textbf{F} & \textbf{F} & \textbf{F} & U & U & U & U & U & U & U & U & U & U & U \\
\end{longtable}

\end{document}
\endinput